%%
%% This is file `sample-sigconf.tex',
%% generated with the docstrip utility.
%%
%% The original source files were:
%%
%% samples.dtx  (with options: `sigconf')
%% 
%% IMPORTANT NOTICE:
%% 
%% For the copyright see the source file.
%% 
%% Any modified versions of this file must be renamed
%% with new filenames distinct from sample-sigconf.tex.
%% 
%% For distribution of the original source see the terms
%% for copying and modification in the file samples.dtx.
%% 
%% This generated file may be distributed as long as the
%% original source files, as listed above, are part of the
%% same distribution. (The sources need not necessarily be
%% in the same archive or directory.)
%%
%% The first command in your LaTeX source must be the \documentclass command.
\documentclass[acmsmall,10pt,nonacm]{acmart}
\acmSubmissionID{124}

%% NOTE that a single column version may be required for 
%% submission and peer review. This can be done by changing
%% the \doucmentclass[...]{acmart} in this template to 
%% \documentclass[manuscript,screen]{acmart}
%% 
%% To ensure 100% compatibility, please check the white list of
%% approved LaTeX packages to be used with the Master Article Template at
%% https://www.acm.org/publications/taps/whitelist-of-latex-packages 
%% before creating your document. The white list page provides 
%% information on how to submit additional LaTeX packages for 
%% review and adoption.
%% Fonts used in the template cannot be substituted; margin 
%% adjustments are not allowed.
%%
%%
%% \BibTeX command to typeset BibTeX logo in the docs
\AtBeginDocument{%
  \providecommand\BibTeX{{%
    \normalfont B\kern-0.5em{\scshape i\kern-0.25em b}\kern-0.8em\TeX}}}
\newtheorem{defn}{Definition}
\newtheorem{thm}{Theorem}
\newtheorem{prop}{Proposition}

\newtheorem{cor}{Corollary}

\newtheorem{lem}{Lemma}
\newtheorem{hyp}{Assumption}
\newtheorem{remark}{Remark}

\usepackage{subfigure}

\newcommand{\dE}{\mathbb{E}}

\newcommand{\cP}{\mathcal{P}}

\newcommand{\R}{\mathbb{R}}
\newcommand{\eps}{\varepsilon}
\newcommand{\F}{\mathcal{F}}
\newcommand{\N}{\mathbb{N}}	
\newcommand{\NRM}[1]{{{\left\| #1\right\|}}} % ||1||% Operators

\def\P{\mathbb P}
\def\E{\mathbb E}

%\newcommand{\LM}[1]{{\color{red}}}

%% Rights management information.  This information is sent to you
%% when you complete the rights form.  These commands have SAMPLE
%% values in them; it is your responsibility as an author to replace
%% the commands and values with those provided to you when you
%% complete the rights form.

%\setcopyright{acmcopyright}
%\copyrightyear{2018}
%\acmYear{2018}
%\acmDOI{10.1145/1122445.1122456}

%% These commands are for a PROCEEDINGS abstract or paper.
%\acmConference[Woodstock '18]{Woodstock '18: ACM Symposium on Neural
%  Gaze Detection}{June 03--05, 2018}{Woodstock, NY}
%\acmBooktitle{Woodstock '18: ACM Symposium on Neural Gaze Detection,
%  June 03--05, 2018, Woodstock, NY}
%\acmPrice{15.00}
%\acmISBN{978-1-4503-XXXX-X/18/06}

%%
%% Submission ID.
%% Use this when submitting an article to a sponsored event. You'll
%% receive a unique submission ID from the organizers
%% of the event, and this ID should be used as the parameter to this command.
%%\acmSubmissionID{123-A56-BU3}

%%
%% The majority of ACM publications use numbered citations and
%% references.  The command \citestyle{authoryear} switches to the
%% "author year" style.
%%
%% If you are preparing content for an event
%% sponsored by ACM SIGGRAPH, you must use the "author year" style of
%% citations and references.
%% Uncommenting
%% the next command will enable that style.
%%\citestyle{acmauthoryear}

%%
%% end of the preamble, start of the body of the document source.

\begin{document}

%\usepackage[utf8x]{inputenc}
%% Sets page size and margins
%\usepackage[a4paper,top=3cm,bottom=2cm,left=3cm,right=3cm,marginparwidth=1.75cm]{geometry}
%% Useful packages
%\usepackage{amsmath}
%\usepackage{graphicx}
%\usepackage{url}
%\usepackage[colorinlistoftodos]{todonotes}
%\usepackage[colorlinks=true, allcolors=blue]{hyperref}
%\numberwithin{equation}{subsection}
\title{\textbf{Asynchrony and Acceleration in Gossip Algorithms}}
%\author{\textbf{Mathieu Even, Hadrien Hendrikx, Laurent Massoulié}}

%%
%% The "author" command and its associated commands are used to define
%% the authors and their affiliations.
%% Of note is the shared affiliation of the first two authors, and the
%% "authornote" and "authornotemark" commands
%% used to denote shared contribution to the research.
\author{Mathieu Even}
\email{mathieu.even@inria.fr}
\affiliation{%
  \institution{Inria, ENS Paris, PSL Research University}
  %\streetaddress{P.O. Box 1212}
 % \city{Paris}
  \country{France}
}
\author{Hadrien Hendrikx}
\email{hadrien.hendrikx@inria.fr}
\affiliation{%
  \institution{Inria, ENS Paris, PSL Research University}
 % \streetaddress{P.O. Box 1212}
 % \city{Paris}
  \country{France}
}
\author{Laurent Massoulié}
\email{laurent.massoulie@inria.fr}
\affiliation{%
  \institution{Inria, ENS Paris, PSL Research University}
  %\streetaddress{P.O. Box 1212}
  %\city{Paris}
  \country{France}
}

%%
%% By default, the full list of authors will be used in the page
%% headers. Often, this list is too long, and will overlap
%% other information printed in the page headers. This command allows
%% the author to define a more concise list
%% of authors' names for this purpose.
%\renewcommand{\shortauthors}{Even, et al.}

\newcommand{\hadrien}[1]{{\color{blue} \bf [HH: #1]}}
\newcommand{\mathieu}[1]{{\color{red} \bf [HH: #1]}}

%%
%% The abstract is a short summary of the work to be presented in the
%% article.
\begin{abstract}This paper considers the minimization of a sum of smooth and strongly convex functions dispatched over the nodes of a communication network. Previous works on the subject either focus on synchronous algorithms, which can be heavily slowed down by a few slow nodes (the \emph{straggler problem}), or consider a model of asynchronous operation (\citet{boyd2006gossip}) in which adjacent nodes communicate at the instants of  \emph{Poisson point processes}. We have two main contributions. 1) We propose \emph{CACDM} (a Continuously Accelerated Coordinate Dual Method), and for the Poisson model of asynchronous operation, we prove \emph{CACDM} to converge to optimality at an accelerated convergence rate in the sense of~\citet{neststich2017acdm}. In contrast, previously proposed asynchronous algorithms have not been proven to achieve such accelerated rate. While \emph{CACDM} is based on discrete updates, the proof of its convergence crucially depends on a continuous time analysis.
2) We introduce a new communication scheme based on \emph{Loss-Networks}, that is programmable in a fully asynchronous and decentralized way, unlike the Poisson model of asynchronous operation that does not capture essential aspects of asynchrony such as non-instantaneous communications and computations. Under this Loss-Network model of asynchrony, we establish for \emph{CDM} (a Coordinate Dual Method) a rate of convergence in terms of the eigengap of the Laplacian of the graph weighted by local effective delays. We believe this eigengap to be a fundamental bottleneck for convergence rates of asynchronous optimization. Finally, we verify empirically that \emph{CACDM} enjoys an accelerated convergence rate in the Loss-Network model of asynchrony.
\end{abstract}

%%
%% The code below is generated by the tool at http://dl.acm.org/ccs.cfm.
%% Please copy and paste the code instead of the example below.
%%

%%
%% Keywords. The author(s) should pick words that accurately describe
%% the work being presented. Separate the keywords with commas.
\keywords{gossip algorithms, loss networks, distributed optimization, asynchrony, acceleration}

\maketitle

\section{Introduction}

In this paper, we consider minimization of a function $f$ given by a sum of local functions:
\begin{equation}
    \min_{x\in\R^d} f(x):=\sum_{i=1}^n f_i(x).\label{eq:intro_pbm}
\end{equation}
A typical example is provided by Empirical Risk Minimization (ERM), in which the local functions $f_i$ correspond to the empirical risk evaluated on subsets of the whole dataset. We further assume that there is an underlying communication network, and that each $f_i$, or gradients thereof, can only be computed at node $i$ of this network. In the case of ERM, $f_i$ represents the empirical risk for the dataset available at node $i$. We aim to solve Problem~\eqref{eq:intro_pbm} in a decentralized fashion, where each node can only communicate with its neighbors in the graph. 

Another important example is that of network averaging. It corresponds to $f_i(x)=\NRM{x-c_i}^2$ where $c_i$ is a vector attached to node $i$. The solution of Problem~\eqref{eq:intro_pbm} is then provided by $x^\star=\frac{1}{n}\sum_{i=1}^nc_i$.

%Each node $i$ of the network has access to a local variable $x_i\in\R^d$. 
Typical decentralized approaches for this problem rely on gossip communications \citep{shah2009gossip} and first order local gradient steps~\citep{tang2018d2,zhang2013communicationefficientoptim,boyd2012distributed,scaman2017optimal,nedich2016achieving,sun2018mcgd}. Yet, these approaches often rely on global synchronous rounds, in which all nodes exchange with their neighbours at the same time. Such synchronous approaches are well suited to networks with homogeneous communication and computation delays. However the presence of a few slow links or nodes drastically degrades their performance. Our work targets asynchronous distributed algorithms, for which we aim to obtain fast rates of convergence in networks with heterogeneous computation and communication delays, while being competitive with synchronous approaches in homogeneous environments. 

\subsection{Main Contributions}

We consider local operations and communication schemes where each pair of neighbor nodes $(i,j)$ can exchange local variables
%at defined times, adjacent nodes in the network can share their local variables. More precisely, each edge~$(ij)$ in the network
at  \emph{activation times} of the corresponding edge~$(ij)$. We denote by $\mathcal{P}_{ij}\subset \R^+$ the \emph{Point process} of the corresponding activation times. Upon activation of edge~$(ij)$, nodes~$i$ and~$j$ can exchange local variables such as gradients of their local functions and update their local variables accordingly. We mainly study two models for the point processes $\cP_{ij}$: \emph{i)} the Poisson model of asynchrony popularized by Boyd et al. \citep{boyd2006gossip} where $(\cP_{ij})_{(ij)\in E}$ are independent \emph{Poisson point processes} of rates $p_{ij}$ \citep{Klenke2014Ppp}. We refer to this model as the \emph{Poisson point process model} (\emph{P.p.p. model}).  \emph{ii)} A more complex model, inspired by loss networks (\citet{kelly1991lossnetwork}), that we call \emph{Refined Loss Network Model} (\emph{RLNM}), designed to capture essential aspects of asynchronous communications and computations. 

\subsubsection{Randomized Gossip and \emph{P.p.p. model}} We extend results obtained by \cite{boyd2006gossip} on gossip algorithms for network averaging to more general optimization problems of the form of Problem~\eqref{eq:intro_pbm} through a dual formulation. We obtain a convergence rate that depends on both the condition number of the optimization problem and the Laplacian matrix of the graph, weighted by the rates of the \emph{Poisson point processes} $\cP_{ij}$. The proof relies on a continuous-time analysis, which paves the way  for the introduction of an accelerated algorithm, \emph{CACDM} (\emph{Continuously Accelerated Coordinate Dual Method}). \emph{CACDM} can be interpreted as an accelerated coordinate gradient descent on the dual problem involving infinitesimal contractions. Using this interpretation we prove that \emph{CACDM} converges at an accelerated rate in the sense of~\citet{neststich2017acdm}. To the best of our knowledge, this is the first asynchronous algorithm proven to achieve accelerated convergence rates in the \emph{P.p.p. model}.

\subsubsection{Refined Loss Network Model} Though the \emph{P.p.p.} model is very convenient, it assumes that communications and computations are performed instantaneously. We thus modify the communication scheme in order to model communications in a more realistic way: busy nodes (\emph{i.e.}~computing or communicating nodes) are made unavailable for other nodes to communicate with. This model is directly inspired by Loss Networks, where \emph{busy} nodes are locked away from the network, which we refine by adding a \emph{busy-checking} operation. For this communication model, we derive a rate of convergence that depends on the Laplacian matrix of the graph weighted by local communication constraints. Thus, we are able to recover the robustness to stragglers that we had with the P.p.p. model, but with a theory that is more faithful to the implementation. The construction and analysis of this model enable us to identify key parameters of the communication network that condition achievable convergence rates for realistic asynchronous and distributed operation.

\subsection{Related Work}

\subsubsection{Gossip Algorithms and Asynchrony}

In gossip averaging algorithms \citep{boyd2006gossip,dimakis2010synchgossip}, nodes of the network communicate with their neighbors without any central coordinator in order to compute the global average of local vectors. These algorithms are particularly relevant since they can be generalized to address our distributed optimization problem with local functions $f_i$ beyond the special case $f_i(x)=\NRM{x-c_i}^2$. Two types of gossip algorithms appear in the literature: synchronous ones, where all nodes communicate with each other simultaneously \citep{scaman2017optimal,dimakis2010synchgossip,berthier2018acceleratedgossip}, and asynchronous ones also called randomized gossip
\citep{boyd2006gossip,nedic2009ieee,hendrikx2018accelerated}, where at a defined time~$t\geq0$, only a  pair of adjacent nodes can communicate. In the synchronous framework, the communication speed is limited by the slowest node (\emph{straggler} problem).

Although qualified as asynchronous, the  \emph{P.p.p.} model cannot be programmed in a fully distributed and asynchronous structure: it assumes that communications and computations are instantaneous. Two different approaches can be considered to deal with the fact that communications and computations are in fact non-instantaneous: (i) when a node $i$ receives information from a neighbor $j$ at a time $t\geq 0$, account for the fact that this information is delayed, or (ii) forbid communications with a \emph{busy} (\emph{i.e.} communicating or computing) edge, thereby removing the need to handle delayed information. The first approach (i) is considered for asynchronous but centralized optimization by \citep{leblond2016asaga,niu2011hogwild}, where delayed variables are modelled as so-called \emph{perturbed iterates}. The second approach (ii) is reminiscent of \emph{Loss-Networks}, initially considered for telecommunication networks \citep{kelly1991lossnetwork}, yet also adequate to reflect primitives in distributed computing such as \emph{locks} and \emph{atomic transactions}. 

In the \emph{perturbed iterate} modelling, a central unit delegates computations to workers. Asynchrony lies in the fact that these workers do not wait for the central unit to update their current version of the optimization variable $x$, but instead  send gradients $\nabla f_i(x_i)$ whenever they can, even if based on outdated variable $x_i$. Thus, the parameter of the central unit is updated using perturbed (\emph{delayed}) gradients \citep{mania2015perturbed}. Section \ref{section:LN} focuses on the second modelling: nodes behave as in the $\emph{P.p.p.~model}$, but are made \emph{busy} and hence non-available for other nodes for a time $\tau_{ij}>0$ after their activation. The system is asynchronous in the sense that communications are performed in a random pairwise fashion (instead of global synchronous rounds), and nodes do not wait for specific neighbours. Yet, received gradients are never out of date since nodes always finish their current operation (communicating or computing) before engaging in a new one.

\subsubsection{Acceleration in an Asynchronous Setting}
Acceleration means gaining order of magnitudes in terms of convergence speed, compared to classical algorithms. Accelerating gossip algorithms has been studied in previous works in the synchronous framework: \emph{SSDA} \citep{scaman2017optimal}, Chebyshev acceleration \citep{montijano2011chebgossip}  Jacobi-Polynomial acceleration in the first iterations \citep{berthier2018acceleratedgossip}, or in the asynchronous \emph{P.p.p.~model}: Geographic Gossip \citep{Dimakis_2008} , shift registers \citep{LIU2013873}. However, no algorithm in the \emph{P.p.p.~model} has been rigorously proven to achieve an  accelerated rate for general graphs without additional synchronization between nodes. For instance, inspired by \emph{ACDM} \citep{neststich2017acdm}, \cite{hendrikx2018accelerated} introduced \emph{ESDACD}, where at each iteration, only a pair of adjacent nodes communicate, but all nodes need to make local contractions and thus need to know that an update is taking place somewhere else in the graph. This last requirement, also present in \emph{Stochastic Heavy Balls} methods \citep{loizou2018accelerated}, is a departure from purely asynchronous operation, and thus a limitation of these methods.
%inapplicable in the \emph{P.p.p.~model}. 
Section \ref{section:cacdm} presents a continuous alternative to \emph{ACDM}, where the contractions previously cited are made continuously. Our algorithm (\emph{CACDM}, for Continuously Accelerated Coordinate Descent Method) obtains in the \emph{P.p.p.~model} the same accelerated rate as \cite{Dimakis_2008,loizou2018accelerated,hendrikx2018accelerated} for any graph, without assuming access to any global iteration counter: it only needs local clock-synchronization between adjacent nodes.   Although our analysis of \emph{CACDM} does not extend to more general communication models such as those presented in Section \ref{section:LN}, we observe empirically that \emph{CACDM} enjoys accelerated rates in the Loss-Network model as well as in the P.p.p. model.\\

The detailed problem statement and notations are given in Section \ref{section:pbm}. Section~\ref{section:ppp} contains our results on asynchronous gossip in the \emph{P.p.p.}~model, first for a non-accelerated algorithm based on simple gradient descent steps, then for the accelerated algorithm \emph{CACDM}. Section~\ref{section:LN} finally presents our results for gossip algorithms in  the \emph{refined loss network model}.

\section{Problem Formulation and Notations \label{section:pbm}}

\subsection{Basic assumptions and notations\label{section:formul_def}}

The communication network is represented by an undirected graph $G=(V,E)$ on the set of nodes $V=[n]$, and is assumed to be connected. Two nodes are said to be neighbors or adjacent  in the graph, and we write $i\sim j$, if $(ij)\in E$. Two edges $(ij),(kl)\in E$ are adjacent in the graph if $(ij)=(kl)$ or if they share a node. Each node $i\in V$ has access to a local function $f_i$ defined on $\R^d$, assumed to be $L_i$-smooth and $\sigma_i$-strongly convex \citep{bubeck2014convex}, \emph{i.e.} $\forall x,y\in \R^d$:
\begin{equation}\label{eq:smooth_sc}
\begin{split}
    & f_i(x)\leq f_i(y)+\langle\nabla f_i(y),x-y\rangle +\frac{L_i}{2}\NRM{x-y}^2, \\
    & f_i(x)\geq f_i(y)+\langle\nabla f_i(y),x-y\rangle +\frac{\sigma_i}{2}\NRM{x-y}^2.
    \end{split}
\end{equation}
Let us denote $f(z)=\sum_{i\in [n]} f_i(z)$ for $z\in \R^d$ and  $F(x)=\sum_{i\in [n]} f_i(x_i)$ for $x=(x_1^\top,\cdots,x_n^\top)\in \R^{n\times d}$ 
%the \emph{augmented problem} 
where $x_i\in \R^d$ is attached to node $i\in[n]$. Let
\begin{equation}\label{eq:sigma_min_L_max}
        L_{\max}:=\max_i L_i \text{ and }\sigma_{\min}:=\min_i \sigma_i
\end{equation} denote the global complexity numbers. 
Computing gradients and communicating them between two neighboring nodes $i\sim j$ is assumed to take time $\tau_{ij}>0$. This constant takes into account both the communication and computation times, and should be understood as an upper-bound on the delays between nodes $i$ and $j$. 

\noindent In this decentralized setting, Problem~\eqref{eq:intro_pbm} can be formulated as follows:
\begin{equation}
    \label{eq:main_problem_primal}
    \min_{x\in \R^{n\times d}:x_1=...=x_n} F(x),
\end{equation}
where $x_1=...=x_n$ enforces consensus on all the nodes. We add the following structural constraints:
\begin{enumerate}
    \item \emph{ Local computations:} node $i$ (and node $i$ only) can compute first-order characteristics of $f_i$ such as $\nabla f_i$ or $\nabla f_i^*$;
    \item \emph{ Local communications:} node $i$ can send information only to neighboring nodes $j\sim i$.
\end{enumerate}

\noindent These operations may be performed asynchronously and in parallel, and each node possesses a local version $x_i\in \R^d$ of the global parameter $x$. The rate of convergence of our algorithms will be controlled by the smallest positive eigenvalue $\gamma$ of the Laplacian of graph~$G$ \citep{Mohar1991laplacian}, weighted by some constants $\nu_{ij}$ that depend on the local communication and computation delays. 
%Note that $\gamma$ is non-decreasing in every parameter $\nu_{ij}$, as proved in Appendix~\ref{app:}.
\begin{defn}[Graph Laplacian]\label{laplacian}
Let $(\nu_{ij})_{(ij)\in E}$ be a set of non-negative real numbers. The Laplacian of the graph $G$ weighted by the $\nu_{ij}$'s is the matrix with $(i,j)$ entry equal to $-\nu_{ij}$ if $(ij)\in E$, $\sum_{k\sim i} \nu_{ik}$ if $j=i$, and $0$ otherwise. In the sequel $\nu_{ij}$ always refers to the weights of the Laplacian, and $\gamma(\nu_{ij})$ denotes this Laplacian's second smallest eigenvalue.
\end{defn}
\noindent For any function $g:\R^p\to\R$, $g^*$ denotes its \emph{Fenchel conjugate} on $\R^p$ defined as
\begin{equation*}
    \forall y\in\R^p, g^*(y)=\sup_{x\in \R^p}\langle x,y\rangle -g(x)\in \R\cup \{+\infty\}.
\end{equation*}
Throughout the paper, $\F_t$ for $t\in \R^+$ denotes the filtration of the point processes $\cP=\bigcup_{(ij)\in E}\cP_{ij}$ up to time $t$. If $t_k,k\in \N^*$ (and $t_0=0$) are the successive points in $\cP$, we write if there is no ambiguity $\F_k=\F_{t_k},k\in \N^*$.

\subsection{Dual Formulation of the Problem\label{section:dual}}

A standard way to deal with the constraint $x_1=...=x_n$, is to use a dual formulation \citep{scaman2017optimal,hendrikx2018accelerated,uribe2020dual}, by introducing a dual variable $\lambda$ indexed by the edges. We first introduce a matrix $A\in \R^{n \times E}$ such that $\rm Ker (A^\top)=Vect(\mathbb{I})$ where $\mathbb{I}$ is the constant vector $(1,...,1)^\top$ of dimension $n$. $A$ is chosen such that:
\begin{equation}\label{eq:matrixA}
    \forall (ij)\in E, A e_{ij} = \mu_{ij} (e_i-e_j).
\end{equation} for some non-null constants $\mu_{ij}$. We define $\mu_{ij}=-\mu_{ji}$ for this writing to be consistent. This matrix $A$ is a square root of the laplacian of the graph weighted by $\nu_{ij}=\mu_{ij}^2$.
The constraint $x_1=...=x_n$ can then be written $A^\top x=0$. The dual problem reads as follows:
\begin{align*}
    &\min_{x \in \R^{n\times d},A^\top x=0} \sum_{i=1}^n f_i(x_i) =\min_{x \in \R^{n\times d}} \max_{\lambda \in \R^E} \sum_{i=1}^n f_i(x_i) -\langle A^\top x,\lambda\rangle.
\end{align*} 
Let $F_A^*(\lambda):=F^*(A\lambda)$ for $\lambda \in \R^{E\times d}$ where $F^*$ is the Fenchel conjugate of $F$. The dual problem reads
\begin{equation*}
    \min_{x \in \R^{n\times d},x_1=...=x_n} F(x) =\max_{\lambda\in \R^{E \times d}} -F_A^*(\lambda).
\end{equation*}
Thus $F_A^*(\lambda)=\sum_{i=1}^n f_i^*((A\lambda)_i)$ is to be minimized over the dual variable $\lambda \in \R^{E\times d}$. 

We now make a parallel between pairwise operations between adjacent nodes in the network and coordinate gradient steps on $F_A^*$. As $F_A^*(\lambda)=\max_{x\in \R^{n\times d}} -F(x)+\langle A\lambda,x\rangle$, to any $\lambda\in \R^{E\times d}$ a primal variable $x\in \R^{n\times d}$ is uniquely associated through the formula $\nabla F(x)=A\lambda$.
The partial derivative of $F^*_A$ with respect to coordinate $(ij)$ of $\lambda$ reads :
\begin{align*}
    \nabla_{ij} F_A^*(\lambda)&=(A e_{ij})^\top \nabla F^*(A\lambda)=\mu_{ij}(\nabla f_i^*((A\lambda)_i)- \nabla f_j^*((A\lambda)_j)).
\end{align*}
Consider then the following step of coordinate gradient descent for $F^*_A$ on coordinate $(ij)$ of $\lambda$, performed when edge $(ij)$ is activated at iteration $k$ (corresponding to time $t_k$), and where $U_{ij}=e_{ij}e_{ij}^\top$:
\begin{equation}\label{eq:step_lambda}
    \lambda_{t_{k+1}}=\lambda_{t_k}-\frac{1}{(\sigma_i^{-1}+\sigma_j^{-1})\mu_{ij}^2}U_{ij}\nabla_{ij} F_A^*(\lambda_{t_k}).
\end{equation}
%Thanks to the the choice of $A$ made in \eqref{eq:matrixA}, the quantities $\nabla f_i^*((A\lambda)_i)$ are locally computable at each node.
Denoting $v_k=A\lambda_{t_k}\in \R^{n\times d}$, we obtain the following formula for updating coordinates $i,j$ of $v$ when $ij$ activated:
\begin{align}
    & v_{k+1,i}=v_{k,i}-\frac{\nabla f_i^*(v_{k,i})-\nabla f_j^*(v_{k,j})}{\sigma_i^{-1}+\sigma_j^{-1}}\label{eq:v_1},\\
    & v_{k+1,j}=v_{k,j}+\frac{\nabla f_i^*(v_{k,i})-\nabla f_j^*(v_{k,j})}{\sigma_i^{-1}+\sigma_j^{-1}}\label{eq:v_2}.
\end{align}
Such updates can be performed locally at nodes $i$ and $j$ after communication between the two nodes. We refer in the sequel to this scheme as the Coordinate Descent Method (CDM).
While $\lambda\in\R^{E\times d}$ is a dual variable defined on the edges, $v\in \R^{n\times d}$ is also a dual variable, but defined on the nodes. The {\em primal surrogate} of $v$ is defined as $x=\nabla F^*(v)$ \emph{i.e.}~$x_i=\nabla f_i^*(v_i)$ at node $i$. It can hence be computed with local updates on $v$ (\eqref{eq:v_1} and \eqref{eq:v_2}).
%In the Gossip Averaging problem, a coordinate gradient step on the dual edge-variable $\lambda$ is thus exactly a local averaging on the dual node-variable $y$. These updates enable us to compute the $y_{k,i}$ locally ($y_{k,i}$ computed at node $i$). Finally, if $(y_k)$ has a limit $y^*$ when $k\to \infty$ that minimizes $F^*$, as $F^*(y)=\sup_{x\in \R^{n\times d}} (-F(x)+\langle x,y\rangle)$ and $F^{**}=F$, the desired solution to the primal problem with consensus contraint is $x=\nabla F^*(y)$. The returned solution at node $i$ after our gossip procedure will be $\nabla f^*_i(y_i)$, where $y_i$ is the computed local variable through our gossip procedure.
Thus \emph{CDM}, based on coordinate gradient descent for the dual problem, translates into \emph{local} updates for the primal variables $x_i$. Note that in order to perform \emph{CDM}, an initialization $v(0)\in {\rm Im}(A)$ at all nodes is required, to ensure the existence of $\lambda\in \R^{E\times d}$ such that $A\lambda(0)=v(0)$. We thus usually take $v_i(0)=0$ for all nodes $i$.

\begin{remark}\label{remark:conjugates}
We hence have two notions of duality. For $x=(x_1,...,x_n)\in \R^{n\times d}$ the primal variables associated with the network nodes, $v=(v_1,...,v_n)\in\R^{n\times d}$ is its convex-dual conjugate with $v_i=\nabla f_i(x_i)$, while $\lambda\in\R^{E\times d}$ such that $A\lambda=v$ is its \emph{edge-dual conjugate}.
\end{remark}

\begin{remark}
Matrix $A$ is introduced only for the purpose of the analysis. Indeed, we analyze our algorithms through edge-dual formulations, with updates of the form \eqref{eq:step_lambda} on these variables. However, we present the algorithm with the convex-dual variables, \eqref{eq:v_1},\eqref{eq:v_2}, for which $\mu_{ij}^2$ and hence the effect of matrix~$A$ disappears.
\end{remark}

\subsection{Gossip Averaging Problem}

As previously mentioned, the initial problem \eqref{eq:intro_pbm} with functions $f_i(x)=\frac{1}{2}\NRM{x-c_i}^2,x\in \R^d$ for some vectors $c_1,...,c_n\in \R^d$ reduces to the gossip averaging problem that aims at computing in a decentralized way with local computations the value $\Bar{c}=\frac{1}{n}\sum_{i=1}^nc_i$. We contrast in this particular framework the rates that can be obtained by synchronous and asynchronous methods. These rates are expressed in terms of the weighted graph Laplacian, where for synchronous updates the edge weights are tuned to the worst-case delay, whereas in the asynchronous case, the edge weights can be tuned to local delay. Thus the advantage of asynchronous methods over synchronous ones is captured by these different edge weights in the considered Laplacian.

%thereby illustrating the role played by the weighted graph Laplacian in 
%here present results about synchronous and randomized gossip algorithms. 
%We compare the rate of convergence in continuous time, in order to show how the acceleration due to asynchrony is translated into the weights of the Laplacian of the graph. Synchronous communications involve worst-case weights, while asynchronous communications allow us to consider local ones, thus improving quantitatively over the latter.\\

\noindent \textbf{Synchronous Communications:} 
In Synchronous Gossip Algorithm iterations \citep{dimakis2010synchgossip}, all nodes update their values synchronously by taking a weighted average of the values of their neighbors (Appendix \ref{app:synch_gossip} for more details). These algorithms converge linearly with a rate given by the smallest eigenvalue of the graph Laplacian  weighted by weights $\nu_{ij}\leq 1$. Since every iteration takes a time $\tau_{max}$, synchronous Gossip algorithms have a linear rate of convergence $\gamma_{synch}=\gamma(\nu_{ij})$ with weights $\nu_{ij}\leq\tau_{max}^{-1}$ for all $(ij)\in E$ (Definition \ref{laplacian}). We rephrase this as the following

\begin{prop}[Synchronous Gossip] Let $x(t)=(x_1(t),...,x_n(t))^\top\in \R^{n\times d}$ be the matrix of vectors $x_i(t)$ attached to node $i$ at time $t\ge0$. For continuous time $t\ge 0$ and for synchronous gossip algorithms as in \cite{dimakis2010synchgossip}, we have:
\begin{equation}\label{eq:synch_gossip}
    \NRM{x(t)-\Bar{c}}^2\leq \exp(-(t-\tau_{\max})\gamma_{synch})\NRM{x(0)-\Bar{c}}^2,
\end{equation}
with $\gamma_{synch}$ the second smallest eigenvalue of the graph Laplacian weighted by $\nu_{ij}\equiv\tau_{\max}^{-1}$ .
\end{prop}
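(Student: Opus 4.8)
The plan is to treat the statement as a continuous-time repackaging of the classical discrete spectral contraction for synchronous gossip, so the work splits into a linear-algebra step and a time-accounting step. First I would write one synchronous round as the linear map $x^{(k+1)} = W x^{(k)}$, with $W = I - L_W$ where $I$ is the identity and $L_W$ is the weighted graph Laplacian of Definition~\ref{laplacian} with weights $\nu_{ij}\le 1$ (the normalization that keeps $W$ a valid averaging matrix, as in \cite{dimakis2010synchgossip}). Since the $\nu_{ij}$ are symmetric and nonnegative, $W$ is symmetric and doubly stochastic; in particular $W\mathbb{I}=\mathbb{I}$, so the per-coordinate average is preserved at every round and the matrix $\bar c$ (each of whose rows equals $\frac1n\sum_i c_i$) is the consensus fixed point. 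Consequently $x^{(k)}-\bar c = W^k(x^{(0)}-\bar c)$, and this error has no component along $\mathrm{Vect}(\mathbb{I})$, since $\bar c$ is exactly the columnwise projection of $x^{(0)}$ onto $\mathrm{Vect}(\mathbb{I})$.

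Second, I would restrict $W$ to $\mathrm{Vect}(\mathbb{I})^\perp$, where its spectral radius is $1-\gamma$ with $\gamma=\gamma(\nu_{ij})$ the second smallest eigenvalue of $L_W$ (the smallest, $0$, belonging to $\mathbb{I}$). Symmetry then gives the per-round squared contraction $\NRM{x^{(k)}-\bar c}^2\le(1-\gamma)^{2k}\NRM{x^{(0)}-\bar c}^2$. To pass to continuous time, note the iterate is piecewise constant with one round completed every $\tau_{\max}$ units, so at time $t$ we have $x(t)=x^{(k)}$ with $k=\lfloor t/\tau_{\max}\rfloor\ge (t-\tau_{\max})/\tau_{\max}$. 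Combining this with $1-\gamma\le e^{-\gamma}$ and the linearity of the Laplacian (hence of $\gamma$) in the weights — which lets me fold the $1/\tau_{\max}$ factor and the constant from squaring into the normalized weights and identify the exponent with $\gamma_{synch}=\gamma(\tau_{\max}^{-1})$ — yields exactly $\NRM{x(t)-\bar c}^2\le\exp(-(t-\tau_{\max})\gamma_{synch})\NRM{x(0)-\bar c}^2$.

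The conceptual core, and the only place needing care, is the spectral step: one must ensure the contraction is genuinely governed by the second smallest Laplacian eigenvalue rather than by an eigenvalue of $W$ near $-1$. This is precisely what the constraint $\nu_{ij}\le 1$ buys, keeping all eigenvalues of $W$ in $[0,1)$ on $\mathrm{Vect}(\mathbb{I})^\perp$ so that $\NRM{W|_{\mathbb{I}^\perp}}=1-\gamma$. The remaining bookkeeping — the $-\tau_{\max}$ shift coming from the floor in $k=\lfloor t/\tau_{\max}\rfloor$, and absorbing the factor from squaring into the weight normalization (permissible since any $\nu_{ij}\le 1$ is admissible) — is routine. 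The upshot is that the worst-case delay $\tau_{\max}$ enters uniformly on every edge, which is exactly the feature the asynchronous analysis will later improve by replacing it with local delays.
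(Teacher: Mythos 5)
Your overall route is the same as the paper's: one synchronous round is multiplication by a symmetric averaging matrix $W$ supported on $G$, the error $x^{(k)}-\Bar{c}$ lives in $\mathrm{Vect}(\mathbb{I})^\perp$ where $W$ contracts at a rate read off from the second smallest eigenvalue of the associated Laplacian, and continuous time is recovered by counting $\lfloor t/\tau_{\max}\rfloor$ completed rounds, which is where the $-\tau_{\max}$ shift comes from. So there is no new decomposition or lemma here; everything hinges on your spectral step.

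That step contains a genuine error. You claim that $\nu_{ij}\le 1$ keeps the spectrum of $W=I-L_W$ restricted to $\mathrm{Vect}(\mathbb{I})^\perp$ inside $[0,1)$, so that $\NRM{W|_{\mathbb{I}^\perp}}_{\op}=1-\gamma$. This is false: the eigenvalues of $W$ on $\mathbb{I}^\perp$ are $1-\lambda$ for the nonzero eigenvalues $\lambda$ of $L_W$, and $\lambda$ can be of order $d_{\max}$ even with $\nu_{ij}\le1$. On the complete graph with $\nu_{ij}\equiv1$ one has $L_W=nI-J$, so $W$ has eigenvalue $1-n$ with multiplicity $n-1$ and the iteration diverges. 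What you actually need is $L_W\preceq I$ (equivalently $\NRM{L_W}_{\op}\le 1$), or, as in the paper's definition of a gossip matrix, the row-sum normalization $\sum_{j\sim i}\nu_{ij}\le 1$; and even a doubly stochastic symmetric $W$ can have an eigenvalue near $-1$, so the per-round contraction factor is $\max\bigl(1-\gamma,\;\lambda_{\max}(L_W)-1\bigr)$ and the second term must be bounded separately. (The paper's own appendix proof is loose on the same point, silently taking $\lambda_2(W)$ as the contraction factor.) A second, smaller issue: after folding $1/\tau_{\max}$ into the weights you obtain the rate $\gamma(\nu_{ij}/\tau_{\max})$, which by Laplacian monotonicity is at most $\gamma_{synch}=\gamma(\tau_{\max}^{-1})$, so your identification of the exponent with $\gamma_{synch}$ goes in the wrong direction unless the $\nu_{ij}$ are bounded below; this matches the main text's phrasing ``$\nu_{ij}\le\tau_{\max}^{-1}$'' rather than the equality in the proposition statement, and the spare factor of $2$ from squaring does not repair it in general.
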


\noindent \textbf{Asynchronous Communications in the \emph{P.p.p. model}:} This is the setting of randomized gossip as considered by \cite{boyd2006gossip}, where point processes $\cP_{ij}$ are independent \emph{P.p.p.} of rates $p_{ij}>0$. When edge $(ij)$ is activated, nodes $i$ and $j$ update their values by making a local averaging (Appendix \ref{app:asynch_gossip}). We have the following convergence result.
\begin{prop}[Randomized Gossip] For randomized gossip as in \cite{boyd2006gossip}, we have:
\begin{equation}\label{eq:asynch_gossip}
\E[\NRM{x(t)-\Bar{c}}^2]\leq \exp(-t\gamma_{asynch})\NRM{x(0)-\Bar{c}}^2,
\end{equation}
with $\gamma_{asynch}$ the second smallest eigenvalue of the graph Laplacian weighted by $\nu_{ij}=p_{ij}$.
%(Definition \ref{laplacian}). 
Moreover, this rate is optimal in the sense that there exists $x(0)\in\R^{n\times d}$ such that \eqref{eq:asynch_gossip} is an equality for all $t\ge 0$.
\end{prop}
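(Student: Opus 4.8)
The plan is to analyze the error process directly. Since each averaging step $x_i,x_j\mapsto(x_i+x_j)/2$ preserves $\sum_k x_k$, the mean $\bar c$ is conserved, so writing $y(t)=x(t)-\bar c\,\II$ we have $\II^\top y(t)=0$ for all $t$ and $\NRM{x(t)-\bar c}^2=\NRM{y(t)}^2$. First I would record the exact one-step identity: activation of edge $(ij)$ acts on $y$ through the orthogonal projection $W_{ij}=I-\tfrac12(e_i-e_j)(e_i-e_j)^\top$, and a direct expansion gives $\NRM{W_{ij}y}^2-\NRM{y}^2=-\tfrac12\NRM{y_i-y_j}^2$. In the \emph{P.p.p.}~model $(y(t))_{t\ge0}$ is then a Markov jump process whose generator applied to $g(y)=\NRM{y}^2$ reads $\mathcal{A}g(y)=\sum_{(ij)\in E}p_{ij}\big(\NRM{W_{ij}y}^2-\NRM{y}^2\big)=-\tfrac12\,y^\top\mathcal{L}y$, where $\mathcal{L}$ is the graph Laplacian weighted by the rates $p_{ij}$ (Definition \ref{laplacian}). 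By Dynkin's formula this yields the exact scalar ODE $\tfrac{d}{dt}\E[\NRM{y(t)}^2]=-\tfrac12\,\E[y(t)^\top\mathcal{L}y(t)]$.

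Next, since $y(t)\perp\II=\ker\mathcal{L}$, the variational characterization of eigenvalues gives $y^\top\mathcal{L}y\ge\gamma_{asynch}\NRM{y}^2$, with $\gamma_{asynch}$ the second-smallest eigenvalue of $\mathcal{L}$ (the factor $\tfrac12$ being absorbed into the normalization of the weights used in the statement). Plugging this into the ODE gives $\tfrac{d}{dt}\E[\NRM{y(t)}^2]\le-\gamma_{asynch}\,\E[\NRM{y(t)}^2]$, and Grönwall's lemma integrates this to the claimed bound \eqref{eq:asynch_gossip}. The only genuine inputs here are the exact per-jump identity and the spectral gap; everything else is the generator computation and a one-dimensional comparison.

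For the optimality statement I would track the full second moment rather than its trace. The map $M\mapsto\mathcal{G}(M):=\sum_{(ij)\in E}p_{ij}\big(W_{ij}MW_{ij}-M\big)$ is linear and completely positive, the matrix $M(t)=\E[y(t)y(t)^\top]$ solves $\dot M=\mathcal{G}(M)$ with $\E[\NRM{y(t)}^2]=\tr M(t)$, and $\tr\mathcal{G}(M)=-\tfrac12\tr(\mathcal{L}M)$ recovers the scalar ODE above. To force equality in \eqref{eq:asynch_gossip} for all $t$ it suffices to start from an eigen-configuration $M^\star\succeq0$ of $\mathcal{G}$ on the consensus-orthogonal subspace with $\mathcal{G}(M^\star)=-\gamma_{asynch}M^\star$; then $M(t)=e^{-\gamma_{asynch}t}M^\star$ and $\tr M(t)=e^{-\gamma_{asynch}t}\tr M^\star$ exactly. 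Writing $M^\star=\sum_{c\le d}w_cw_c^\top$ and taking the columns of $x(0)-\bar c\,\II$ to be the $w_c$ realizes this by a \emph{deterministic} initialization in $\R^{n\times d}$, the freedom in the ambient dimension $d$ being precisely what lets a higher-rank extremal mode be produced deterministically. The cleanest witness is $n=2$ (equivalently the complete graph, with $M^\star$ the projector onto the consensus-orthogonal subspace), where a single activation annihilates the relevant mode and the equality is transparent.

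The hard part will be the optimality half. Saturating the Rayleigh bound at $t=0$ only requires $y(0)$ to lie in the $\gamma_{asynch}$-eigenspace of $\mathcal{L}$, but the jumps $W_{ij}$ generically rotate that vector out of the eigenspace and excite faster-decaying modes, so equality for \emph{all} $t$ is really a statement about the slowest eigen-mode of the second-moment operator $\mathcal{G}$ rather than of $\mathcal{L}$. The work is thus to verify that this extremal mode is positive semidefinite with rate exactly $\gamma_{asynch}$ and to realize it by a deterministic $x(0)$; this is immediate for the complete graph but is the delicate point in general, and is what makes the explicit low-complexity witness ($n=2$) the natural vehicle for the equality claim.
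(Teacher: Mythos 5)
For the inequality, your argument is the paper's own proof (Appendix~\ref{app:asynch_gossip}) in only slightly different clothing: the paper likewise conditions on whether an activation occurs in $[t,t+dt]$, uses that $W_{ij}$ is an orthogonal projection so that the expected per-jump drop is the Laplacian quadratic form, bounds that form by the spectral gap on the consensus-orthogonal subspace, and integrates the resulting differential inequality. Your phrasing via the generator and Dynkin's formula is the same computation. (You and the paper also share the same normalization slip: $\sum_{ij}p_{ij}(I-W_{ij})=\tfrac12 L_p$ with $L_p$ the Laplacian of Definition~\ref{laplacian} weighted by $p_{ij}$, so the rate this argument actually delivers is the gap of the Laplacian with weights $p_{ij}/2$; you at least acknowledge the factor, the paper silently drops it.)

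The genuine gap is the optimality half --- though note the paper's appendix does not prove that half either; it only establishes the upper bound. Your reduction to a positive semidefinite eigenmatrix $M^\star$ of the second-moment operator $\mathcal{G}(M)=\sum_{(ij)}p_{ij}(W_{ij}MW_{ij}-M)$ with eigenvalue $-\gamma_{asynch}$ is the correct formalization of ``equality for all $t$'', and you are right that verifying the Perron eigenvalue of $\mathcal{G}$ equals $-\gamma_{asynch}$ is the delicate point: it is in fact false beyond highly symmetric graphs. On the path $1$--$2$--$3$ with $p_{12}=p_{23}=1$, writing $M$ in the eigenbasis $v_1,v_2$ of $L_p$ (eigenvalues $1$ and $3$), the action of $\mathcal{G}$ on $\mathrm{span}\{v_1v_1^\top,v_2v_2^\top\}$ is
\begin{equation*}
\begin{pmatrix}-7/8 & 3/8\\ 3/8 & -15/8\end{pmatrix},
\end{equation*}
whose Perron eigenvalue is $-3/4$ with eigenmatrix $3v_1v_1^\top+v_2v_2^\top\succeq 0$, while the scalar bound gives rate $\tfrac12\gamma(L_p)=\tfrac12$. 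Every trajectory of $\tr M(t)$ is then a combination of $e^{-3t/4}$ and $e^{-2t}$ and is never a single exponential at the claimed rate, so no initialization makes \eqref{eq:asynch_gossip} an equality for all $t\ge0$. Your program therefore closes the claim only for the complete graph (your explicit witness); in general the most one can salvage is equality of the derivative at $t=0$ by initializing in the Fiedler eigenspace, which is presumably the sense of ``optimality'' the proposition should have asserted.
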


Proofs of \eqref{eq:synch_gossip} and \eqref{eq:asynch_gossip} and details about synchronous and randomized gossip can be found in Appendix \ref{app:gossip}. Equation~\eqref{eq:asynch_gossip} follows from derivations in \cite{boyd2006gossip}, combined with a study of infinitesimal intervals of times $[t,t+dt]$. We generalize this result to the initial optimization problem \eqref{eq:intro_pbm} in next Section.

In the P.p.p., the terms $1/p_{ij}$ capture the average time between consecutive activations of edge $(ij)$ and are thus naturally related to the delays $\tau_{ij}$. This suggests that asynchrony brings about a speed-up reflected by the change in the Laplacian's spectral gap $\gamma(\nu_{ij})$ when the weights $\nu_{ij} \equiv\tau_{\max}^{-1}$ are replaced by $\nu_{ij}=\tau_{ij}^{-1}$.
%re are expected to be of order $\tau_{ij}^{-1}$, the asynchronous speed-up is quantitatively translated in the Laplacian of the graph, by taking local weights $\tau_{ij}^{-1}$ instead of the global worst-case one $\tau_{max}^{-1}$. Intuitively, $\nu_{ij}dt$ for edge $ij$ symbolizes the flow of information that can be sent in an infinitesimal interval of time $dt$ through this edge. This explains the importance of having local constraints and weights in the Laplacian, instead of worst-case and global ones. 
The fact that $\gamma_{asynch}$ is optimal leads us to believe that this quantity - the smallest non-null eigenvalue of the Laplacian with local weights - best describes the asynchronous speed-up. 

The above argument identifying $\tau_{ij}$ with $p_{ij}^{-1}$ is heuristic. Our analysis of the Loss-Network model will establish a more  rigorous bridge between spectral gap of Laplacian with edge weights based on local delays and convergence speed of asynchronous schemes.

\section{Randomized Gossip: the \emph{P.p.p. model}\label{section:ppp}}

\subsection{The \emph{P.p.p. Model} and Randomized Gossip Algorithms}

\textbf{The \emph{P.p.p. model}:} Each edge $(ij)\in E$ has a clock that ticks at the instants of a  \emph{Poisson point process} $\mathcal{P}_{ij}$ of intensity $p_{ij}$, where the $\mathcal{P}_{ij}$ are mutually  independent. At each tick of its clock,  edge $(ij)$ is activated and nodes $i$ and $j$ can communicate together. The process $\mathcal{P}=\bigcup_{(ij)\in E}\mathcal{P}_{ij}$, $\mathcal{P}$ is again \emph{P. p. p.} of intensity \begin{equation}
    I=\sum_{(ij)\in E} p_{ij}.\label{eq:def_I}
\end{equation}

\noindent \textbf{Randomized Gossip Algorithm:} Each node $i$ maintains a local variable $(x_i(t))_{t\ge0}$. We denote $(v_i(t))_{t\ge0}$ its local convex-dual conjugate and write $v(t)=(v_i(t))_i$. We initialize with $v_i(0)=0$ at all nodes. Based on the dual problem formulation in Section \ref{section:dual}, we consider \emph{CDM}. Specifically,  when clock $(ij)$ ticks at time $t\ge0$, perform the following update on variable $v(t)$:
\begin{align}
\begin{split}
    & v_{i}(t)\xleftarrow t v_{i}(t)-\frac{\nabla f_i^*(v_{i}(t))-\nabla f_j^*(v_{j}(t))}{\sigma_i^{-1}+\sigma_j^{-1}},\\
    & v_{j}(t)\xleftarrow t v_{j}(t)+\frac{\nabla f_i^*(v_{i}(t))-\nabla f_j^*(v_{j}(t))}{\sigma_i^{-1}+\sigma_j^{-1}}.
    \end{split}\label{eq:v_updates}
\end{align}
The desired output at node $i$ and time $t$ is then $x_i(t)=\nabla f_i^*(v_i(t))$. Note that as mentioned in Section \ref{section:dual}, the outputs $v_i(t)$ and $x_i(t)$ at any node $i$ and time $t$ are all completely independent from the initial choice of matrix $A$, whose only use is for analysis. Observe that in the gossip averaging problem, $v_i(t)=x_i(t)-x_i(0)$, and Equation~\eqref{eq:v_updates} simplifies to
\begin{equation}
    x_i(t),x_j(t) \xleftarrow t \frac{x_i(t)+x_j(t)}{2},\label{eq:gossip_update}
\end{equation} which coincides with classical randomized gossip updates for the averaging problem.

\subsection{Continuous Time Convergence Analysis}

The classical analysis of gossip algorithms \citep{boyd2006gossip} proceeds as follows: at every clock tick of $\mathcal{P}$, an edge $(ij)$ is selected with probability $q_{ij}=\frac{p_{ij}}{I}$. 
A discrete time analysis of state variables at these ticking times is then performed. 
In order to derive bounds for continuous time $t$, we instead study infinitesimal intervals of time $[t,t+dt]$, giving us more degrees of freedom, as shown in Section \ref{section:cacdm}.
\begin{thm} \label{thm:ppp_standard} For the \emph{CDM} updates \eqref{eq:v_updates}, in the P.p.p. model with intensities $p_{ij}$, we have the following guarantees for all $t\ge 0$
\begin{equation}
    \dE(F^*(v(t))-F^*(v^\star))\le (F^*(v(0))-F^*(v^\star)) \exp\left(-\frac{\sigma_{\min}}{2L_{\max}}\gamma_p t\right),
\end{equation}
where $v^\star=A \lambda^\star $ is the minimizer of $F^*$ on ${\rm Im}(A)$, $\lambda^\star $ being a minimizer of $F^*_A$, $\gamma_p=\gamma(p_{ij})$ is the spectral gap of the graph Laplacian  weighted by weights $\nu_{ij}=p_{ij}$ 
%the local \emph{P.p.p.}~rate (Definition \ref{laplacian}).
 and $\sigma_{\min},L_{\max}$ are defined in \eqref{eq:sigma_min_L_max}.
\end{thm}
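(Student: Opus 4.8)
The plan is to read the \emph{CDM} updates \eqref{eq:v_updates} as a continuous-time randomized coordinate descent on the dual objective $F^*_A$, and to reduce the convergence rate to a Polyak--\L ojasiewicz (PL) inequality whose constant is controlled by the spectral gap $\gamma_p$. Throughout I would fix the (analysis-only) matrix $A$ by taking $\mu_{ij}^2 = p_{ij}$ in \eqref{eq:matrixA}, so that $AA^\top$ is exactly the Laplacian $L_p$ weighted by the $p_{ij}$; by the second Remark this choice does not affect the actual iterates $v(t),x(t)$.

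First I would record the conjugate regularity: since $f_i$ is $L_i$-smooth and $\sigma_i$-strongly convex, $f_i^*$ is $\sigma_i^{-1}$-smooth and $L_i^{-1}$-strongly convex, so that ${\rm diag}(\nabla^2 f_k^*) = \nabla^2 F^*$ satisfies $L_{\max}^{-1}\II \preceq \nabla^2 F^* \preceq \sigma_{\min}^{-1}\II$. Computing the directional curvature $e_{ij}^\top \nabla^2 F^*_A\, e_{ij} = \mu_{ij}^2(\nabla^2 f_i^* + \nabla^2 f_j^*) \le \mu_{ij}^2(\sigma_i^{-1}+\sigma_j^{-1}) =: L_{ij}$ shows that the step $1/L_{ij}$ used in \eqref{eq:step_lambda} is precisely the inverse coordinate-smoothness constant, whence the coordinate descent lemma gives the per-activation decrease $F^*_A(\lambda_{t_{k+1}}) \le F^*_A(\lambda_{t_k}) - \frac{1}{2L_{ij}}\NRM{\nabla_{ij}F^*_A(\lambda_{t_k})}^2$.

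Next I would pass to continuous time. Since $\lambda(\cdot)$ is a pure-jump Markov process in which edge $(ij)$ fires at rate $p_{ij}$, Dynkin's formula applied to $\phi(t) := F^*(v(t)) = F^*_A(\lambda(t))$ yields $\frac{d}{dt}\dE[\phi(t)] = \dE[\mathcal{L}\phi]$, where the generator $\mathcal{L}$ acts through the jumps above. Combined with the descent lemma this gives $\frac{d}{dt}\dE[\phi(t)] \le -\sum_{(ij)} \frac{p_{ij}}{2L_{ij}}\dE\,\NRM{\nabla_{ij}F^*_A(\lambda(t))}^2$. Substituting $\NRM{\nabla_{ij}F^*_A}^2 = \mu_{ij}^2\NRM{x_i - x_j}^2$ and $\mu_{ij}^2/L_{ij} = (\sigma_i^{-1}+\sigma_j^{-1})^{-1} \ge \sigma_{\min}/2$ collapses the sum to $\frac{d}{dt}\dE[\phi(t)] \le -\frac{\sigma_{\min}}{4}\dE\langle x, L_p x\rangle$, where $x = \nabla F^*(v(t))$ is the primal surrogate and $L_p = AA^\top$.

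The crux---and the step I expect to be the main obstacle---is the PL inequality $\langle x, L_p x\rangle \ge \frac{2\gamma_p}{L_{\max}}\big(F^*(v) - F^*(v^\star)\big)$. To obtain it I would note $\langle x, L_p x\rangle = \NRM{A^\top x}^2 = \NRM{\nabla F^*_A(\lambda)}^2$ and $\nabla^2 F^*_A = A^\top \nabla^2 F^*\, A \succeq L_{\max}^{-1} A^\top A$; since the nonzero eigenvalues of $A^\top A$ equal those of $AA^\top = L_p$, they are $\ge \gamma_p$, so $F^*_A$ is $\gamma_p/L_{\max}$-strongly convex on ${\rm Im}(A^\top)$. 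The delicate point is the kernel of $A$: I would restrict to $\lambda \in {\rm Im}(A^\top)$, where strong convexity gives the PL bound, and then transfer it to every $v \in {\rm Im}(A)$ via the unique representative $\lambda \in {\rm Im}(A^\top)$ with $A\lambda = v$---legitimate because both $F^*_A(\lambda)$ and $\NRM{\nabla F^*_A(\lambda)}^2$ depend on $\lambda$ only through $v = A\lambda$. Feeding this into the previous display gives $\frac{d}{dt}\dE[\phi(t)-F^*(v^\star)] \le -\frac{\sigma_{\min}\gamma_p}{2L_{\max}}\dE[\phi(t)-F^*(v^\star)]$, and Gr\"onwall's lemma integrates this to the claimed exponential rate.
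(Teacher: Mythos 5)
Your proposal is correct and follows essentially the same route as the paper: a continuous-time (infinitesimal-interval / generator) analysis of dual coordinate descent, the local descent lemma with step $1/L_{ij}$, and a gradient-domination inequality whose constant is the strong convexity of $F_A^*$ on ${\rm Ker}(A)^\perp$, lower-bounded via the spectral gap of $AA^\top$. The only differences are cosmetic: you normalize $\mu_{ij}^2=p_{ij}$ and extract the $\sigma_{\min}/2$ factor before invoking $\gamma_p$, whereas the paper sets $\mu_{ij}^2=p_{ij}/(\sigma_i^{-1}+\sigma_j^{-1})$ and absorbs it into the Laplacian weights, and you derive the PL inequality from the Hessian bound rather than via Bregman divergences as in Lemma~\ref{lemma:grad_domination}.
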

Since $x(t)=\nabla F^*(v(t))$ and $x^\star=\nabla F^*(v^\star)$ where $x^\star$ is the minimizer of $F$ under the consensus constraint, we have on primal variable $x(t)$ (Lemma \ref{lem:primal_dual}):
\begin{equation}
    \E\left[\NRM{x_t-x^\star}^2\right]\le \frac{2L_{\max}}{\sigma_{\min}^2}(F^*(v(0))-F^*(v^\star)) \exp\left(-\frac{\sigma_{\min}}{2L_{\max}}\gamma_p t\right).
\end{equation}
We thus obtain a factor $\gamma_p$ in the rate of convergence that reflects communication speed, and $\frac{\sigma_{\min}}{L_{\max}}$ that is an upper-bound on the condition number of the objective function. The sketch of proof below relies on a classical analysis of coordinate descent algorithms adapted to continuous time. The technical details are differed to Appendix \ref{app:prelim}. We believe the proof technique to be of independent interest: it could be applied to analyze optimization methods such as gradient descent algorithms (stochastic, proximal or accelerated ones) with increments ruled by \emph{Poisson point processes} with simple proofs based on establishment of differential inequalities. 
%simplifies proofs by allowing to neglect some quantities, and can be seen as an intermediate between discrete classical analysis and their related ODEs, that shed light on their behavior, while being only theoretical objects. 
\begin{proof} We prove Theorem \ref{thm:ppp_standard} by considering edge-dual variables $\lambda_t\in\R^{E\times d}$ associated to $x(t)$ and $v(t)$, in particular with $A\lambda_t=v(t)$ and $A\lambda^\star =v^\star$. Since $v(0)=0$, we take $\lambda_0=0$. We consider matrix $A$ in \eqref{eq:matrixA} with $\mu_{ij}^2=\frac{p_{ij}}{\sigma_i^{-1}+\sigma_j^{-1}}$.  When clock $(ij)$ ticks at time $t\ge0$, the following update is performed on variable $\lambda_t$:
\begin{equation}\label{eq:thm1}
    \lambda_{t} \xleftarrow t \lambda_{t}-\frac{1}{(\sigma_i^{-1}+\sigma_j^{-1})\mu_{ij}^2}U_{ij}\nabla_{ij} F_A^*(\lambda_{t}).
\end{equation}
Furthermore, note that we have $F^*(v(t))=F_A^*(\lambda_t).$
A key ingredient in the proof is the lemma below,  which establishes a local smoothness property. Its proof is given in Appendix \ref{app:prelim},
%, and is crucial in the analysis through edge-dual variables 
\begin{lem}\label{lemma:smoothness} For $\lambda\in \R^{E\times d}$ and $ij\in E$, we have:
\begin{equation}
    F_A^*\left(\lambda-\frac{1}{\mu_{ij}^2(\sigma_i^{-1}+\sigma_j^{-1})}U_{ij}\nabla_{ij} F_A^*(\lambda)\right)-F_A^*(\lambda)\leq -\frac{1}{2\mu_{ij}^2(\sigma_i^{-1}+\sigma_j^{-1})}\NRM{\nabla_{ij}F_A^*(\lambda)}^2.
    \label{smoothness}
\end{equation} 
\end{lem}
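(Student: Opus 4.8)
The plan is to recognize \eqref{smoothness} as the standard descent guarantee for a single step of coordinate gradient descent with step size $1/L_{ij}$, where $L_{ij}:=\mu_{ij}^2(\sigma_i^{-1}+\sigma_j^{-1})$ is the smoothness constant of $F_A^*$ along the $(ij)$ block of $\lambda$. The entire content of the lemma therefore reduces to establishing this blockwise smoothness; once it is in hand, the inequality follows from a one-line application of the quadratic upper bound with the stated step size.

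First I would invoke the duality between strong convexity and smoothness. Since each $f_i$ is $\sigma_i$-strongly convex by \eqref{eq:smooth_sc}, its Fenchel conjugate $f_i^*$ is differentiable with $\sigma_i^{-1}$-Lipschitz gradient, i.e.\ $\NRM{\nabla f_i^*(u)-\nabla f_i^*(u')}\le \sigma_i^{-1}\NRM{u-u'}$ for all $u,u'\in\R^d$. This is the only place where the assumptions \eqref{eq:smooth_sc} enter. Next I would quantify how the $(ij)$ block of $\lambda$ couples into $A\lambda$: by \eqref{eq:matrixA}, perturbing the $(ij)$ block of $\lambda$ by $\delta\in\R^d$ changes $A\lambda$ only in its $i$-th and $j$-th coordinates, by $+\mu_{ij}\delta$ and $-\mu_{ij}\delta$ respectively. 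Combining this with the explicit expression $\nabla_{ij}F_A^*(\lambda)=\mu_{ij}(\nabla f_i^*((A\lambda)_i)-\nabla f_j^*((A\lambda)_j))$ established in Section~\ref{section:dual}, the triangle inequality together with the $\sigma_i^{-1}$- and $\sigma_j^{-1}$-Lipschitzness of $\nabla f_i^*,\nabla f_j^*$ gives
\[
\NRM{\nabla_{ij}F_A^*(\lambda+U_{ij}\delta)-\nabla_{ij}F_A^*(\lambda)}\le \mu_{ij}^2(\sigma_i^{-1}+\sigma_j^{-1})\NRM{\delta}=L_{ij}\NRM{\delta},
\]
so $F_A^*$ is $L_{ij}$-smooth when restricted to the $(ij)$ block.

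Finally, blockwise $L_{ij}$-smoothness yields the quadratic upper bound $F_A^*(\lambda+U_{ij}\delta)\le F_A^*(\lambda)+\langle\nabla_{ij}F_A^*(\lambda),\delta\rangle+\tfrac{L_{ij}}{2}\NRM{\delta}^2$, obtained by integrating the gradient along the segment. Choosing $\delta=-\tfrac{1}{L_{ij}}\nabla_{ij}F_A^*(\lambda)$ collapses the right-hand side to $F_A^*(\lambda)-\tfrac{1}{2L_{ij}}\NRM{\nabla_{ij}F_A^*(\lambda)}^2$, which is precisely \eqref{smoothness} after substituting $L_{ij}=\mu_{ij}^2(\sigma_i^{-1}+\sigma_j^{-1})$. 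I do not anticipate a genuine obstacle: the only point demanding care is the bookkeeping of the block structure of $A$ (the $\R^{E\times d}$ versus $\R^{n\times d}$ indexing), so that the correct constant $\mu_{ij}^2(\sigma_i^{-1}+\sigma_j^{-1})$ emerges — with the factor $\mu_{ij}$ appearing twice (once from $\nabla_{ij}$ and once from the perturbation of $A\lambda$) and the two Lipschitz constants adding rather than maximizing. The descent step from a blockwise Lipschitz gradient is itself entirely standard, so no subtlety arises there.
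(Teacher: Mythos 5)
Your proposal is correct and follows essentially the same route as the paper's proof: both exploit that $\sigma_i$-strong convexity of $f_i$ makes $\nabla f_i^*$ $\sigma_i^{-1}$-Lipschitz, that a perturbation of the $(ij)$ block of $\lambda$ only touches coordinates $i$ and $j$ of $A\lambda$ through $\pm\mu_{ij}$, and then conclude by the standard quadratic upper bound with step size $1/(\mu_{ij}^2(\sigma_i^{-1}+\sigma_j^{-1}))$. The only (immaterial) difference is that the paper applies the quadratic bound to $f_i^*$ and $f_j^*$ separately and sums, whereas you first assemble the blockwise Lipschitz constant of $\nabla_{ij}F_A^*$ and then invoke the descent lemma once.
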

\noindent Then using this, for $t\ge 0$ and $dt>0$:
\begin{align*}
    \E^{\F_t}[F_A^*(\lambda_{t+dt})-F_A^*(\lambda_{t})] & = (1-Idt)\E^{\F_t}[F_A^*(\lambda_{t+dt})-F_A^*(\lambda_{t})|\text{no activations in }[t,t+dt]]\\
    &+\sum_{(ij)\in E} p_{ij}dt\E^{\F_t}[F_A^*(\lambda_{t+dt})-F_A^*(\lambda_{t})|(ij)\text{ activated in }[t,t+dt]]\\
    &=-dt\sum_{ij \in E} p_{ij} (F_A^*(\lambda_t)-F_A^*(\lambda_t-\frac{1}{(\sigma_i^{-1}+\sigma_j^{-1})\mu_{ij}^2}U_{ij}\nabla_{ij}F_A^*(\lambda_t))) + o(dt)\\
    &\le -dt\sum_{ij\in E} \frac{p_{ij}}{2(\sigma_i^{-1}+\sigma_j^{-1})\mu_{ij}^2}\NRM{\nabla_{ij}F_A^*(\lambda_t)}^2+o(dt)\\
    &= - \frac{dt}{2}\NRM{\nabla F_A^*(\lambda_t)}^2+o(dt)
\end{align*}
Lemma \ref{lemma:grad_domination} in the Appendix implies that $\NRM{\nabla F_A^*(\lambda)}\ge 2\sigma_A (F_A^*(\lambda)-F_A^*(\lambda^\star ))$, where $\sigma_A$ is the strong convexity parameter of $F_A^*$ with respect to the Euclidean norm on the orthogonal of ${\rm Ker}(A)$. We thus have:
\begin{align*}
    \E^{\F_t}[F_A^*(\lambda_{t+dt})-F_A^*(\lambda_{t})]\le -dt\sigma_A (F_A^*(\lambda_t)-F_A^*(\lambda^\star ))+o(dt).%\\
    %\implies \E[F_A^*(\lambda_{t+dt})-F_A^*(\lambda^\star )]-\E[F_A^*(\lambda_{t})-F_A^*(\lambda^\star )]\le \sigma_A\E[F_A^*(\lambda_t)-F_A^*(\lambda^\star )] +o(dt).
\end{align*}
Then, dividing by $dt$ and taking $dt\to 0$ yields: $\frac{d}{dt}\E[F_A^*(\lambda_t)-F_A^*(\lambda^\star )]\le -\sigma_A\E[F_A^*(\lambda_t)-F_A^*(\lambda^\star )]$. We then obtain an exponential rate of convergence $\sigma_A$ by integrating. Finally, Lemma \ref{lemma:sc} in the Appendix gives $\sigma_A\ge \frac{\lambda^+_{\min}(AA^\top)}{L_{max}}$  where $\lambda^+_{\min}(AA^\top)$ is the smallest non-null eigenvalue of $AA^\top$. As $AA^\top$ is the Laplacian of the graph with weights $\nu_{ij}=\mu_{ij}^2=\frac{p_{ij}}{\sigma_i^{-1}+\sigma_j^{-1}}$ (Lemma \ref{lemma:AAT}), we have $\lambda^+_{\min}(AA^\top)\ge \sigma_{\min}\gamma_p/2$ and \eqref{eq:thm1} follows.
\end{proof}

\begin{remark} The above study of infinitesimal intervals of time directly leads to continuous-time bounds. These could also be derived from a discrete time analysis:  Denote by $t_k\ge0$ the time of $k$-th activation, $k\in\N^*$, and $t_0=0$. We can prove that:
\begin{equation}
    \E[F_A^*(\lambda_{t_k})-F_A^*(\lambda^\star )]\le (1-\sigma_A/I)^k (F_A^*(\lambda_{0})-F_A^*(\lambda^\star )),
\end{equation}
where $I=\sum_{(ij)\in E}p_{ij}$. Then, we have in continuous time, for any $t\in\R^+$:
\begin{align*}
    \E[F_A^*(\lambda_t)-F_A^*(\lambda^\star )] & = \sum_{k\in \N} \frac{e^{-It}(It)^k}{k!} \E[F_A^*(\lambda_t)-F_A^*(\lambda^\star )|k \text{ activations in }[0,t]]\\
    &\le \sum_{k\in \N} \frac{e^{-It}(It)^k}{k!} (1-\sigma_A/I)^k (F_A^*(\lambda_0)-F_A^*(\lambda^\star ))\\
    &= e^{-\sigma_A t}(F_A^*(\lambda_0)-F_A^*(\lambda^\star )),
\end{align*}
giving the same result. However in the next Section, we will see that the continuous time viewpoint is essential in the design of the  \emph{CACDM} algorithm, as well as for its analysis through consideration of infinitesimal intervals and differential calculus.
%it will be essential to resort to continuous time analysis and show that considering infinitesimal intervals of time gives us more tools, not only in the analysis, but also when building algorithms.
\end{remark}

\subsection{Accelerated Gossip in the \emph{P.p.p. model} \label{section:cacdm}}

Inspired by previous works \citep{neststich2017acdm,hendrikx2018accelerated}, we propose \emph{CACDM} (Continuously Accelerated Coordinate Descent Method),  a gossip algorithm that, for the \emph{P.p.p. model},  provably obtains an accelerated rate of convergence in the sense of \citet{neststich2017acdm} (Theorem \ref{thm:cacdm}).
%for this algorithm in . We call our algorithm 
%Edge activations are ruled by local independent \emph{P.p.p.~}of intensity $p_{ij}$ for edge $(ij)$. We denote $I=\sum_{ij}p_{ij}$ the global activation intensity.

\subsubsection{\emph{CACDM} algorithm and convergence guarantees}

Similarly to other standard acceleration techniques, the algorithm requires maintaining two variables $x(t),y(t)\in \R^{n\times d}$, whose convex-dual conjugates are denoted $u(t),v(t)$. The variable $v(t)$ plays the role of a momentum. 
We initialize such that $u(0)=v(0)=0$. 
As in last subsection, variables $u_i(t),v_i(t),x_i(t),y_i(t)\in \R^d$ for $i\in[n]$ are attached to node $i$. The algorithm involves two types of operations: continuous contractions, and pairwise updates along each edge $(ij)$ when its Poisson clock ticks. 
\begin{enumerate}
    \item \textbf{Continuous Contractions:} For all times $t\in \R^+$ and node $i\in[n]$, for some fixed $\theta>0$ to be specified, make the infinitesimal contraction \begin{equation*}
    \begin{pmatrix} u_i(t+dt) \\ v_i(t+dt) \end{pmatrix}=\begin{pmatrix} 1- dt I\theta & dt I\theta \\ dt I\theta & 1-dt I\theta \end{pmatrix}\begin{pmatrix} u_i(t) \\ v_i(t) \end{pmatrix},\end{equation*}  between times $t$ and $t+dt$. Between times $s<t$, if there is no activation of $i$, it consists in performing the contraction: \begin{align}\label{eq:contract}
\begin{pmatrix} u_i(t) \\ v_i(t) \end{pmatrix}&=\exp{\left((t-s)I\begin{pmatrix} -\theta & \theta \\ \theta & -\theta \end{pmatrix}\right)}\begin{pmatrix} u_{i}(s) \\ v_{i}(s) \end{pmatrix}=\begin{pmatrix} \frac{1+e^{-2I\theta (t-s)}}{2} & \frac{1-e^{-2I\theta (t-s)}}{2} \\ \frac{1-e^{-2I\theta (t-s)}}{2} & \frac{1+e^{-2I\theta (t-s)}}{2} \end{pmatrix}\begin{pmatrix} u_{i}(s) \\ v_{i}(s) \end{pmatrix}.
\end{align}\\
    \item \textbf{Local Updates:} Let $\gamma_p$ be the smallest non-null eigenvalue of the Laplacian of the graph weighted by the local rates: $\nu_{ij}=p_{ij}$ (Definition \ref{laplacian}), and $L_{\max}$ defined in \eqref{eq:sigma_min_L_max}. When edge $(ij)$ is activated at time $t\geq 0$, perform the local update between nodes $i$ and $j$:
\begin{align}
      &u_i(t)\xleftarrow t u_i(t)-\frac{\nabla f_i^*(u_t(i))-\nabla f_j^*(u_t(j))}{\sigma_i^{-1}+\sigma_j^{-1}},\label{eq:x_step}\\
        &v_i(t)\xleftarrow t v_i(t)-\frac{\theta L_{\max}}{\gamma_p}\left(\nabla f_i^*(u_t(i))-\nabla f_j^*(u_t(j)\right),\label{eq:y_step}     
    \end{align}
and symmetrically at node $j$. The desired output at node $i$ and at time $t$ is then $x_i(t)=\nabla f_i^*(u_{i}(t))$ (Section \ref{section:dual}).
\end{enumerate}
\noindent \noindent This procedure can be performed asynchronously and at discrete times: the length $t-s$ between two activations of an edge that appears in the exponential contraction \eqref{eq:contract} is a local variable that can be computed from a  local clock. More formally, the stochastic process defined above is the following, where $V_t=(u(t),v(t))^T$ and $N_{ij}$ are independent \emph{P.p.p.} of intensities $p_{ij}$:
\begin{equation*}
dV_t=I\begin{pmatrix} -\theta & \theta \\ \theta & -\theta \end{pmatrix} V_t dt - \sum_{(ij)\in E}dN_{ij}(t)\begin{pmatrix}
    \frac{\nabla f_i^*(u_t(i))-\nabla f_j^*(u_t(j))}{\sigma_i^{-1}+\sigma_j^{-1}}\\
    \frac{\theta L_{\max}}{\gamma_p}\left(\nabla f_i^*(u_t(i))-\nabla f_j^*(u_t(j)\right)
\end{pmatrix}.
\end{equation*}

\noindent Define the Lyapunov function
\begin{equation}\label{eq:lyap1}
    \mathcal{L}_t={\NRM{v(t)-v^\star}^2_{({A^*}^\top A^*)^2}}+\frac{2\theta^2S^2 L_{\max}^2}{\gamma_p^2}\left(F^*(u(t))-F^*(v^\star)\right),
\end{equation}
 where $v^\star=A \lambda^\star $ is the minimizer of $F^*$ on ${\rm Im}(A)$, $\lambda^\star $ being a minimizer of $F^*_A$, $\theta,S>0$ to be defined, and $A^*$ the pseudo-inverse of matrix $A$  tuned with $\mu_{ij}^2=p_{ij}$. Let $\gamma_p$ be the smallest non-null eigenvalue of the Laplacian of the graph, with weights $\nu_{ij}=p_{ij}$ (Definition \ref{laplacian}).
\begin{thm}\label{thm:cacdm}
For the \emph{CACDM} algorithm defined by Equations \eqref{eq:contract}, \eqref{eq:x_step}, \eqref{eq:y_step} in the P.p.p. model, if $\theta=\sqrt{\frac{\gamma_p}{IS^2L_{\max}}}$ for $S$ verifying the inequality:
\begin{equation}
    S^2 \geq \sup_{(ij)\in E} \frac{(\sigma_i^{-1}+\sigma_j^{-1})}{2p_{ij}/I},
\end{equation}where $\sigma_i$ defined in \eqref{eq:smooth_sc}, $I$ in \eqref{eq:def_I} and $\sigma_{\min},L_{\max}$ in \eqref{eq:sigma_min_L_max}, we have for all $t\in\R^+$: 
\begin{equation*}
 \E[\mathcal{L}_t] \leq \mathcal{L}_0e^{-I\theta t}. 
\end{equation*}
where $\mathcal{L}_t$ is defined in \eqref{eq:lyap1}.
\end{thm}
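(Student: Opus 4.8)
The plan is to follow the same infinitesimal, continuous-time route as for Theorem~\ref{thm:ppp_standard}: establish the differential inequality $\frac{d}{dt}\E[\mathcal{L}_t]\le -I\theta\,\E[\mathcal{L}_t]$ and integrate to obtain $\E[\mathcal{L}_t]\le\mathcal{L}_0e^{-I\theta t}$. First I pass to the edge-dual/preconditioned geometry of Section~\ref{section:dual}. Writing $A\lambda_t=u(t)$ we have $F^*(u(t))=F_A^*(\lambda_t)$, and since ${A^*}^\top A^*=(AA^\top)^+$ is the pseudo-inverse of the weighted Laplacian $AA^\top$ (weights $p_{ij}$), the momentum term $\NRM{v(t)-v^\star}^2_{({A^*}^\top A^*)^2}$ is a Laplacian-preconditioned squared distance of the momentum variable to the optimum. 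In these coordinates the two CACDM operations are exactly the drift and the Poisson jumps of a continuous-time accelerated coordinate descent on $F_A^*$, with $u$ the main iterate and $v$ the momentum, and with $C:=2\theta^2S^2L_{\max}^2/\gamma_p^2$ the constant in front of the gap.

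I then decompose the infinitesimal evolution as
\begin{equation*}
\E^{\F_t}[\mathcal{L}_{t+dt}-\mathcal{L}_t]=\mathcal{D}_t\,dt+\sum_{(ij)\in E}p_{ij}\,\mathcal{J}_{ij}\,dt+o(dt),
\end{equation*}
with $\mathcal{D}_t$ the drift from the contraction \eqref{eq:contract} and $\mathcal{J}_{ij}$ the contribution of activating edge $(ij)$. For the drift, differentiating both terms of $\mathcal{L}_t$ along $\dot u=I\theta(v-u)$ and $\dot v=I\theta(u-v)$ produces a momentum cross term $2I\theta\langle v-v^\star,u-v\rangle_{({A^*}^\top A^*)^2}$ and a gap cross term that, using $\nabla F_A^*(\lambda)=A^\top\nabla F^*(u)$ and $v-u=A(\eta-\lambda)$, equals $C\,I\theta\langle\nabla F_A^*(\lambda_t),\eta_t-\lambda_t\rangle$. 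For each jump I invoke Lemma~\ref{lemma:smoothness} to bound the decrease of the gap produced by step \eqref{eq:x_step} on $u$ by $-\tfrac{1}{2(\sigma_i^{-1}+\sigma_j^{-1})}\NRM{x_i-x_j}^2$ (with $x=\nabla F^*(u)$), and I expand the change of the preconditioned distance produced by step \eqref{eq:y_step} on $v$ into a first-order term linear in $x_i-x_j$ and a second-order (variance) term quadratic in it.

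The decisive computation is the aggregation and term-matching. Summing the first-order momentum-jump terms uses the identity $\sum_{(ij)}p_{ij}(x_i-x_j)(e_i-e_j)=AA^\top\nabla F^*(u)$, so that applying the preconditioner $({A^*}^\top A^*)^2$ cancels one power of the Laplacian and collapses the contribution onto an inner product against $\nabla F^*(u)$ that merges with the drift cross terms; with $\theta=\sqrt{\gamma_p/(IS^2L_{\max})}$ the surviving inner products reconstruct $\langle\nabla F_A^*(\lambda_t),\lambda_t-\lambda^\star\rangle$, bounded below by the gap through convexity of $F_A^*$. The aggregated second-order term is where the hypothesis on $S$ enters: the bound $S^2\ge\sup_{(ij)}(\sigma_i^{-1}+\sigma_j^{-1})/(2p_{ij}/I)$ is exactly the threshold that makes this variance absorbable into $C$ times the per-coordinate smoothness decrease of Lemma~\ref{lemma:smoothness}. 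Combining these, and invoking the strong-convexity and gradient-domination estimates (Lemmas~\ref{lemma:grad_domination} and~\ref{lemma:sc}) linking $\gamma_p$, $\sigma_{\min}$ and $L_{\max}$, the right-hand side collapses to $-I\theta\,\mathcal{L}_t+o(dt)$; dividing by $dt$, letting $dt\to0$ and integrating yields the claim.

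I expect the main obstacle to be precisely this cross-term bookkeeping: showing that the linear momentum-jump term and the contraction drift cancel up to a controlled multiple of $\langle\nabla F_A^*(\lambda_t),\lambda_t-\lambda^\star\rangle$, while keeping every operation consistent with the projection onto $(\ker A)^\perp$ on which $F_A^*$ is strongly convex, and verifying that the $S^2$ threshold is sharp enough to render the variance term nonpositive after combination. This algebraic matching is the heart of the accelerated rate $I\theta=\sqrt{I\gamma_p/(S^2L_{\max})}$.
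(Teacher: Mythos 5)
Your proposal follows essentially the same route as the paper's proof: pass to edge-dual variables, study the infinitesimal interval $[t,t+dt]$ by splitting into the contraction (no-activation) drift and the Poisson jump terms, absorb the second-order jump term via the $S^2$ condition together with Lemma~\ref{lemma:smoothness}, handle the cross term by convexity and strong convexity of $F_A^*$, and close the differential inequality $\frac{d}{dt}\E[\mathcal{L}_t]\le -I\theta\,\E[\mathcal{L}_t]$ with $\theta^2=\sigma_A/S^2$. The bookkeeping you flag as the main obstacle is exactly what the paper carries out, so your outline matches the intended argument.
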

\noindent The proof of this theorem uses edge-dual variables and differential inequalities through the study of infinitesimal intervals $[t,t+dt]$ as in the proof of Theorem \ref{thm:ppp_standard}, further combined with inequalities in \cite{neststich2017acdm} for the study of accelerated coordinate descent. We first make a few comments on this theorem, and then proceed to its proof. Since $x(t)=\nabla F^*(u(t))$ and $v^\star=\nabla F^*(x^\star)$ where $x^\star$ is the minimizer of $F$ under the consensus constraint, we have on primal variable $x(t)$:
\begin{equation}
    \E\left[\NRM{x_t-x^\star}^2\right]\le \frac{2L_{\max}}{\sigma_{\min}^2}\frac{\gamma_p}{2\theta^2S^2L_{\max}^2}\mathcal{L}_0 e^{-I\theta t}.
\end{equation}

\noindent \textbf{Remarks on the bound:} $(\gamma_p/I)$ is the normalized non-accelerated randomized gossip rate of convergence. It is divided by $I$ so that the $p_{ij}$ sum to $1$. If there exists a constant $c>0$ such that:
\begin{equation*}
    \forall (ij)\in E, \frac{p_{ij}}{I}\geq \frac{c}{|E|},
\end{equation*}then $S^2\geq \sigma_{\min}^{-1}|E|/c$, leading to the following rate of convergence:
\begin{equation*}
    I\times \sqrt{c\frac{\sigma_{\min}}{L_{\max}}\times \frac{\gamma_{{p}}}{I|E|}}.
\end{equation*}
Taking $I=1$ (re-normalizing time) and the simple averaging problem leads to an improved rate of $n^{-2}$ on the line graph instead of $n^{-3}$ \citep{Mohar1991laplacian}. For the 2D-Grid, we have $n^{-3/2}$ instead of $n^{-2}$ \citep{Mohar1991laplacian}. However, there is no improvement on the complete graph ($1/n$ in both cases). These rates are the same as \cite{Dimakis_2008, hendrikx2018accelerated,loizou2018provably}. Yet, our algorithm does not require to know the number of activations performed on the whole network, and only requires local clocks. Moreover, similarly to~\citet{hendrikx2018accelerated}, it works for any graph and for the more general problem of distributed optimization of smooth and strongly convex functions provided dual gradients of local functions are computable.
\\

\noindent \textbf{\emph{CACDM} algorithm for the averaging problem:} for the gossip averaging problem, we have $u(t)=x(t)-x(0),v(t)=y(t)-y(0)$, and \eqref{eq:x_step} and \eqref{eq:y_step} read as:
\begin{align*}
    &x_i(t)\xleftarrow t \frac{x_i(t)+x_j(t)}{2}\\
        &y_i(t)\xleftarrow t y_i(t)-\frac{\theta}{\gamma_p}\left(x_t(i)-x_t(j)\right).\label{eq:y_step}   
\end{align*}
The first variable thus performs classical local averagings while mixing continuously with the second one (the momentum).

\subsubsection{Proof of Theorem~\ref{thm:cacdm}}
Let the two edge dual variables $\lambda,\omega\in\R^{E\times d}$ be the edge-dual conjugates of $x(t),y(t)$. Variable $\omega$ plays the role of the momentum. Since $u(0)=v(0)=0$, we can take $\lambda_0=\omega_0=0$. Operations \eqref{eq:x_step} and \eqref{eq:y_step} translate as follows on these variables when clock $(ij)$ ticks. 
Let $\sigma_A$ be the strong convexity parameter of $F_A^*$ with respect to the Euclidean norm on the orthogonal of ${\rm Ker}(A)$. In Appendix \ref{app:prelim}, we prove that, if $\mu_{ij}^2=p_{ij}$: $\sigma_A\le \frac{\gamma_p}{L_{\max}}.$

While working with $F_A^*$ and edge-dual variables, we use $\sigma_A$ instead of $\frac{\gamma_p}{L_{\max}}$ as presented in the algorithm, in order to keep in mind its meaning for $F_A^*$.
Define the coordinate gradient step:
\begin{equation}\eta_{ij,t}=-\begin{pmatrix} \frac{1}{2\mu_{ij}^2(\sigma_i^{-1}+\sigma_j^{-1})}U_{ij}\nabla_{ij}F_A^*(\lambda_t) \\ \frac{\theta}{\sigma_A p_{ij}}U_{ij}\nabla_{ij}F_A^*(\lambda_t) \end{pmatrix}\label{eq:step_dual_acdm}\end{equation}
where $U_{ij}=e_{ij}e_{ij}^T$, and perform the gradient step:
\begin{equation}
    \begin{pmatrix} \lambda_{t} \\ \omega_{t} \end{pmatrix} \xleftarrow t \begin{pmatrix} \lambda_{t} \\ \omega_{t} \end{pmatrix} + \eta_{ij,t}
\end{equation}
Define:
\begin{equation*}
    L_t={\NRM{\omega_t-\lambda^\star}^*}^2 +\frac{2\theta^2S^2}{\sigma_A^2}(F_A^*(\lambda_t)-F_A^*(\lambda^\star)),
\end{equation*}
where $\NRM{.}^*$ is the Euclidean norm on the orthogonal of ${\rm Ker}(A)$, and $\lambda^\star$ is an optimizer of $F_A^*$. Note that we have $L_t=\mathcal{L}_t$ for all $t\ge0.$
\begin{proof}[Proof of Theorem \ref{thm:cacdm}] The proof closely follows the lines of~\citet{neststich2017acdm,hendrikx2018accelerated}, adapted to fit our continuous time algorithm. Without loss of generality, we assume that $I=1$ i.e. that the $p_{ij}$ sum to $1$ (by rescaling time with $t'=tI$). We denote $r_t=\NRM{\omega_t-\lambda^\star}^*$, and $f_t=F_A^*(\lambda_t)-F_A^*(\lambda^\star)$, such that $L_t=r_t^2+\frac{2\theta^2 S^2}{\sigma_A^2}f_t$. Let $t\geq0$ and $dt>0$. The following equalities and inequalities are true up to a $o(dt)$ approximation, which will disappear when we make $dt \rightarrow 0$. Let's start with the term $r_t^2$: 
\begin{align}
    \E^{\F_t}[r_{t+dt}^2]&= (1-dt)\E^{\F_t}[r_{t+dt}^2|\text{no activations between t and t+dt}]\\
    & +dt\E^{\F_t}[r_{t+dt}^2|\text{1 activation between t and t+dt}]
\end{align}
For the first term, we get:
\begin{align*}
    \E^{\F_t}[r_{t+dt}^2|\text{no activation in }[t,t+dt]] & ={\NRM{(1-\theta dt)\omega_t+\theta dt \lambda_t -\lambda^\star}^*}^2 \\
    & \leq (1-\theta dt)r_t^2 + \theta dt {\NRM{\lambda_t-\lambda^\star}^*}^2
\end{align*}
where the inequality uses convexity of the squared function. For the other term, we decompose the event "1 activation between t and t+dt" in the disjoint events "$ij$ activated between t and t+dt", of probability $p_{ij}dt$, to get the following equation, which is true up to a $o(1)$ approximation (which is enough since we multiply by $dt$ afterwards):
\begin{align}
    \E^{\F_t}&[r_{t+dt}^2|\text{1 activation between t and t+dt}]=\sum_{(ij)\in E} p_{ij} {\NRM{\omega_t-\frac{\theta}{p_{ij}\sigma_A}U_{ij} \nabla_{ij} F_A^*(\lambda_t)-\lambda^\star}^{*2}} \nonumber\\
    &=\| \omega_t - \lambda^\star\|^{*2}
    +\sum_{ij}p_{ij}\frac{\theta^2}{\sigma_A^2p_{ij}^2}{\NRM{U_{ij}\nabla_{ij}F_A^*(\lambda_t)}^*}^2
    -2\sum_{ij}p_{ij}\frac{\theta }{p_{ij}\sigma_A}\langle  U_{ij} \nabla_{ij}F_A^*(\lambda_t),\omega_t-\lambda^\star\rangle  \label{CACDM2}
\end{align}
For the term $\sum_{ij}p_{ij}\frac{\theta^2}{\sigma_A^2p_{ij}^2}{\NRM{U_{ij}\nabla_{ij}F_A^*(\lambda_t)}^*}^2$, we get by definition of $S^2$, and by a local smoothness inequality (namely, $\forall y, F_A^*(y)-F_A^*(y-\frac{1}{\mu_{ij}^2(\sigma_i^{-1}+\sigma_j^{-1})}U_{ij}\nabla_{ij}F_A^*(y))\geq \frac{1}{2\mu_{ij}^2(\sigma_i^{-1}+\sigma_j^{-1})}\NRM{\nabla_{ij}F_A^*(y)}^2$ in Lemma \ref{lemma:smoothness}):
\begin{align}
    \sum_{ij}p_{ij}\frac{\theta^2}{\sigma_A^2p_{ij}^2}{\NRM{U_{ij}\nabla_{ij}F_A^*(\lambda_t)}^*}^2 & \leq \sum_{ij}p_{ij}\frac{2\theta^2S^2}{\sigma_A^2\mu_{ij}^2(\sigma_i^{-1}+\sigma_j^{-1})}\NRM{U_{ij}\nabla_{ij}F_A^*(\lambda_t)}^2 \nonumber \\ 
    & \leq \sum_{ij}p_{ij}\frac{2\theta^2S^2}{\sigma_A^2} (F_A^*(\lambda_t)-F_A^*(\lambda_t-\frac{\theta}{\sigma_A p_{ij}}U_{ij}\nabla_{ij}F_A^*(\lambda_t))) \nonumber \\
    & =\frac{\theta^2S^2}{\sigma_A^2}(F_A^*(\lambda_t)-\E^{\F_t}[F_A^*(\lambda_{t+dt})|\text{1 activation in [t,t+dt]}]).  \label{eq:cacdm_smoothness}
\end{align}
For the term $-2\sum_{ij}p_{ij}\frac{\theta }{p_{ij}\sigma_A}\langle  U_{ij}\nabla_{ij}F_A^*(\lambda_t),\omega_t-\lambda^\star\rangle  $, we get, by adding and subtracting a $\lambda_t$ in the bracket, and by convexity of $F_A^*$ ($\sigma_A$ is the strong convexity parameter of $F_A^*$):
\begin{align*}
    -2dt\frac{\theta }{\sigma_A}\langle  \nabla F_A^*(\lambda_t),&\omega_t-\lambda^\star\rangle   = -2dt\frac{\theta }{\sigma_A}\langle   \nabla F_A^*(\lambda_t),\omega_t-\lambda_t\rangle  -2dt\frac{\theta }{\sigma_A}\langle   \nabla F_A^*(\lambda_t),\lambda_t-\lambda^\star\rangle  \\
    & \leq -2\frac{1}{\sigma_A}\langle   \nabla F_A^*(\lambda_t),\theta dt(\omega_t-\lambda_t)\rangle  -2dt\frac{\theta }{\sigma_A}(F_A^*(\lambda_t)-F_A^*(\lambda^\star)+\sigma_A/2{\NRM{\lambda_t-\lambda^\star}^*}^2)
\end{align*}
Then, let's define $\lambda'_{t+dt}=(1-\theta dt)\lambda_t+\theta dt \omega_t=\E^{\F_t}[\lambda_{t+dt}|\text{no activations in }[t,t+dt]]$. By noticing that $\theta dt(\omega_t-\lambda_t)=\lambda'_{t+dt}-\lambda_t$, we get:
\begin{align}
    -2\frac{1}{\sigma_A}\langle   \nabla F_A^*(\lambda_t),\theta dt(\omega_t-\lambda_t)\rangle   & = -2\frac{1}{\sigma_A}\langle   \nabla F_A^*(\lambda_t),\lambda'_{t+dt}-\lambda_t\rangle   \label{cacdm1}\\
    & = -2\frac{1}{\sigma_A}\langle   \nabla F_A^*(\lambda'_{t+dt}),\lambda'_{t+dt}-\lambda_t\rangle  \label{cacdm2}\\
    & \leq -2\frac{1}{\sigma_A}(F_A^*(\lambda'_{t+dt})-F_A^*(\lambda_t)),\label{cacdm3}
\end{align}
where from \eqref{cacdm1} to \eqref{cacdm2}, the equality holds at $o(dt)$, as the left part of the bracket is true at $o(1)$ precision, and the right part of the bracket is a $O(dt)$. Then, we go from \eqref{cacdm2} to \eqref{cacdm3} using the convexity of $F_A^*$.
By plugging~\eqref{eq:cacdm_smoothness} and~\eqref{cacdm3} into~\eqref{CACDM2}, and rearranging the terms, we obtain:
\begin{align*}
    \E^{\F_t}[r_{t+dt}^2]-r_t^2 & \leq -dt \theta r_t^2 +dt\theta {\NRM{\lambda_t-\lambda^\star}^*}^2\\
    & +dt \frac{2\theta^2S^2}{\sigma_A^2}(F_A^*(\lambda_t)-\E^{\F_t}[F_A^*(\lambda_{t+dt})|\text{1 activation in [t,t+dt]}]) \\
    & -2dt\frac{\theta }{\sigma_A}(F_A^*(\lambda_t)-F_A^*(\lambda^\star)+\sigma_A/2{\NRM{\lambda_t-\lambda^\star}^*}^2)-2\frac{1}{\sigma_A}(F_A^*(\lambda'_{t+dt})-F_A^*(\lambda_t))
\end{align*}
Studying $\E^{\F_t}[F_A^*(\lambda_{t+dt})]$, we get:
\begin{align}
    \E^{\F_t}[F_A^*(\lambda_{t+dt})] & =(1-dt)F_A^*(\lambda'_{t+dt})+ dt \E^{\F_t}[F_A^*(\lambda_{t+dt})-F_A^*(\lambda^\star)|\text{1 activation in [t,t+dt]}]
\end{align}
Using $\theta^2=\sigma_A/S^2$ (i.e $\theta^2S^2/\sigma_A^2=1/\sigma_A$) and the above equality, Equation~\eqref{CACDM2} become:
\begin{align*}
    \E^{\F_t}[r_{t+dt}^2]-r_t^2 & \leq -dt \theta r_t^2 \\
    & +dt \frac{2}{\sigma_A}(F_A^*(\lambda_t)-\E^{\F_t}[F_A^*(\lambda_{t+dt})|\text{1 activation in [t,t+dt]}]) \\
    & -2dt\frac{\theta }{\sigma_A}(F_A^*(\lambda_t)-F_A^*(\lambda^\star ))-2\frac{1}{\sigma_A}(F_A^*(\lambda'_{t+dt})-F_A^*(\lambda_t))\\
    & = -dt \theta r_t^2  -\frac{2}{\sigma_A}(\E^{\F_t}[F_A^*(\lambda_{t+dt})-F_A^*(\lambda^\star )]-F_A^*(\lambda_{t})-F_A^*(\lambda^\star ))\\
    & -2dt\frac{\theta }{\sigma_A}(F_A^*(\lambda_t)-F_A^*(\lambda^\star )) +2dt\frac{\theta }{\sigma_A}(F_A^*(\lambda_t)-F_A^*(\lambda'_{t+dt}))\label{cacdm44}
\end{align*}
Since the last term is a $o(dt)$, the previous equation simplifies to:
\begin{equation*}
    \E^{\F_t}[L_{t+dt}]-L_t\leq - \theta dt L_t
\end{equation*}
Finally, we take the expectation, divide by $dt$ and make it tend to zero, which leads to $\frac{d}{dt}\E L_t\leq -\theta \E L_t$. Integrating this leads to the desired result, which writes: $$\forall t \geq 0, \E L_t\leq \exp{(-\theta t)}L_0$$
\end{proof}

\noindent \textbf{Empirical Results:} 
We consider the \emph{P.p.p.~model} on two graphs: the circle with 50 nodes and the 2D-Grid with 225 nodes. Our goal is to illustrate how the algorithms compare in a heterogeneous setting. Therefore, in both cases, $10\%$ of the nodes (chosen uniformly at random) have a delay $\tau_i=100$ time units, while the others have a delay equal to $1$ time unit. The delay of an edge $(ij)$ is then $\tau_{ij}=\max(\tau_i,\tau_j).$ Then, we take \emph{Poisson} rates for the edges equal to the inverse of these delays: $p_{ij}=1/\tau_{ij}$. The local functions for the gossip problems are chosen as $f_i(x)=\NRM{x-c_i}^2$, with $c_0 = 1$ and $c_i=0$ otherwise, which is the worst case scenario in terms of mixing). Figure~\ref{fig:ppp} shows the performances of classical (pairwise) gossip and \emph{CACDM} in this setting. We see that \emph{CACDM} is much faster than classical gossip, and that this is true in particular when the eigengap of the graph is small (of order $1/125000$ for our cyclic graph, compared to $1/50000$ for our grid), as predicted by Theorem \ref{thm:cacdm}.\\

\begin{figure*}[h!]
\subfigure[2D-Grid with 225 nodes]{
   \includegraphics[width=0.43\linewidth]{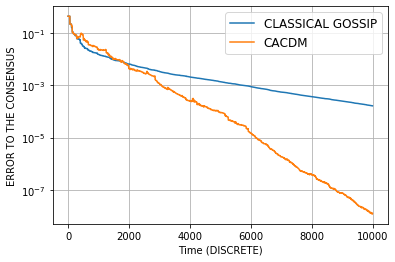}
}
\hfill
\subfigure[Cyclic graph with 50 nodes]{
    \includegraphics[width=0.43\linewidth]{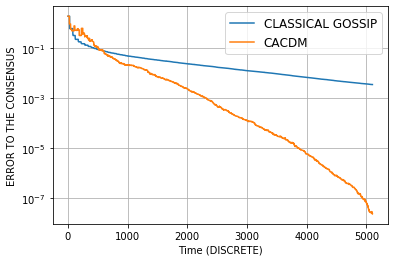}
}
\vspace{-8pt}
\caption{CACDM vs Randomized Gossip in the \emph{P.p.p.~model}.
}
\label{fig:ppp}
\end{figure*}

\section{Gossip on Loss Networks \label{section:LN}}

\subsection{Refined Loss-Network Communication Scheme and Detailed Algorithm \label{LNpresentation}}

The \emph{P.p.p.~model} is particularly amenable to analysis, and 
%qualified as asynchronous, while being computation-friendly as edges are chosen at every activation in an \emph{i.i.d.}~way, 
helps us understand quantitatively why asynchronous algorithms can outperform synchronous ones, but it assumes that communications and computations are done instantaneously. Thus, actual implementations differ from its underlying assumptions, unless further synchrony is assumed~\citep{hendrikx2018accelerated}. To alleviate this issue, with pairwise communications ruled by point processes as a baseline, we consider a protocol in which nodes are tagged as \emph{busy} when they are already engaged in an update, and communications between busy nodes are forbidden. Our model is inspired from classical Loss Network models \citep{kelly1991lossnetwork}.
In our new model, edges are activated following the same procedure as in the \emph{P.p.p.~model}, with a \emph{P.p.p.}~of intensity $p_{ij}$. Note that we do not consider these intensities to be constraints of the problem, but rather parameters of the algorithm, that we tune below. Each node has an exponential clock of intensity $\frac{1}{2}\sum_{j\sim i}p_{ij}$. At each clock-ticking, if $i$ is not busy, it selects a neighbor $j$ with probability $p_{ij}/\sum_{k\sim i}p_{ik}$. Node $i$ first checks if $j$ is currently \emph{busy}, an operation that takes time $\eps\tau_{ij}$ for some \emph{small} $\eps>0$ ($\eps\ll1$ if sending a simple request if much faster than sending a whole vector). If $j$ is not \emph{busy}, $i$ and $j$ compute and exchange information, becoming busy for a duration $\tau_{ij}$. We can think of this procedure as classical gossip on an underlying random graph (Figure \ref{fig:lossnetwork_dessin}), that follows a Markov-Chain process if we extend the space of states with the inactivation time. We call our model the \textbf{\emph{Refined Loss Network Model of parameter $\eps$} (\emph{RLNM($\eps$}))}. It is \emph{refined} as the operation that consists in checking on its neighbors is not present in classical Loss Networks.\\
\begin{figure}[h]
    \centering
    
    \includegraphics[scale=0.7]{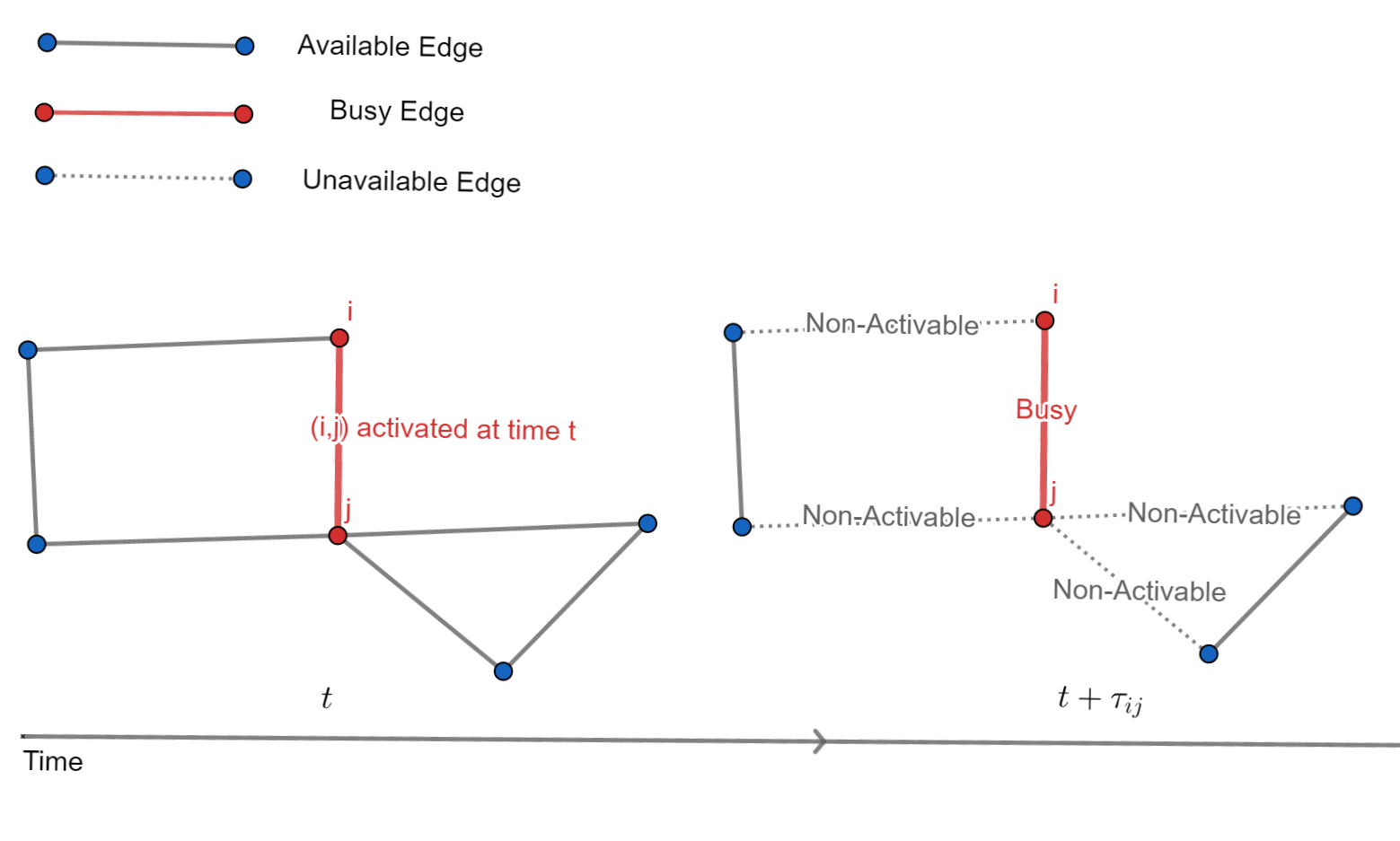}

    \caption{Underlying Markov Process for the Graph: edge $(ij)$ activated at time $t$ implies that while $ij$ busy \emph{i.e.}~between times $t$ and $t+\tau_{ij}$, all edges $kl$ adjacent to $ij$ are unavailable.}
    \label{fig:lossnetwork_dessin}
\end{figure}

More precisely, asynchronous gossip on the Refined Loss-Network communication model runs as follows:
each node has a local clock and a \emph{Poisson Point Process} of intensity $\frac{1}{2}\sum_{j\sim i} p_{ij}$, where, with $d_i$ the degree of node $i$ and $\tau_{\max}(ij)=\max_{kl\sim ij}\tau_{kl}$:
\begin{equation}
    p_{ij}=\min\left(\frac{1}{\tau_{\max}(ij)},\frac{1}{2(\max(d_i,d_j)-1)}\frac{1}{\tau_{ij}}\right).\label{eq:pppLN}
\end{equation}
Let $I=\sum_{ij\in E} p_{ij}$ the global activation intensity. Let node $i$'s local variable be $x_i(t)$ at time $t$, and let $v_i(t)=\nabla f_i(x_i(t))$ its convex-conjugate. Note $\lambda_t$ the edge-dual variable at time $t$ (we have $A\lambda_t=v(t)$). Initialize such that $v(0)=0$ (and $\lambda_0=0$).
\begin{enumerate}
    \item \textbf{ "Busy-Checking" Operation:} when clock $i$ rings at time $t$, select $j\sim i$ with probability $\frac{p_{ij}}{\sum_{k\sim i}p_{ki}}$ and check whether $j$ is busy. This operation makes $i$ busy for a timelapse of length $\eps \tau_{ij}$.
    \item \textbf{Gradient Exchange:} if neighbor $j$ (chosen at the previous step) is not busy, make both nodes busy for a time $\tau_{ij}$, and $i$ sends $\nabla f_i^*(v_i(t))$ to $j$ (and reciprocally).
    \item \textbf{Gradient Step:} when $i$ receives gradient $\nabla f_j^*(v_j)$ from $j$, it updates its local value $v_i$ using the following gradient step:
\begin{equation}\label{eq:LN_update}
    v_i(t) \xleftarrow t v_i(t) - \frac{\nabla f_i^*(v_i(t))-\nabla f_j^*(v_j(t))}{\sigma_i^{-1}+\sigma_j^{-1}}.
\end{equation}
\end{enumerate}
The desired output at node $i$ at time $t$ is then $x_i(t)=\nabla f_i^*(v_i(t))$. Note that in the gossip averaging problem, these operations are equivalent to local averagings as shown in \eqref{eq:gossip_update}.
Operations (2) and (3) both happen in the timelapse of length $\tau_{ij}$, thus causing no asynchrony issues and avoiding the need to consider delayed gradients.

\subsection{Convergence Results}

Define the following constants, where $p_{ij}$ is set as in~\eqref{eq:pppLN} and $I$ defined in \eqref{eq:def_I}:
\begin{equation}
\left\{\begin{array}{lll}
    \Tilde{\tau}_{ij}&=(1+\eps)p_{ij}^{-1}\\
     \Tilde{\tau}_{\max}&=\max_{(ij)\in E} \Tilde{\tau}_{ij}\\
     T&= \frac{2\log(6|E|)}{\log\left(1-(1-e^{-1})e^{-1}\right)}I\Tilde{\tau}_{\max}
\end{array}.\right.\label{eq:quant_LN}
\end{equation}

\noindent Define for $k\in\N$, $\mathcal{E}_k=F^*(v(t_k))-F^*(v^\star)$, where $v^\star=A \lambda^\star $ is a minimizer of $F^*$ on ${\rm Im}(A)$, $\lambda^\star $ being a minimizer of $F^*_A$ and $t_k\in\R^+$ is the time of the $k$-th activation. For $k\in \N$, let $\mathcal{L}_k$ be the following Lyapunov function:
\begin{equation}\label{eq:lyapu_LN_v}
    \mathcal{L}_k=\frac{1}{T}\sum_{l=k}^{k+T-1}\mathcal{E}_l.
\end{equation}
This choice of Lyapunov function is motivated by the fact that we want to take into account $T$ successive values of $\mathcal{E}_l$ (the dual error to the optimum), where $T$ is the typical number of activations required to have all edges activated.
Note that this Lyapunov function bears some resemblance with Lyapunov-Krasovskii functionals (see e.g. \cite{fridman2001timedelay}) used in the study of delayed differential systems, and which can be thought of as the continuous analog of $\mathcal{L}_k$, with an integral instead of a sum. We insist on the fact that considering this specific Lyapunov is a key step of our proof. 

\begin{thm}[Discrete-time rate of convergence in the Loss-Network model]\label{thm:LN}
Consider the \emph{CDM} algorithm \eqref{eq:LN_update}, with node activations according to the \emph{RLNM} with Poisson rates \eqref{eq:pppLN}. 
Let $\Gamma_{RLNM}=\gamma(\nu_{ij})$ be the spectral gap of the weigthed graph Laplacian   with weights 
\begin{equation*}
  \nu_{ij}= \alpha \times \frac{\Tilde{\tau}_{ij}^{-1}\min_{(kl)\sim (ij)}\frac{\Tilde{\tau}_{ij}}{\Tilde{\tau}_{kl}}}{Id_{\max}^2\left(\log(|E|)+\log(I\Tilde{\tau}_{\max})\right)^2} \label{LNrate},
\end{equation*}where $\alpha=\frac{32e^2}{\log(1-(1-e^{-1})e^{-1})^2}$ is a universal constant and $d_{\max}$ is the maximal degree in the graph. Then, for all $k\in \N$:
\begin{equation*}
\label{eq:lyapunov_stochasticity_LN}
    \E[\mathcal{L}_k]\leq \left(\frac{1}{4}(1-\frac{\sigma_{\min}}{L_{\max}}\Gamma_{RLNM})^{T/3}+\frac{3}{4}\right)^{\lceil \frac{k}{2T}\rceil}\E[\mathcal{L}_0].
\end{equation*}
where Lyapunov function $\mathcal{L}_k$ is defined in \eqref{eq:lyapu_LN_v}. 
\end{thm}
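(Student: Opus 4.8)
The plan is to combine the per-activation descent guarantee already available for \emph{CDM} with a covering analysis of the loss-network activation process, using the moving-average Lyapunov function $\mathcal{L}_k$ to absorb the correlations that the busy-checking mechanism introduces between successive activations. Three ingredients are needed: (i) a one-step decrease of the dual suboptimality $\mathcal{E}_l$ per activation; (ii) a lower bound, uniform over the current busy configuration, on the probability that every edge is activated within a window of $T$ activations; and (iii) gradient domination to turn the accumulated per-coordinate decreases into a multiplicative contraction governed by the effective spectral gap $\Gamma_{RLNM}$. The conceptual point is that the weights $\nu_{ij}$ are the \emph{effective} per-edge activation frequencies over a window, which is exactly why $\Gamma_{RLNM}=\gamma(\nu_{ij})$ replaces $\gamma_p$ of the P.p.p.\ model.

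First I would record monotonicity and the one-step bound. Passing to the edge-dual variables $\lambda_{t_l}$ as in the proof of Theorem~\ref{thm:ppp_standard} (with $A\lambda_{t_l}=v(t_l)$ and $F^*(v(t_l))=F_A^*(\lambda_{t_l})$), each update \eqref{eq:LN_update} is a coordinate-descent step on $F_A^*$, so $\mathcal{E}_l$ is non-increasing in $l$, and therefore so is $\mathcal{L}_k$. Moreover Lemma~\ref{lemma:smoothness} gives, when edge $(ij)$ is activated at step $l$, the decrease $\mathcal{E}_l-\mathcal{E}_{l+1}\ge \frac{1}{2\mu_{ij}^2(\sigma_i^{-1}+\sigma_j^{-1})}\NRM{\nabla_{ij}F_A^*(\lambda_{t_l})}^2$.

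Next I would prove the covering lemma, which is where the constants defining $T$ and the weights $\nu_{ij}$ originate. The busy/available state of the edges, augmented with residual busy times, is a Markov process (Figure~\ref{fig:lossnetwork_dessin}); the goal is a bound, uniform over the starting configuration, on the probability that a fixed edge $(ij)$ is activated within a sub-window of duration $\Theta(\Tilde{\tau}_{\max})$. I expect this probability to factor as (the edge attempts an activation) $\times$ (both endpoints are free during the attempt), the two factors being bounded below by $1-e^{-1}$ and $e^{-1}$ thanks to the rate choice \eqref{eq:pppLN} and the busy-checking step; the correction $\min_{(kl)\sim(ij)}\Tilde{\tau}_{ij}/\Tilde{\tau}_{kl}$ in $\nu_{ij}$ accounts for an edge being blocked by slower adjacent edges. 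Chaining $m=\Theta(\log(6|E|)/\lvert\log(1-(1-e^{-1})e^{-1})\rvert)$ such sub-windows and a union bound over the $|E|$ edges then shows that, within $T$ activations, all edges are activated with probability bounded below by a universal constant.

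Finally I would establish the per-window contraction and assemble the recursion. \textbf{The main obstacle} is that the per-step decreases $\NRM{\nabla_{ij}F_A^*(\lambda_{t_l})}^2$ are evaluated at the distinct times $t_l$ of activation, so their sum over a covering window is not literally $\NRM{\nabla F_A^*}^2$ and cannot be fed directly into gradient domination. The moving-average form of $\mathcal{L}_k$ is the device that resolves this: either the accumulated decrease $\mathcal{E}_k-\mathcal{E}_{k+T}$ is already a fixed fraction of $\mathcal{E}_k$, or the suboptimality — and hence, by smoothness, the gradient — is almost constant across the window, in which case the covered per-coordinate terms reassemble into $\NRM{\nabla F_A^*(\lambda_{t_k})}^2\ge 2\sigma_A\mathcal{E}_k$ with $\sigma_A\ge \tfrac{\sigma_{\min}}{2L_{\max}}\Gamma_{RLNM}$ (gradient domination with the effective weights, via the analogues of Lemmas~\ref{lemma:grad_domination}--\ref{lemma:sc}). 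In either case, splitting the window into three parts yields a window contraction $\mathcal{E}_{k+T}\le (1-\tfrac{\sigma_{\min}}{L_{\max}}\Gamma_{RLNM})^{T/3}\mathcal{E}_k$ on the covering event. Combining this with monotonicity on the complementary event and taking conditional expectation over a block of $2T$ activations gives $\E[\mathcal{L}_{k+2T}\mid\F_k]\le\big(\tfrac14(1-\tfrac{\sigma_{\min}}{L_{\max}}\Gamma_{RLNM})^{T/3}+\tfrac34\big)\mathcal{L}_k$, and iterating over $\lceil k/(2T)\rceil$ blocks yields the claim. The delicate points to get right are the uniformity of the covering estimate over all busy configurations — so that successive blocks compose — and the bookkeeping of the universal constants $\tfrac14,\tfrac34,T/3$ and $\alpha$, which are pinned down by the covering probability and the block/moving-average structure.
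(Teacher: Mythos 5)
Your skeleton matches the paper's: per-activation descent via Lemma~\ref{lemma:smoothness}, a queueing/covering estimate showing every edge is activated within a window of $T$ activations with probability bounded below (the paper's Lemma~\ref{lem_queue_1} is exactly your ``attempt succeeds'' $\times$ ``endpoints free'' factorization with constants $1-e^{-1}$ and $e^{-1}$), the windowed Lyapunov $\mathcal{L}_k$, and a good-event/monotonicity dichotomy iterated over blocks of $2T$. However, the step you flag as ``the main obstacle'' is precisely the heart of the proof, and your proposed resolution does not work as stated. The paper does not argue by dichotomy (``either the decrease is a fixed fraction of $\mathcal{E}_k$ or the gradient is nearly constant''). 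Instead it proves Theorem~\ref{thmdet} by a direct perturbation bound: for each $s$ in the window, $\NRM{\nabla_{ij}F_A^*(\lambda(s))}^2\le 2\NRM{\nabla_{ij}F_A^*(\lambda(r))}^2+2\NRM{\nabla_{ij}F_A^*(\lambda(s))-\nabla_{ij}F_A^*(\lambda(r))}^2$ where $r$ is the last activation of $(ij)$, and the second term is controlled via the local Lipschitz inequality \eqref{gossipnoniid1} (the coordinate gradient $\nabla_{ij}F_A^*$ depends only on adjacent edges) in terms of the steps taken by neighboring edges between $r$ and $s$; Assumption~\ref{hyp}'s bounds $T_{ij}(s)\le a\ell_{ij}$ and $N(kl,ij,s)\le\lceil b\ell_{ij}/\ell_{kl}\rceil$ then make the resulting coefficient condition \eqref{thm1last} solvable by choosing $\mu_{ij}^2\propto \ell_{ij}^{-1}\min_{kl\sim ij}(\ell_{kl}/\ell_{ij})$. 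Your dichotomy is not carried out and is circular in its second branch: ``suboptimality nearly constant'' only yields ``gradient nearly constant'' by running exactly this coordinate-wise staleness analysis, which is what produces the $d_{\max}^2$ factor and the ratio correction in $\nu_{ij}$. Relatedly, you misattribute the factor $\min_{(kl)\sim(ij)}\Tilde{\tau}_{ij}/\Tilde{\tau}_{kl}$ to blocking by slow neighbors in the covering lemma; in the paper it comes from the choice of the dual weights $\mu_{ij}^2$ needed to absorb the drift of $\nabla_{ij}F_A^*$ caused by neighboring updates, not from the activation probability.

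A second, smaller gap is in the stochastic assembly. The good events $A_t$ are only $\F_{t+T-1}$-measurable, so one cannot directly condition block by block as you suggest; the paper (Proposition~\ref{thmsto}) studies the product $P_t=\prod_s(1-\sigma\mathbb{I}_{A_s})$, uses a Markov-type inequality to show that with conditional probability at least $1/4$ at least a third of the $T$ events in a window occur (this is the origin of the constants $1/4$, $3/4$ and the exponent $T/3$, which your ``split the window into three parts'' does not reproduce), and skips every other block of length $T$ to restore measurability --- hence the $\lceil k/(2T)\rceil$. Your instinct that composition across blocks is the delicate point is right, but the mechanism is this measurability/skipping argument rather than uniformity over busy configurations.
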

Theorem~\ref{thm:LN} gives precise results in a general setting, but it may be hard to parse. In order to present results in a more concise form, we introduce the simplifying Assumption~\ref{hyp:delay_constraint}, which in particular allows to obtain an asymptotic rate of convergence for $\mathcal{E}_k$.
\begin{hyp}[Delay Constraints]\label{hyp:delay_constraint}
Let $\gamma_1=\gamma(\nu_{ij})$ for $\nu_{ij}\equiv 1,(ij)\in E$ (Definition \ref{laplacian}). Assume that:
\begin{equation}
    \frac{\Tilde{\tau}_{\max}}{\Tilde{\tau}_{\min}} \leq \frac{L_{\max}}{\sigma_{\min}}\times \frac{\alpha d_{\max}^2\log(|E|)}{\gamma_1}.\label{eq:delay_constraint}
\end{equation}
\end{hyp}
Notice that the right-hand side of \eqref{eq:delay_constraint} reflects the complexity of the optimization problem through the first factor (generally referred to as the 
condition number of the optimization problem), and the topology of the graph (without the delays) through $\gamma_1$. The more difficult the problem is, the bigger the right-hand side is. Assumption \ref{hyp:delay_constraint} will then be verified more easily for graphs with slow mixing times ($\gamma_1^{-1}$ bigger) and less regular local functions. The order of magnitude of $\gamma_1^{-1}$ is $n^2$ for the grid, and $n$ for the line or the cyclic graph. More generally, the right-hand side of \eqref{eq:delay_constraint} is always of order bigger than $n$.

\begin{cor}[Asymptotic Rate] \label{cor:LN}Under Assumption \ref{hyp:delay_constraint}, Theorem \ref{thm:LN} gives: 
\begin{equation*}
    \limsup_{k\to \infty} \frac{1}{k}\log\left(\E[\mathcal{E}_k]\right) \leq -\frac{\sigma_{\min}}{L_{\max}} \times \frac{\Gamma_{RLNM}}{24e}.
\end{equation*}
\end{cor}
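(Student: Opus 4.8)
The plan is to derive the asymptotic rate for $\E[\mathcal{E}_k]$ from the block contraction of Theorem~\ref{thm:LN} in three moves: replace $\mathcal{L}_k$ by the dual error $\mathcal{E}_k$ using monotonicity, extract the exponential rate by a logarithm-and-limit argument, and simplify the resulting rate with elementary inequalities aided by Assumption~\ref{hyp:delay_constraint}. The first move rests on the observation that $(\mathcal{E}_l)_{l\ge0}$ is non-increasing: every \emph{CDM} update~\eqref{eq:LN_update} is a coordinate-descent step on $F_A^*$, so by Lemma~\ref{lemma:smoothness} the map $t\mapsto F^*(v(t))=F_A^*(\lambda_t)$ never increases, and $\mathcal{E}_k$ is therefore the smallest term of the window $\{k-T+1,\dots,k\}$. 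Hence $\mathcal{E}_k\le\frac1T\sum_{l=k-T+1}^{k}\mathcal{E}_l=\mathcal{L}_{k-T+1}$ for $k\ge T-1$, and $\E[\mathcal{E}_k]\le\E[\mathcal{L}_{k-T+1}]$.

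Writing $q=\tfrac14(1-\tfrac{\sigma_{\min}}{L_{\max}}\Gamma_{RLNM})^{T/3}+\tfrac34\in(0,1)$, Theorem~\ref{thm:LN} gives $\E[\mathcal{L}_{k-T+1}]\le q^{\lceil(k-T+1)/(2T)\rceil}\E[\mathcal{L}_0]$. Taking logarithms, dividing by $k$ and letting $k\to\infty$ — so that $\lceil(k-T+1)/(2T)\rceil/k\to 1/(2T)$ and $\tfrac1k\log\E[\mathcal{L}_0]\to0$ — produces
\[
\limsup_{k\to\infty}\frac1k\log\E[\mathcal{E}_k]\le\frac{\log q}{2T}.
\]

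It remains to bound $\tfrac{\log q}{2T}$ by $-\tfrac{\sigma_{\min}}{L_{\max}}\tfrac{\Gamma_{RLNM}}{24e}$. Set $x=\tfrac{\sigma_{\min}}{L_{\max}}\Gamma_{RLNM}$ and $y=xT/3$. Using $\log(1-u)\le -u$ with $u=\tfrac14(1-(1-x)^{T/3})$, then $(1-x)^{T/3}\le e^{-y}$ and $1-e^{-y}\ge y e^{-y}$ (i.e.\ $e^y\ge1+y$), one obtains
\[
\frac{\log q}{2T}\le-\frac{1-(1-x)^{T/3}}{8T}\le-\frac{y e^{-y}}{8T}=-\frac{x}{24}\,e^{-y}.
\]
The clean constant $24e$ then appears provided $y\le1$, which makes $e^{-y}\ge e^{-1}$. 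This threshold $y\le1$, equivalently $\tfrac{\sigma_{\min}}{L_{\max}}\Gamma_{RLNM}\tfrac T3\le1$, is exactly what Assumption~\ref{hyp:delay_constraint} secures: the monotonicity of the weighted-Laplacian spectral gap in its edge weights (Definition~\ref{laplacian}) lets one control $\Gamma_{RLNM}=\gamma(\nu_{ij})$ through $\gamma_1$ and the delay ratio $\Tilde{\tau}_{\max}/\Tilde{\tau}_{\min}$; substituting this, together with the value of $T$, into the constraint~\eqref{eq:delay_constraint} and using the calibration of $\alpha$ collapses the product $\tfrac{\sigma_{\min}}{L_{\max}}\Gamma_{RLNM}T$ below $3$.

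I expect this last verification to be the main obstacle. All of $\Gamma_{RLNM}$, $T$, and the right-hand side of~\eqref{eq:delay_constraint} carry logarithmic factors in $|E|$ and $I\Tilde{\tau}_{\max}$ as well as the universal constant $\alpha$, so establishing $y\le1$ demands a tight spectral-gap comparison — a crude edgewise bound of the weights $\nu_{ij}$ against $\Tilde{\tau}_{\min}^{-1}$ is too lossy — and careful tracking of how the $\alpha$ appearing in the weights $\nu_{ij}$ interacts with the $\alpha$ in~\eqref{eq:delay_constraint}, so that all spurious constants cancel and the single clean factor $1/(24e)$ survives.
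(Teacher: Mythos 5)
Your proof follows the paper's own argument essentially step for step: monotonicity of $(\mathcal{E}_l)$ (via Lemma~\ref{lemma:smoothness}) to compare $\mathcal{E}_k$ with the windowed Lyapunov function --- your index-shifted inequality $\mathcal{E}_k\le\mathcal{L}_{k-T+1}$ is in fact the correct form of the paper's stated $\mathcal{E}_k\le\mathcal{L}_k$, which is backwards for a forward-looking window but asymptotically immaterial --- followed by the log-and-limit extraction of the rate and an elementary reduction to a smallness condition on $\frac{\sigma_{\min}}{L_{\max}}\Gamma_{RLNM}T$ that Assumption~\ref{hyp:delay_constraint} is invoked to supply via monotonicity of the Laplacian. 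The only divergence is quantitative: your chain of inequalities requires $\frac{\sigma_{\min}}{L_{\max}}\Gamma_{RLNM}\frac{T}{3}\le 1$ where the paper claims to need only $\frac{\sigma_{\min}}{L_{\max}}\Gamma_{RLNM}\frac{T}{12}\le 1$, and the final verification you flag as the main obstacle is exactly the step the paper itself dispatches in a one-line assertion, so your proposal is no less complete than the published proof.
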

\noindent \textbf{Comments on the convergence rate:} Theorem \ref{thm:LN} and Corollary \ref{cor:LN} are formulated in discrete time. The continuous exponential rate of convergence is obtained by multiplying by the global \emph{P.p.p.}~intensity $I$, up to a constant factor of order $1$. The factor $\frac{1}{I}$ in the definition \eqref{LNrate} of the weights $\nu_{ij}$ is hence simply a normalization factor, due to a study in discrete time.
As desired, the communication cost factor in the rate of convergence ($\Gamma_{RLNM}$) is captured by the Laplacian of the graph, weighted by \emph{local} delays, instead of $\tau_{\max}^{-1}$. We however observe slowdowns due to other factors. 
\begin{enumerate}
    \item Having $\Tilde{\tau}_{ij}$ instead of $\tau_{ij}$ (as in the $\emph{P.p.p.~model}$ \eqref{eq:asynch_gossip}) means that the effective waiting time of edge $ij$ between two activations is of order $\Tilde{\tau}_{ij}$ (defined in \eqref{eq:quant_LN}) and not $\tau_{ij}$, which was expected since $p_{ij}$ is tuned accordingly. 
    \item Adding the factor $\min_{(kl)\sim (ij)}\frac{\Tilde{\tau}_{ij}}{\Tilde{\tau}_{kl}}$ to the local weight in the Laplacian is a local slowdown: a node with a slow neighbor becomes less effective. 
    
    These first two remarks 1) and 2) suggest that by deleting some edges one could improve the rate of convergence. A similar phenomenon occurs in road-trafficking \citep{kelly1997braessparadox,steinberg1983braess}, where deleting some roads can lead to reduced congestion (\emph{Braess's paradox}).
    \item The global factor $\frac{1}{d_{\max}}$ is not intuitive at first: the more connected the graph is, the higher the rate should be. We hence have a trade-off between $\frac{1}{d_{\max}}$ that decreases when adding edges, and the smallest eigenvalue of the Laplacian of the graph $\Gamma$ that increases with connectivity. We believe that $\frac{1}{d_{\max}}$ is an artifact of the proof, but have not been able so far to remove it.
    \item If some nodes are \emph{stragglers} (\emph{i.e.}~with high delays compared to the others), the rate of convergence stated for \emph{RLNM} improves over synchronous algorithms, as it takes into account \emph{local} delays. If the delays are all of the same order of magnitude, a case favorable to synchrony, the rate obtained is the same as in synchronous algorithms, up to a factor of order $\frac{1}{d^2\log(n)}$. The factor $d^2$ should not be of too much importance in $d$-regular graphs for $d\ll n$, such as grids or lines. The $\log$ factor comes from exponential tails of our random variables. 
\end{enumerate}

\begin{remark}[Comparison with a delayed information approach] One may wonder how our model compares to a delayed information approach, in which nodes send gradients whenever they can. In the delayed information approach, delays increase the variance of the gradients, typically by a multiplicative factor $\tau$ equal to the discrete-time delay  \citep{leblond2016asaga,hannah2018a2bcd} thus requiring step sizes to be scaled by the inverse of the delays. However, the few works done in this direction rely on a global upper-bound $\tau_{\max}$ on the delays, and as such  provide slow rates in scenarios with heterogeneous local delays, compared to those achievable with our \emph{RLNM} approach. Developing delayed information schemes that are competitive in heterogeneous scenarios is an open research direction.
%An open question is whether improved delayed information schemes could be developed that would achieve competitive performance in heterogeneous scenarios, both in terms of rates and of  tuning  parameters such as step sizes. 
\end{remark}

\subsection{Sketch of Proof of Theorem~\ref{thm:LN}}
This proof follows three main steps: \emph{i)} Deriving convergence results for more general communication schemes than \emph{RLNM}, under deterministic assumptions on the delays. \emph{ii)} Adapting Step i) to stochastic assumptions on the delays. \emph{iii)} Deriving high-probability upper-bounds on the delays between two activations in \emph{RLNM} in order to fall under the assumptions of Step i).

As in the previous proofs, the analysis is done with edge-dual variable $\lambda_t\in\R^{E\times d}$, such that $A\lambda_t=v(t)$. Matrix $A$ is tuned in the detailed proof (Appendix \ref{app:LN}). When nodes $i\sim j$ exchange gradients, it is equivalent to, on edge $(ij)$:
\begin{equation}\label{eq:tmptmp}
    \lambda_t\xleftarrow t \lambda_t -\frac{1}{\mu_{ij}^2 (\sigma_i^{-1}+\sigma_j^{-1})}\nabla_{ij}F_A^*(\lambda_t).
\end{equation}
\subsubsection{Step 1: General Communication Schemes\label{section:step1}} We consider general activation processes $\cP_{ij}$. When edge $(i,j)$ is activated, the update described in \eqref{eq:LN_update} is performed at nodes $i$ and $j$. The delay of an edge is defined as its (random) waiting time between two activations.
Two ergodicity-like conditions on the delays are needed: (i) edges activated regularly enough and (ii) incident edges must not be activated too many times. We now formally introduce these assumptions. We consider discrete time in this section: more precisely, $t\in \N$ stands for the $t$-th edge activation.
\begin{defn}\label{def:quant_delays} Consider a communication scheme with edge-activation point processes $\cP_{ij}$. Let $t=0,1,2,...$ index the consecutive edge activations. Let $s\in \N$, $ij$ and $kl \in E$. Let $s_{ij}<t_{ij}$ such that $s_{ij}\leq s <t_{ij}$ be consecutive activation times (in discrete time) of $(ij)$. Denote $T_{ij}(s)=t_{ij}-s_{ij}-1$ the total number of edge activations between the two consecutive activations of $ij$. Denote $N(kl,ij,s)$ the number of activations of edge $kl$ in the activations $\{s_{ij},s_{ij}+1,...,t_{ij}-1\}$.
\end{defn}

\begin{hyp}[Delay Assumptions] \label{hyp} There exist $T\in \N^*$, $a,b>0$, and $\ell_{ij}>0,ij\in E$ such that, for the quantities and the communication scheme in Definition \ref{def:quant_delays}:
\begin{enumerate}
    \item For all $t \in \N$, all edges are activated between iterations $t$ and $t+T-1$.
    \item $\forall s\geq0, \forall (ij)\in E, T_{ij}(s)\leq a \ell_{ij}$: $(ij)$ is activated at least every $a\ell_{ij}$ activations.
    \item $\forall s\geq 0, \forall (ij),(kl)\in E$ such that $(kl)\sim(ij)$, $N(kl,ij,s)\leq \lceil \frac{b\ell_{ij}}{\ell_{kl}}\rceil$.
\end{enumerate}
\end{hyp}
Assumption (1) is implied by Assumption (2) if $T=\max_{(ij)}\ell_{ij}$. Taking $\ell_{ij}$ as a deterministic upper-bound on the delays of edge $(ij)$ between two activations in continuous time is sufficient to have Assumption (2) and (3), with some normalizing constant $a$, and $b$ such that $\ell_{ij}/b$ is a lower-bound on these delays.

The main technical difficulty lies in the fact that at a defined activation time $t$, some nodes are not available: at any time $t\ge 0$, $\sum_{(ij)\in E \text{ not busy}} \nabla_{ij} F_A^*(\lambda_t)$ usually differs from $\nabla F_A^*(\lambda_t)$ as in \emph{Markov-Chain Gradient Descent} \citep{sun2018mcgd}, thus making an analysis such as in the \emph{P.p.p. model} impossible.
To alleviate this difficulty, in order to make sure that all edges are taken into account when performing the averaging, the Lyapunov function $\Lambda_t$ that we study considers the value of the objective for $T$ consecutive activation times. It is defined as follows on the dual variable:
$\forall t\in \N, \Lambda_t=\frac{1}{T}\sum_{s=t}^{t+T-1} F_A^*(\lambda_s)-F_A^*(\lambda^\star ).$ Note that we have $\Lambda_t=\mathcal{L}_t$ for any $t\in\N$, $\mathcal{L}_t$ as in \eqref{eq:lyapu_LN_v}: we simply changed notations as we work with edge-dual variables, and time is indexed in a different way. The first step of the proof of Theorem \ref{thm:LN} consists in proving the following. A detailed proof of this can be found in Appendix \ref{app:general_activations}.
\begin{thm} %\LM{peut-on mettre un pointeur vers définition de ce edge-activation process?}.
\label{thmdet}
Consider a general communication scheme as in Definition \ref{def:quant_delays}, that satisfies Assumption \ref{hyp} for constants $\ell_{ij},a,b>0,$. At every edge-activation of edge $(ij)$, update \eqref{eq:tmptmp} is performed. Let $\gamma$ be the smallest positive eigenvalue of the Laplacian of the graph with:
\begin{equation*}\nu_{ij}=C\ell_{ij}^{-1}\min_{kl\sim ij}\frac{\ell_{kl}}{\ell_{ij}},
\end{equation*}
where $C=\frac{1}{2a+8d_{\max}^2ab}$. Then, we have, for $t\in \N$:
\begin{equation*}
    \Lambda_t\leq \left(1-\frac{\sigma_{\min}}{L_{\max}}\times\gamma\right)^t \Lambda_0.
\end{equation*}
\end{thm}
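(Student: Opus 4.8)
The plan is to reduce everything to a single uniform contraction of the Lyapunov function $\Lambda_t=\frac1T\sum_{s=t}^{t+T-1} g_s$, where I abbreviate $g_s:=F_A^*(\lambda_s)-F_A^*(\lambda^\star)\ge 0$. First I would record the per-activation decrease: if edge $(i_s,j_s)$ is activated at step $s$, then applying the local smoothness inequality of Lemma~\ref{lemma:smoothness} to the update~\eqref{eq:tmptmp} gives
\[
 g_s-g_{s+1}\ \ge\ \delta_s\ :=\ \frac{1}{2\mu_{i_sj_s}^2(\sigma_{i_s}^{-1}+\sigma_{j_s}^{-1})}\NRM{\nabla_{i_sj_s}F_A^*(\lambda_s)}^2\ \ge\ 0,
\]
so $(g_s)_s$ is non-increasing. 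The sums in $\Lambda_{t+1}$ and $\Lambda_t$ telescope, giving $\Lambda_t-\Lambda_{t+1}=\frac1T\sum_{s=t}^{t+T-1}\delta_s$, while monotonicity gives $\Lambda_t\le g_t$. Consequently the claimed bound follows by a trivial induction once I establish the \emph{window-decrease estimate} $\sum_{s=t}^{t+T-1}\delta_s\ge \frac{\sigma_{\min}}{L_{\max}}\,\gamma\,T\,g_t$, since this yields $\Lambda_t-\Lambda_{t+1}\ge \frac{\sigma_{\min}}{L_{\max}}\gamma\,g_t\ge \frac{\sigma_{\min}}{L_{\max}}\gamma\,\Lambda_t$.

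The idea behind the window estimate is that summing the coordinate steps over a full sweep reproduces, up to constants, one gradient-descent step on $F_A^*$ evaluated at the \emph{start} $\lambda_t$ of the window, rescaled by the number of activations each edge enjoys. By Assumption~\ref{hyp}(2) edge $(ij)$ recurs at least every $a\ell_{ij}$ steps, hence about $T/(a\ell_{ij})$ times inside the window; this is exactly the mechanism producing the weight $\ell_{ij}^{-1}$. Concretely, for each activation at step $s$ of an edge $(ij)$ I would compare $\nabla_{ij}F_A^*(\lambda_s)$ to $\nabla_{ij}F_A^*(\lambda_t)$ through $\NRM{\nabla_{ij}F_A^*(\lambda_t)}^2\le 2\NRM{\nabla_{ij}F_A^*(\lambda_s)}^2+2\NRM{\nabla_{ij}F_A^*(\lambda_s)-\nabla_{ij}F_A^*(\lambda_t)}^2$, the first term being exactly $\delta_s$ up to its normalisation. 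Because $\nabla_{ij}F_A^*(\lambda)=\mu_{ij}\big(\nabla f_i^*((A\lambda)_i)-\nabla f_j^*((A\lambda)_j)\big)$ depends on $\lambda$ only through the coordinates incident to $i$ or $j$, the drift term is governed purely by activations of edges $(kl)$ adjacent to $(ij)$ occurring between the relevant times, and smoothness of $F_A^*$ along the update path bounds it by a weighted sum of the corresponding decreases $\delta_{s'}$.

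I expect the global bookkeeping that turns this local control into the precise weights $\nu_{ij}=C\ell_{ij}^{-1}\min_{kl\sim ij}\frac{\ell_{kl}}{\ell_{ij}}$, with $C=(2a+8d_{\max}^2ab)^{-1}$, to be the main obstacle. Summing the split above over all edges and all their activations, the leading terms assemble into a $\nu$-weighted Laplacian quadratic form in $\nabla F_A^*(\lambda_t)$, whereas every drift contribution must be charged back to a genuine $\delta_{s'}$ without double counting. This is where Assumption~\ref{hyp}(3) is essential: a frequently-activated neighbour $(kl)$ occurs at most $\lceil b\ell_{ij}/\ell_{kl}\rceil$ times between consecutive activations of $(ij)$, producing proportionally more drift. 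After a Cauchy–Schwarz over the at most $2(d_{\max}-1)$ edges adjacent to each $(ij)$ and a re-summation exchanging the roles of $(ij)$ and $(kl)$, the drift total is bounded by (at most half of) $\sum_s\delta_s$ \emph{exactly} when the weight carries the factor $\min_{kl\sim ij}\frac{\ell_{kl}}{\ell_{ij}}$ and $C$ absorbs the $d_{\max}^2ab$ term; verifying these constants is the delicate part. The net conclusion is that $\sum_{s=t}^{t+T-1}\delta_s$ is, up to the $\mu_{ij}^2(\sigma_i^{-1}+\sigma_j^{-1})$ normalisation folded into the choice of $A$, bounded below by $T$ times the $\nu$-weighted gradient form $\sum_{(ij)}\nu_{ij}\NRM{\nabla_{ij}F_A^*(\lambda_t)}^2$.

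It then remains to convert this weighted gradient form into suboptimality, following the constant-tracking of Theorem~\ref{thm:ppp_standard}. Tuning $A$ so that $AA^\top$ is the $\nu_{ij}$-weighted Laplacian (Lemma~\ref{lemma:AAT}), the strong-convexity estimate $\sigma_A\ge \lambda^+_{\min}(AA^\top)/L_{\max}\ge \sigma_{\min}\gamma/(2L_{\max})$ of Lemmas~\ref{lemma:sc}–\ref{lemma:AAT}, combined with the gradient-domination inequality of Lemma~\ref{lemma:grad_domination}, lower-bounds the weighted gradient form by a constant multiple of $\frac{\sigma_{\min}}{L_{\max}}\gamma\,g_t$. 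Together with the activation-count factor $T$ from the previous step this delivers exactly the window-decrease estimate $\sum_{s=t}^{t+T-1}\delta_s\ge \frac{\sigma_{\min}}{L_{\max}}\gamma\,T\,g_t$, hence $\Lambda_{t+1}\le\big(1-\frac{\sigma_{\min}}{L_{\max}}\gamma\big)\Lambda_t$ for every $t$, and iterating gives the theorem.
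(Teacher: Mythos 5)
Your overall architecture matches the paper's: per-activation decrease from the local smoothness inequality (Lemma~\ref{lemma:smoothness}), a window Lyapunov argument, drift control of partial gradients via the adjacency structure of $A$ together with Assumptions~\ref{hyp}(2)--(3), and a final conversion through the $\nu$-weighted Laplacian and gradient domination (Lemmas~\ref{lemma:sc}, \ref{lemma:AAT}, \ref{lemma:grad_domination}). However, there is a genuine gap in your central ``window-decrease'' step, and it lies in the choice of anchor. You compare $\nabla_{ij}F_A^*(\lambda_s)$, for each of the roughly $T/(a\ell_{ij})$ activations $s$ of $(ij)$ in the window, back to the \emph{fixed} start-of-window point $\lambda_t$. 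The drift $\NRM{\nabla_{ij}F_A^*(\lambda_s)-\nabla_{ij}F_A^*(\lambda_t)}^2$ then aggregates the increments of \emph{all} activations of edges adjacent to $(ij)$ over the stretch $[t,s]$, which spans up to $T/(a\ell_{ij})$ inter-activation intervals of $(ij)$. Assumption~\ref{hyp}(3) only bounds the number of adjacent activations \emph{per} inter-activation interval, so after the $\NRM{x_1+\dots+x_n}^2\le n\sum\NRM{x_k}^2$ step the charge placed on a single decrease $\delta_u$ grows like the square of the number of intervals between $u$ and the end of the window. The resulting constant in front of $\sum_s\delta_s$ therefore degrades with $T$ (roughly as $(T/(a\ell_{ij}))^2$), whereas the theorem claims $C=\frac{1}{2a+8d_{\max}^2ab}$ independent of $T$. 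Conversely, if you anchor only at the first activation of each edge to keep the drift local, you lose the multiplicity $T/(a\ell_{ij})$ and end up with $\Lambda_t-\Lambda_{t+1}\gtrsim \frac{1}{T}\gamma\,\Lambda_t$, a factor $T$ short of the claim. There is no martingale cancellation available to rescue the long-range comparison, since the statement is deterministic.

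The paper resolves this dilemma by reversing the direction of the comparison: for \emph{every} index $s$ in the window and \emph{every} edge $(ij)$ (activated at $s$ or not), it bounds $\NRM{\nabla_{ij}F_A^*(\lambda_s)}^2$ by twice the gradient at the \emph{most recent} activation $r$ of $(ij)$ plus a drift confined to the single interval $[r,s]$ of length at most $a\ell_{ij}$, where Assumptions~\ref{hyp}(2)--(3) apply verbatim. Gradient domination is then invoked at each $\lambda_s$ separately, so the factor $T$ comes from summing $\NRM{\nabla F_A^*(\lambda_s)}^2\ge 2\sigma_A\,(F_A^*(\lambda_s)-F_A^*(\lambda^\star))$ over the $T$ time indices (yielding $T\Lambda_t$ directly, rather than $T g_t$), while each activated gradient and each drift contribution is charged only $O(a\ell_{ij})$ and $O(d_{ij}a\ell_{kl}\lceil b\ell_{kl}/\ell_{ij}\rceil)$ times respectively --- all independent of $T$. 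You would need to restructure your Step 2 along these lines; as written, the constants in your scheme do not close. (A minor additional slip: your telescoping identity $\Lambda_t-\Lambda_{t+1}=\frac1T\sum_s\delta_s$ should be an inequality $\ge$, since Lemma~\ref{lemma:smoothness} only provides an upper bound on the post-update value; this does not affect the argument's direction.)
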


\subsubsection{Step 2: Introducing Stochasticity\label{section:step2}} Theorem \ref{thmdet} cannot be applied directly to \emph{RLNM} since we have unbounded delays. Yet, Theorem~\ref{thmdet} can be adapted to hold with relaxed assumptions: the conditions on the delays may only hold with some (not too low) probability instead of almost surely. More precisely, we prove the following in Appendix \ref{app:LN_sto}.
\begin{prop}[Adding Stochasticity ]\label{thmsto} Assume that, for all $t\in \N$, there exists a $\F_{t+T-1}$-measurable event $A_t$, such that $\P(A_t|\F_t)\geq \frac{1}{2}$ almost surely, and that under $A_t$, Assumption \ref{hyp} holds  for $t\leq s \leq t+T-1$. Then, we have the following bound on $L_t$, : 
\begin{equation*}
\label{eq:lyapunov_stochasticity}
    \E[\Lambda_t]\leq \left(\frac{1}{4}(1-\frac{\sigma_{\min}}{L_{\max}}\gamma)^{T/3}+\frac{3}{4}\right)^{\lceil \frac{t}{2T}\rceil}\E[\Lambda_0].
\end{equation*}
\end{prop}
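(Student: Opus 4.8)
The plan is to combine two deterministic facts --- a \emph{monotonicity} of the Lyapunov function and a \emph{one-window contraction} inherited from Theorem~\ref{thmdet} --- and then to average them against the good events $A_t$ via the tower property. Throughout write $\Lambda_t=\frac1T\sum_{s=t}^{t+T-1}\mathcal{E}_s$ (recall $\Lambda_t=\mathcal{L}_t$) and $\beta=(1-\tfrac{\sigma_{\min}}{L_{\max}}\gamma)^{T/3}$.

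First I would record monotonicity. Each activation performs the coordinate step~\eqref{eq:tmptmp}, which by the descent Lemma~\ref{lemma:smoothness} can only decrease $F_A^*$; hence $s\mapsto\mathcal{E}_s=F_A^*(\lambda_s)-F_A^*(\lambda^\star)$ is non-increasing, and so is the sliding-window average $\Lambda_t$. In particular $\Lambda_{t'}\le\Lambda_t$ for every $t'\ge t$, with no probabilistic assumption. This is exactly what lets us lose nothing on the complement of a good event.

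Second, I would extract a one-window contraction from the proof of Theorem~\ref{thmdet}. That proof invokes Assumption~\ref{hyp} only through the activations inside a single window of length $T$; therefore, whenever Assumption~\ref{hyp} holds on $[t,t+T-1]$ --- which is precisely what $A_t$ provides --- the same computation run on that window alone yields $\Lambda_{t+T}\le\beta\,\Lambda_t$ on $A_t$. The natural route is to contract $\Lambda_t$ into the single later error $\mathcal{E}_{t+T}$, which is $\F_{t+T-1}$-measurable and hence controlled by $A_t$, and then to use monotonicity $\Lambda_{t+T}\le\mathcal{E}_{t+T}$. The exponent $T/3$ rather than $T$ reflects the loss incurred because the window defining $\Lambda_{t+T}$ reaches beyond $[t,t+T-1]$, so only a fixed fraction of the $T$ steps can be charged against the controlled window.

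Finally I would chain these probabilistically. Conditioning on $\F_t$, on $A_t$ we use the contraction and on $A_t^c$ the monotonicity, giving the pointwise bound $\Lambda_{t+T}\le\beta\Lambda_t\mathbf{1}_{A_t}+\Lambda_t\mathbf{1}_{A_t^c}$. Since $\Lambda_t\ge0$ is $\F_t$-measurable, $\mathbf{1}_{A_t}$ is $\F_{t+T-1}$-measurable, and $\P(A_t\mid\F_t)\ge\frac12$, taking $\E[\cdot\mid\F_t]$ gives $\E[\Lambda_{t+T}\mid\F_t]\le\frac{1+\beta}{2}\Lambda_t$. Applying the tower property over two consecutive windows together with the elementary inequality $(\frac{1+\beta}{2})^2\le\frac14\beta+\frac34$ (valid for $0\le\beta\le1$) yields the per-block contraction $\E[\Lambda_{t+2T}\mid\F_t]\le(\frac14\beta+\frac34)\Lambda_t$; iterating over the $\lceil t/(2T)\rceil$ blocks of length $2T$ and using monotonicity to truncate at $t$ gives the claimed bound. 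The main obstacle is the second step: because $\Lambda$ is a sliding-window average whose horizon spills past the window controlled by $A_t$, one cannot directly contract $\Lambda_{t+T}$ against $\Lambda_t$ using the full $T$ steps, and reconciling this overlap with the fact that the $A_t$ are only $\F_{t+T-1}$-measurable and merely conditionally likely (not independent) is what forces both the $2T$ block structure and the degraded exponent; the constants $\tfrac14,\tfrac34,T/3$ must be tracked carefully against the detailed computation of Theorem~\ref{thmdet}.
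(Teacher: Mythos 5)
Your first and third steps (monotonicity of $\Lambda$ on $A_t^c$, and the tower-property chaining over blocks) are sound and close in spirit to what the paper does. The genuine gap is your second step: the ``one-window contraction'' $\Lambda_{t+T}\le (1-\tfrac{\sigma_{\min}}{L_{\max}}\gamma)^{T/3}\,\Lambda_t$ on $A_t$ is asserted, not proved, and the justification offered for the exponent $T/3$ (``only a fixed fraction of the $T$ steps can be charged against the controlled window'') does not correspond to any derivation. What a single event $A_t$ actually buys, by rerunning the argument of Theorem~\ref{thmdet} on the window $[t,t+T-1]$, is \emph{one} contraction factor, $\Lambda_{t+1}\le(1-\tfrac{\sigma_{\min}}{L_{\max}}\gamma)\Lambda_t$: the descent from $\Lambda_t$ to $\Lambda_{t+1}$ already consumes all $T$ activations of that window, because the gradient-domination step needs the full gradient to be reconstructed from the edges activated across the whole window. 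To accumulate many factors of $(1-\tfrac{\sigma_{\min}}{L_{\max}}\gamma)$ you need many of the events $A_s$, $s=t,\dots,t+T-1$, to hold --- your proposal only ever invokes $A_t$ for $t$ a multiple of $T$ (or $2T$) and discards the hypothesis $\P(A_s\mid\F_s)\ge\tfrac12$ for all the intermediate $s$. Even granting the most favorable single-window estimate one can extract (chaining the $T$ descent steps against $\mathcal{E}_{t+T}$ gives at best a factor of order $(1+c\sigma T)^{-1}$, which saturates), you cannot reach $(1-\sigma)^{T/3}$ when $\sigma T$ is large, so the stated bound is not recoverable along your route in general.

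The paper's mechanism for the constants $\tfrac14,\tfrac34,T/3$ is different and is the missing idea: writing the per-activation bound $\Lambda_{s+1}\le(1-\sigma\,\mathbb{I}_{A_s})\Lambda_s$ and hence $\E[\Lambda_t]\le\E\bigl[\prod_{s<t}(1-\sigma\mathbb{I}_{A_s})\,\Lambda_0\bigr]$, one controls the product over a window by applying Markov's inequality to $X_t=\tfrac1T\sum_{s=t}^{t+T-1}\mathbb{I}_{A_s}$: since $\E[X_t\mid\F_t]\ge\tfrac12$ and $X_t\le 1$, one gets $\P(X_t\ge\tfrac13\mid\F_t)\ge\tfrac14$, whence $\E[\prod_{s=t}^{t+T-1}(1-\sigma\mathbb{I}_{A_s})\mid\F_t]\le\tfrac14(1-\sigma)^{T/3}+\tfrac34$. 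The $2T$ block structure then arises for a reason your write-up only gestures at: each $\mathbb{I}_{A_s}$ is only $\F_{s+T-1}$-measurable, so the product over $[t,t+T-1]$ is $\F_{t+2T-1}$-measurable and one must sacrifice (bound by $1$) every other window before conditioning. Your algebraic identity $(\tfrac{1+\beta}{2})^2\le\tfrac14\beta+\tfrac34$ reproduces the right shape of the answer but is reverse-engineered from the statement rather than derived from the hypotheses.
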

This proposition enables us to apply Theorem \ref{thmdet} to stochastic communication schemes that have unbounded yet stochastically controlled delays. This result and its proof are thus of independent interest: it encompasses more general communication schemes than \emph{RLNM}. Furthermore, the methodology of this deterministic to stochastic conversion could be applied more generally to other problems.
%\hadrien{Faire quelques phrases sur le fait que c'est très bien et que ça permet d'être très réaliste, car souvent les analyses sont faites en pire cas alors que là on prend en compte le stochastique, et qu'on pense que cette conversion deterministic => unbounded delays is of independent interest and could be applied more generally.}

\subsubsection{Step 3: Controlling Inactivation Times in \emph{RLNM($\eps$)}\label{section:step3}} After studying general deterministic (Section \ref{section:step1}) and then stochastic communication schemes (Section \ref{section:step2}), we place ourselves back in the \emph{RLNM($\eps$)} model. The following lemma controls how long a given edge can remain inactive in our model, which is a key step of our analysis. Indeed, it allows us to specify the constants $\ell_{ij},T,a$, and $b$ from Assumption \ref{hyp} such that Proposition \ref{thmsto} can be applied.
\begin{lem}\label{lem_queue_1}
For any $t_0\geq 0$, $ij\in E$, if the \emph{Poisson} intensities are such that $p_{ij}=\frac{1}{2\max(d_i,d_j)-1}((1+\eps)\tau_{ij})^{-1}$ and $\tau_{max}(ij)=\max_{kl\sim ij}\tau_{kl}$, let:
\begin{equation*}
    \ell_{ij}=\frac{\log(\delta^{-1})}{\log(1-(1-e^{-1})e^{-1})}(p_{ij}^{-1}+\tau_{max}(ij))(1+\eps)
\end{equation*} for any $\delta\in (0,1)$. We have:
\begin{equation}
    \P(ij\text{ not activated in $[t_0,t_0+\ell_{ij}]$}|\F_{t_0})\leq \delta.
\end{equation}
\end{lem}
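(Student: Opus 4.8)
The plan is to turn the statement into a geometric-tail estimate: partition $[t_0,t_0+\ell_{ij}]$ into consecutive windows and show that in each window edge $(ij)$ is activated with conditional probability at least a universal constant $p^\star=(1-e^{-1})e^{-1}$, \emph{regardless} of the configuration at the window's start. Since the \emph{RLNM($\eps$)} dynamics become a time-homogeneous Markov process once one augments the state with the residual busy-times of the nodes (as noted in Section~\ref{LNpresentation}), such a uniform per-window bound can be chained through the Markov property. Concretely, I would take windows of length $L=(p_{ij}^{-1}+\tau_{\max}(ij))(1+\eps)$ and use $k=\lceil \log(\delta^{-1})/\log((1-p^\star)^{-1})\rceil$ of them, which fit inside $[t_0,t_0+\ell_{ij}]$ by the definition of $\ell_{ij}$ (the denominator $\log(1-(1-e^{-1})e^{-1})$ there is $-\log((1-p^\star)^{-1})$). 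If each window activates $(ij)$ with conditional probability at least $p^\star$ given the past, then $\P(ij\text{ not activated in }[t_0,t_0+\ell_{ij}]\mid\F_{t_0})\le (1-p^\star)^{k}\le\delta$, which is the claim.

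The heart of the argument is the per-window bound $p^\star=(1-e^{-1})e^{-1}$, whose two factors come from two complementary sub-events. I would split the window into a \emph{clearing} part of length $\tau_{\max}(ij)(1+\eps)$ and an \emph{attempt} part of length $p_{ij}^{-1}(1+\eps)$. On the attempt side, the process of \emph{attempts} on $(ij)$ --- an endpoint's clock ringing and selecting the other endpoint --- is a thinning of the two endpoints' Poisson clocks (each of rate $\frac12\sum_{k\sim i}p_{ik}$, thinned by the selection probability $p_{ij}/\sum_{k\sim i}p_{ik}$) with total rate $\frac12 p_{ij}+\frac12 p_{ij}=p_{ij}$, so over an interval of length $p_{ij}^{-1}$ its expected count is at least $1$ and the probability of at least one attempt is at least $1-e^{-1}$. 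On the free-ness side, every busy period lasts at most $(1+\eps)\tau_{\max}(ij)$, so after the clearing part any operation in progress at the window's start has terminated; moreover the tuning of the rates is exactly what keeps the load on a node below one, since $p_{ij}(1+\eps)\tau_{ij}=\frac{1}{2\max(d_i,d_j)-1}$ and hence $\sum_{k\sim j}p_{jk}(1+\eps)\tau_{jk}\le \sum_{k\sim j}\frac{1}{2d_j-1}=\frac{d_j}{2d_j-1}\le 1$. A Poisson no-event estimate with this load then lower-bounds by $e^{-1}$ the probability that the relevant endpoint is free at the attempt instant, giving the second factor.

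The step I expect to be the main obstacle is the decoupling of these two sub-events together with the treatment of an arbitrary starting configuration. The busy/free status of an endpoint is driven by the very clocks that generate the attempts and by the neighbours' clocks, so ``an attempt occurs'' and ``the target is free'' are not independent; furthermore at a window's start an endpoint may already be busy, with residual time up to $(1+\eps)\tau_{\max}(ij)$. The careful part is therefore to build the favourable scenario on \emph{disjoint} portions of the independent driving Poisson processes --- the clocks responsible for clearing and for re-busying the target during the clearing part, and the attempt clocks during the attempt part --- or, equivalently, to invoke a worst-case (all-nodes-busy) stochastic domination so that the bound $p^\star$ holds uniformly in the configuration at the window's start. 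Once this conditional-independence/domination is established, multiplying the two factors yields $(1-e^{-1})\cdot e^{-1}=p^\star$, the two sub-windows sum to $L=(p_{ij}^{-1}+\tau_{\max}(ij))(1+\eps)$, and the geometric chaining described above finishes the proof.
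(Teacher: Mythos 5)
Your architecture coincides with the paper's: windows of length $(1+\eps)\tau_{\max}(ij)+p_{ij}^{-1}$ consisting of a clearing phase followed by an attempt phase, a per-window success probability of $(1-e^{-1})\cdot e^{-1}$ uniform in the starting configuration, and geometric chaining through the Markov property to reach $(1-(1-e^{-1})e^{-1})^k\le\delta$. Your attempt-phase computation (thinning the two endpoint clocks into a rate-$p_{ij}$ process, hence probability $1-e^{-1}$ of at least one ring over a length $p_{ij}^{-1}$) is exactly the paper's.

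The genuine gap is the step you yourself flag as the main obstacle: you never produce the bound $\P(ij\text{ available at the attempt instant}\mid\F_{t_0})\ge e^{-1}$, you only name two strategies that might yield it. The paper's device is a queueing domination that you should make explicit: define $N_{ij}(t)$ as an $M/GI/\infty$-type queue into which \emph{every} tick of the raw Poisson clock of \emph{every} edge $kl\sim ij$ (whether or not that tick results in an actual activation) injects a customer with deterministic service time $(1+\eps)\tau_{kl}$, plus one initial customer if $ij$ is unavailable at $t_0$. Since actual busy periods are a subset of these fictitious ones, $\{N_{ij}(t)=0\}$ implies $ij$ is available, and the arrival stream of the dominating queue is just the raw clocks, hence independent of the busy/free history — this is precisely the worst-case domination you ask for. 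For $t\ge t_0+(1+\eps)\tau_{\max}(ij)$ the initial customer has departed and $N_{ij}(t)$ is Poisson with mean equal to the offered load $\sum_{kl\sim ij}p_{kl}(1+\eps)\tau_{kl}$, so availability holds with probability at least $\exp(-\sum_{kl\sim ij}p_{kl}(1+\eps)\tau_{kl})\ge e^{-1}$ under the stated tuning. Note also that your load computation $\sum_{k\sim j}p_{jk}(1+\eps)\tau_{jk}\le d_j/(2d_j-1)$ only accounts for the target endpoint $j$ being free; in the \emph{RLNM} an activation of $(ij)$ also requires the originating node to be non-busy at its own tick, so the relevant load must be summed over all edges adjacent to $(ij)$, i.e.\ incident to either $i$ or $j$ — this is exactly what the paper's $N_{ij}$ counts, and it is the reason the rate tuning involves $\max(d_i,d_j)$ rather than a single degree.
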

\begin{proof}[Proof of Lemma \ref{lem_queue_1}]Let $ij\in E$ and $t_0\geq 0$ fixed. We use tools from queuing theory \citep{Tanner1995queuing} ($M/M/\infty/\infty$ queues) in order to compute the probability that edge $ij$ is activable at a time $t$ or not. More formally, we define a process $N_{ij}(t)$ with values in $\N$, such that $N_{ij}(t_0)=1$ if $ij$ non-available at time $t_0$ and $0$ otherwise. Then, when an edge $kl$ such that $kl\sim ij$ is activated, we make an increment of $1$ on $N_{ij}(t)$ (a \emph{customer} arrives). This customer stays for a time $\tau_{kl}(1+\eps)$ and when he leaves, $N_{ij}$ is decreased by $1$. Thus $N_{ij}\geq 0$ a.s., and if $N_{ij}=0$, then edge $ij$ is available. For $t\geq \max_{kl\sim ij} \tau_{kl}(1+\eps) +t_0$, $N_{ij}(t)$ follows a Poisson law of parameter $\sum_{kl\sim ij}p_{kl}\tau_{kl}(1+\eps)$.  For any $t\geq \max_{kl\sim ij} \tau_{kl}(1+\eps) +t_0$:
\begin{equation*}
    \P(ij\text{ available at time }t|\F_{t_0})\geq \P(N_i(t)= 0)=\exp(-\sum_{kl\sim ij}p_{kl}\tau_{kl}(1+\eps)).
\end{equation*}
That leads to taking $p_{kl}=\frac{1}{2}\frac{1}{\max(d_k,d_l)-1}((1+\eps)\tau_{kl})^{-1}$ for all edges, in order to have $$\P(ij\text{ available at time }t|\F_{t_0})\geq 1/e.$$ Then, $\P(ij \text{ rings in } [t,t+p_{ij}^{-1}])=1-e^{-1}$, giving:
\begin{align*}
    \P&(ij \text{ activated in }[t_0,t_0+(1+\eps)\tau_{\max}(ij)+p_{ij}^{-1}]|\F_{t_0})= \P(ij \text{ rings in } [t,t+p_{ij}^{-1}])\\
    &\times \P(ij \text{ available at time }t|\F_{t_0},\text{$ij$ rings at a time } t\in [t_0+(1+\eps)\tau_{\max}(ij),t_0+(1+\eps)\tau_{\max}(ij)+p_{ij}^{-1}])\\
    & \geq (1-e^{-1})e^{-1},
\end{align*}
where we use the memoriless property of exponential random variables. Take $k\in \N$ such that $(1-(1-e^{-1})e^{-1})^k\leq \delta$, leading to $k= \log(6|E|)/\log(1-(1-e^{-1})e^{-1})$. Let $$\ell_{ij}=k(p_{ij}^{-1}+\tau_{max}(ij)(1+\eps)).$$ Then we have a.s.:
\begin{equation}
    \P(ij\text{ not activated in $[t_0,t_0+\ell_{ij}]$}|\F_{t_0})\leq \delta.
\end{equation}
\end{proof}
\noindent We then use this lemma in Appendix \ref{app:LN_tuning} in order to tune the constants of Assumption \ref{hyp} for \emph{RLNM}.

\subsection{Empirical Results}
\noindent  The results in Figure \ref{fig:lossnetwork} correspond to the Loss-Network scheme on the same two heterogeneous graphs (50-node cycle and 225-node 2D-grid) as in Figure~\ref{fig:ppp}. We compare our algorithm on the Loss-Network to synchronous gossip. Time is indexed in a continuous way. Synchronous iterations are done every $100$ units of time. The speed-up is significant when the fluctuation in term of delays in the graph is high, which illustrates the discussion at the end of Section \ref{section:formul_def}.
\begin{figure*}[h!]
\subfigure[2D-Grid with 225 nodes]{
   \includegraphics[width=0.43\linewidth]{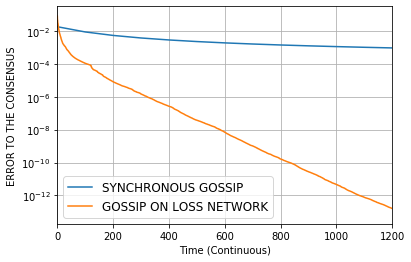}
}
\hfill
\subfigure[Cyclic graph with 50 nodes]{
    \includegraphics[width=0.43\linewidth]{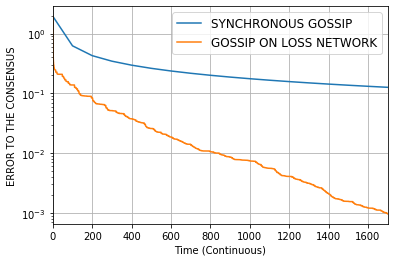}
}
\vspace{-8pt}
\caption{Asynchronous Speed-Up: Classical synchronous gossip (Appendix \ref{app:synch_gossip}) VS Gossip on \emph{RLNM}.
}
\label{fig:lossnetwork}
\end{figure*}  

\noindent \textbf{Acceleration in \emph{RLNM($\eps$)}:} The analysis of $\emph{CACDM}$ does not extend to more general models than the \emph{P.p.p.~model}. However, applying it to \emph{RLNM} leads to an accelerated rate of convergence displayed in Figure \ref{fig_CACDM_LN}, showing that our algorithm is quite robust to changes in edge activation statistics. In order to tune the algorithm, we take values $p_{ij}$ as in \eqref{eq:pppLN}. Time is indexed in a continuous way. 1000 units of time hence correspond to approximately $I\times 1000\approx 10^5-10^6$ edge activations. 
\begin{figure*}[h!]
\subfigure[2D-Grid with 225 nodes]{
   \includegraphics[width=0.43\linewidth]{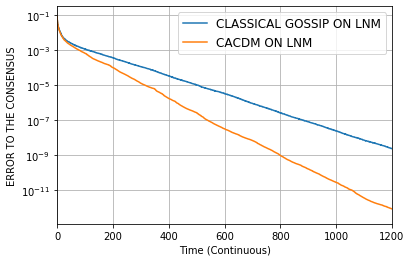}
}
\hfill
\subfigure[Cyclic graph with 50 nodes]{
    \includegraphics[width=0.43\linewidth]{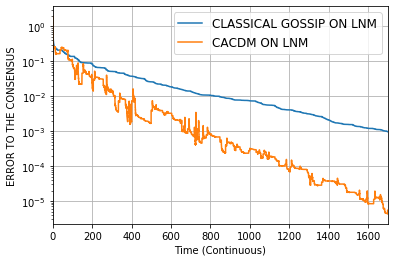}
}
\vspace{-8pt}
\caption{\emph{CACDM} vs Gossip in  \emph{RLNM}.
}
\label{fig_CACDM_LN}
\end{figure*}  

\section{Conclusion}

We studied asynchronous gossip algorithms in two frameworks: the popular \emph{P.p.p.~model} and  the \emph{refined loss network} model, a contribution of this paper. For the simple \emph{P.p.p.~model} of asynchronous operations we developed a novel analysis in continuous time of gradient descent which then enabled us to propose \emph{CACDM}, a provably accelerated version of classical randomized gossip. \emph{RLNM}, our refined model of asynchronous communications, provides a more realistic model of asynchrony than the P.p.p. model, as well as a framework that avoids the need to rely on delayed information. 
%more complex model of Adding more complexity to the communication scheme (\emph{RLNM}) in order to have a theory that describes reality in a finer way, while leading to some technical difficulties,
We obtained convergence rate guarantees for the \emph{CDM} scheme under this model, that highlight the role of quantities such as local effective delays, local differences of delays, and node degrees. An interesting open question is whether our established rates of convergence enjoy some form of optimality, or how fundamental the local effective delays we identified, and the spectral gap of the associated weighted graph Laplacian, are intrinsic bottlenecks for the performance of asynchronous distributed optimization. 
We believe that both our main contributions (\emph{CACDM} and \emph{RLNM}) pave the way for fast asynchronous gossip algorithms with theoretical guarantees. 
%Yet, we leave the theory of acceleration in the loss network model as a hard but interesting open problem.

\nocite{*}
\bibliographystyle{apalike} %plainnat après mais à changer !!!!
\bibliography{biblio}
\renewcommand*\contentsname{Summary of the Article and of the Appendix}

\tableofcontents
\appendix
\section{Gossip Algorithms: General Considerations on the Averaging Problem\label{app:gossip}}

\subsection{Synchronous Gossip\label{app:synch_gossip}}
In the synchronous setting, all nodes are allowed to share a common clock, which enables them to perform operations synchronously. Formally, a \emph{gossip matrix} is defined as follows:
\begin{defn}[Gossip Matrix] A gossip matrix is a matrix $W\in \R^{n\times n}$ such that:
\begin{itemize}
    \item $\forall (i,j)\in [n]^2$, $W_{i,j}>0\implies i\sim j$ or $i=j$ (supported by $G$),
    \item $\forall i \in [n], \sum_{j\sim i} W_{i,j}= 1$ (stochastic),
    \item $\forall (i,j)\in [n]^2, W_{i,j}=W_{j,i}$ (symmetric).
\end{itemize}
\end{defn}
Iteratively, at times $t=0,1,2,...$, if $x(t)=(x_i(t))_i\in \R^{n\times d}$ describes the information stacked locally at each node ($x_i(t)$ being the vector at node $i$), we perform the operation $x(t+1)=Wx(t)$. It is to be noted that, thanks to the sparsity of the gossip matrix, this operation is local: for all node $i$,
\begin{equation}
    x_i(t+1)=\sum_{j\sim i}W_{ij}x_j(t), 
\end{equation}where $i\sim j$ if they are neighbors or if $i=j$. The convergence bound will be stated below. Intuitively, at each iteration, each node $i$ sends a proportion of its mass to each one of its neighbour, the condition $\sum_{j\sim i} W_{ij}=1$ being the mass conservation.
\begin{prop}[Synchronous Gossip] Let $\gamma_W$ be the eigengap of the laplacian of $G$ weighted by $1-W_{ij}$ at each edge. Then, for all $k=0,1,2...$:
\begin{equation}
    \NRM{x(k)-\Bar{c}}\leq (1-\gamma_W)^k\|c-\Bar{c}\|,
\end{equation} 
where $x(0)=c$, and $\Bar{c}$ is when consensus is reached
\end{prop}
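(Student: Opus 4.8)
The plan is to exploit the linear, time-invariant structure of the recursion $x(t+1)=Wx(t)$ together with the spectral theory of the symmetric matrix $W$. Since $W$ multiplies on the left (acting only on the node index) and identically on every coordinate, the Frobenius norm $\NRM{x(t)-\bar c}^2$ splits as a sum over the $d$ columns, each evolving in $\R^n$ under $v\mapsto Wv$; it therefore suffices to analyze a single column and then sum. First I would collect the three consequences of the gossip-matrix axioms: $W$ is symmetric (hence orthogonally diagonalizable with real eigenvalues), stochasticity gives $W\mathbb{I}=\mathbb{I}$, and symmetry upgrades this to $\mathbb{I}^\top W=\mathbb{I}^\top$, so $W$ is doubly stochastic. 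The latter shows that the average $\mathbb{I}^\top x(t)$ is conserved, which identifies the limit: $\bar c=\frac{1}{n}\mathbb{I}\mathbb{I}^\top c$ is the consensus value and a fixed point, $W\bar c=\bar c$.

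Next I would pass to the error $e(t)=x(t)-\bar c$. Since $W\bar c=\bar c$, the error satisfies the same recursion $e(t+1)=We(t)$, so $e(t)=W^t e(0)=W^t(c-\bar c)$. Conservation of the mean gives (column-wise) $\mathbb{I}^\top e(t)=0$, i.e. $e(t)\in\mathbb{I}^\perp$, a subspace left invariant by $W$ (if $v\perp\mathbb{I}$ then $\mathbb{I}^\top Wv=\mathbb{I}^\top v=0$). Diagonalizing $W$ on $\mathbb{I}^\perp$ in an orthonormal eigenbasis with eigenvalues $\lambda_2\ge\cdots\ge\lambda_n$, Parseval's identity yields $\NRM{e(t)}\le\rho^t\NRM{e(0)}$ with $\rho=\max_{2\le i\le n}|\lambda_i|$; summing over the columns restores the same inequality for the matrix iterate $x(t)$.

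It remains to identify the contraction factor with $1-\gamma_W$. The key observation is that $I-W$ is symmetric, has zero row sums, and nonpositive off-diagonal entries $-W_{ij}$; by Definition~\ref{laplacian} it is exactly the graph Laplacian with edge weights $\nu_{ij}=W_{ij}$, whose smallest positive eigenvalue is $\gamma_W$. Hence the top non-trivial eigenvalue of $W$ is $\lambda_2=1-\gamma_W$. The main obstacle is the bottom of the spectrum: the claimed rate requires $\rho\le 1-\gamma_W$, i.e. $|\lambda_n|\le 1-\gamma_W$, which the axioms alone do not force (a nearly bipartite $W$ drives $\lambda_n$ toward $-1$). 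I would remove this difficulty with the standard assumption that $W$ is positive semidefinite — as in lazy gossip $W=\frac{1}{2}(I+W')$ — so that $\lambda_n\ge 0$ and therefore $\rho=\lambda_2=1-\gamma_W$. Combining this with the contraction of the previous paragraph gives $\NRM{x(k)-\bar c}\le (1-\gamma_W)^k\NRM{c-\bar c}$, as claimed.
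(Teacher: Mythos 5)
Your proof is correct and follows essentially the same route as the paper's: the error obeys $e(k+1)=We(k)$, stays in $\mathbb{I}^\perp$, and contracts by the second-largest eigenvalue $\lambda_2(W)=1-\gamma_W$, identified by recognizing $I-W$ as a weighted graph Laplacian. You are in fact more careful than the paper on one point: the paper's proof silently bounds the contraction factor by $\lambda_2(W)$, which requires $|\lambda_n(W)|\le \lambda_2(W)$, and your explicit positive-semidefiniteness (lazy gossip) assumption is exactly what is needed to rule out large negative eigenvalues and close that gap.
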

\begin{proof} For $k\geq 0$,
\begin{align*}
    &x(k+1)-\Bar{c}=W(x(k)-\Bar{c})\\
    &\implies \NRM{x(k+1)-\Bar{c}}\leq \lambda_2(W)\NRM{x(k)-\Bar{c}},
\end{align*}
where $\lambda_2$ is the second largest eigenvalue of $W$, $1$ being the largest ($W$ is stochastic symmetric), and $\Bar{c}$ being in the corresponding eigenspace. We conclude by saying that $\lambda_2(W)=1-\gamma_W$ where $\gamma_W$ is the smallest non null eigenvalue of $Id-W$. Notice that $Id-W$ is the laplacian of the graph weighted by $\nu_{ij}=1-W_{ij}$.
\end{proof}
\noindent Then, since every iteration takes a time $\tau_{max}$, denoting time in a continuous way by $t\in \R^+$, we have:
\begin{equation}
    \NRM{x(t)-\Bar{c}}\leq (1-\gamma_W)^{t/\tau_{max}-1}\leq \exp\left(-\frac{\gamma_W}{\tau_{max}}(t-\tau_{max})\right),
\end{equation}
and $\gamma_W/\tau_{\max}\le \gamma_{synch} $ where $\gamma_{synch}$ is the smallest non-null eigenvalue of the laplacian of the graph with weights $\nu_{ij}=\tau_{\max}$.
\subsection{Asynchronous Gossip\label{app:asynch_gossip}} Time is indexed in a continuous way, by $\R^+$. For every edge $e=(ij)\in E$, let $\mathcal{P}_{ij}$ be a Poisson point process (P.p.p.) of constant intensity $p_{ij}>0$ that we will call "clocks", all independent from each other. Updates will be ruled by these processes: at every clock ticking of $\mathcal{P}_{ij}$, nodes $i$ and $j$ update the value they stack by the mean $\frac{x_i+x_j}{2}$. If we write $\mathcal{P}=\bigcup_{ij\in E} \mathcal{P}_{ij}$, $\mathcal{P}$ is a P.p.p. of intensity $I:=\sum_{ij\in E}p_{ij}$. 
\begin{prop}[Asynchronous Continuous Time Bound] Let $(x_t(i))_i$ be the vector stacked on the graph, and $\Bar{c}=(\frac{1}{n}\sum_i c_i,...,\frac{1}{n}\sum_i c_i)^\top$ the consensus, where $c_i=x_i(0)$. Let $\sigma_{asynch}$ be the smallest non null eigenvalue of the laplacian of the graph, weighted by the $p_{ij}$'s. For $t\geq0$, we have: $$\E[\NRM{x(t)-\Bar{c}}^2]\leq \exp(-t\sigma_{asynch})\NRM{c-\Bar{c}}^2.$$
\end{prop}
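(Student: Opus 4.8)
The plan is to run the same infinitesimal-interval analysis used for Theorem~\ref{thm:ppp_standard}, specialized to the quadratic averaging problem where everything is explicit. First I would record the structural invariant: the pairwise update $(x_i,x_j)\mapsto\big(\frac{x_i+x_j}{2},\frac{x_i+x_j}{2}\big)$ preserves the sum $\sum_i x_i$, so the global mean $\Bar{c}$ is a fixed point of the dynamics and the error $\eta(t):=x(t)-\Bar{c}$ (the matrix with rows $\eta_i(t)=x_i(t)-\Bar{c}$) satisfies $\sum_i \eta_i(t)=0$ for all $t\ge0$; equivalently $\eta(t)$ stays orthogonal to $\mathbb{I}=(1,\dots,1)^\top$, the kernel of the weighted Laplacian. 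I take as Lyapunov function $\Phi(t)=\NRM{x(t)-\Bar{c}}^2=\sum_i\NRM{\eta_i(t)}^2$. A direct computation shows that activating a single edge $(ij)$ replaces $\NRM{\eta_i}^2+\NRM{\eta_j}^2$ by $2\NRM{\frac{\eta_i+\eta_j}{2}}^2$, hence decreases $\Phi$ by exactly $\frac12\NRM{\eta_i-\eta_j}^2$.

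Next I would pass to continuous time through an infinitesimal interval $[t,t+dt]$, exactly as in the proof of Theorem~\ref{thm:ppp_standard}. Conditionally on $\F_t$, with probability $1-Idt+o(dt)$ no clock ticks and $\Phi$ is unchanged, edge $(ij)$ ticks with probability $p_{ij}dt+o(dt)$, and two or more ticks have probability $o(dt)$. Summing the per-activation decrements gives
\begin{equation*}
\E^{\F_t}[\Phi(t+dt)-\Phi(t)]=-\frac{dt}{2}\sum_{(ij)\in E}p_{ij}\NRM{\eta_i(t)-\eta_j(t)}^2+o(dt).
\end{equation*}
The sum is precisely the Laplacian quadratic form $\sum_{(ij)\in E}p_{ij}\NRM{\eta_i-\eta_j}^2=\langle \eta,L\eta\rangle$, where $L$ is the Laplacian weighted by $\nu_{ij}=p_{ij}$ of Definition~\ref{laplacian}. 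Since $\eta(t)\perp\mathbb{I}=\mathrm{Ker}(L)$ for all $t$, the variational characterization of the spectral gap yields $\langle \eta,L\eta\rangle\ge\sigma_{asynch}\NRM{\eta}^2=\sigma_{asynch}\Phi$, with $\sigma_{asynch}$ the smallest non-null eigenvalue of $L$. Dividing by $dt$, letting $dt\to0$ and taking expectations then produces a differential inequality of the form $\frac{d}{dt}\E[\Phi(t)]\le-\sigma_{asynch}\E[\Phi(t)]$ (the precise multiplicative constant being fixed by the normalization of the Laplacian weights), and integrating via Gr\"onwall gives $\E[\Phi(t)]\le e^{-t\sigma_{asynch}}\Phi(0)$, which is the claim.

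The per-edge algebra and the Laplacian identity are routine; the step that needs genuine care is the passage from discrete jumps to the differential inequality, i.e.\ justifying that the $o(dt)$ remainder (probability of multiple activations, and the approximation in the conditional expectation given one activation) is uniformly negligible so that $t\mapsto\E[\Phi(t)]$ is differentiable with the stated derivative. This is the same technical point flagged for Theorem~\ref{thm:ppp_standard}, and it can alternatively be sidestepped by the discrete-time route used in the remark after that theorem: bound the one-step mean-square contraction on $\mathbb{I}^\perp$ by $1-c\,\sigma_{asynch}/I$ for an explicit constant $c>0$, condition on the Poisson number of ticks $N(t)\sim\mathrm{Poisson}(It)$, and use $\E[a^{N(t)}]=e^{It(a-1)}$ to recover the continuous-time exponential. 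I expect no conceptual obstacle beyond this bookkeeping, since for the averaging problem the update matrices $W_{ij}=\mathrm{Id}-\frac12(e_i-e_j)(e_i-e_j)^\top$ are orthogonal projections, so that $W_{ij}^\top W_{ij}=W_{ij}$ and every inequality above is in fact an exact identity.
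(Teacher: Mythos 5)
Your proof is correct and follows essentially the same route as the paper's: an infinitesimal-interval analysis in which the expected one-tick decrement of $\NRM{x(t)-\Bar{c}}^2$ is identified with the quadratic form of the $p_{ij}$-weighted Laplacian, bounded below by $\sigma_{asynch}$ times the squared error on $\mathbb{I}^\perp$, and integrated via Gr\"onwall (the paper phrases the per-edge step through the projection $W_{ij}=I_n-\frac{1}{2}(e_i-e_j)(e_i-e_j)^\top$ and the identity $\sum_{ij}p_{ij}W_{ij}=I\,I_n-L$, which is exactly your scalar computation of the decrement $\frac12\NRM{\eta_i-\eta_j}^2$). The factor-of-two ambiguity you flag in the multiplicative constant is also present, unremarked, in the paper's own derivation, so your hedge there is if anything more careful than the original.
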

\begin{proof}
First, it is to be noted that, if $\mathcal{P}$ is a \emph{P.p.p.} of intensity $\lambda>0$, for all $t\in \R$ and $dt\to 0$: \begin{equation}
    \P([t,t+dt]\cap \mathcal{P}\ne \emptyset)=\lambda dt +o(dt).
\end{equation}
When $ij$ activated at time $t$, multiply $x(t)$ by $W_{ij}=I_n-\frac{^t(e_i-e_j)(e_i-e_j)}{2}$. By observing that $W_{ij}^2=W_{ij}$ and that $\sum_{ij}p_{ij}W_{ij}=I I_n-L$, where $L$ is the laplacian of the graph weighted by the $p_{ij}$, we get that, with $R_t^2=\NRM{x(t)-\Bar{c}}^2$ the squared error to the consensus at time $t$, up to a $o(dt)$:
\begin{align*}
    \E^{\F_t}[R_{t+dt}^2]= & (1-Idt)\E^{\F_t}\left[R_{t+dt}^2|\text{no activations in }[t,t+dt]\right]\\
    & +dt\sum_{ij}p_{ij}\E^{\F_t}\left[R_{t+dt}^2|ij\text{ activated in }[t,t+dt]\right] +o(dt)\\
    & =R_t^2 - dt(x(t)-\Bar{c})^\top\sum_{ij} W_{ij} (x(t)-\Bar{c})\\
    & \leq R_t^2 - dt\sigma_p R_t^2.
\end{align*}
Then, taking the mean, dividing by $dt\to0$ and integrating concudes the proof.
\end{proof}
\subsection{Laplacian Monotonicity} We finish by proving the following intuitive result:
\begin{prop}[Monotonicity of the Laplacian] Let $\Lambda(\lambda_{ij},(ij)\in E)$ be the laplacian of the graph weighted by $\lambda_{ij}$. Then, its second smallest eigenvalue $\sigma$ is a non decreasing function of each weight $\lambda_{ij}$.
\end{prop}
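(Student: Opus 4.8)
The plan is to reduce the statement to the standard monotonicity of eigenvalues under the Loewner (positive semidefinite) order, exploiting the fact that every weighted graph Laplacian is a non-negative combination of fixed rank-one positive semidefinite matrices. First I would write the Laplacian of Definition~\ref{laplacian} in its edge-sum form
\begin{equation*}
\Lambda(\lambda_{ij},(ij)\in E)=\sum_{(ij)\in E}\lambda_{ij}\,(e_i-e_j)(e_i-e_j)^\top,
\end{equation*}
and check that this indeed has the entries prescribed in Definition~\ref{laplacian} (diagonal $\sum_{k\sim i}\lambda_{ik}$, off-diagonal $-\lambda_{ij}$). Each summand $(e_i-e_j)(e_i-e_j)^\top$ is a fixed positive semidefinite rank-one matrix, so increasing a single weight $\lambda_{kl}$ to $\lambda_{kl}+\delta$ with $\delta\ge0$ replaces $\Lambda$ by $\Lambda+\delta\,(e_k-e_l)(e_k-e_l)^\top$, which dominates $\Lambda$ in the Loewner order: for every $x\in\R^n$, $x^\top(\Lambda+\delta(e_k-e_l)(e_k-e_l)^\top)x=x^\top\Lambda x+\delta(x_k-x_l)^2\ge x^\top\Lambda x$.

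Next I would invoke the Courant--Fischer variational characterization of the second smallest eigenvalue. Since $\Lambda\succeq0$ and $\Lambda\mathbb{I}=0$ for any choice of non-negative weights, the smallest eigenvalue is $0$ with eigenvector $\mathbb{I}=(1,\dots,1)^\top$, and the second smallest eigenvalue is
\begin{equation*}
\sigma=\min_{\substack{x\perp\mathbb{I}\\ \NRM{x}=1}}x^\top\Lambda x.
\end{equation*}
The key point is that this constraint set $\{x:\ x\perp\mathbb{I},\ \NRM{x}=1\}$ does not depend on the weights, because $\mathbb{I}$ lies in the kernel of every weighted Laplacian. Denoting by $\Lambda'$ the Laplacian with the increased weight, the pointwise inequality $x^\top\Lambda'x\ge x^\top\Lambda x$ established above holds in particular on this common feasible set, so taking the minimum over it yields $\sigma'=\min_{x\perp\mathbb{I},\,\NRM{x}=1}x^\top\Lambda'x\ge\min_{x\perp\mathbb{I},\,\NRM{x}=1}x^\top\Lambda x=\sigma$, which is the desired monotonicity.

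I do not expect a genuine obstacle here: the argument is a one-line application of min--max once the edge-sum decomposition and the shared kernel direction $\mathbb{I}$ are in place. The only point requiring a small amount of care is justifying that the restriction to $\mathbb{I}^\perp$ indeed computes the \emph{second} smallest eigenvalue rather than some higher one; this follows because $0$ is the smallest eigenvalue of the positive semidefinite matrix $\Lambda$ and $\mathbb{I}$ spans the corresponding eigendirection, so that Courant--Fischer applied on the orthogonal complement returns $\lambda_2$. Alternatively, one may bypass the identification of the kernel entirely by applying the full min--max formula $\lambda_k(\Lambda)=\min_{\dim S=k}\max_{x\in S,\,\NRM{x}=1}x^\top\Lambda x$ with $k=2$, under which $\Lambda'\succeq\Lambda$ immediately gives $\lambda_2(\Lambda')\ge\lambda_2(\Lambda)$ for the full ordered spectrum.
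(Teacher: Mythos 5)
Your proof is correct and follows essentially the same route as the paper's: both write the quadratic form as $\sum_{(ij)\in E}\lambda_{ij}(x_i-x_j)^2$ (your rank-one edge-sum decomposition is just this identity stated at the matrix level) and then apply the variational characterization $\sigma=\min_{x\perp\mathbb{I},\,\NRM{x}=1}x^\top\Lambda x$ over the weight-independent feasible set. Your added remark justifying why restricting to $\mathbb{I}^\perp$ captures the second smallest eigenvalue is a welcome bit of extra care that the paper glosses over.
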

\begin{proof}
First compute $\langle \Lambda u,u\rangle $, the weights $\lambda_{ij}$ being fixed:
\begin{align*}
    \langle \Lambda u,u\rangle  &= \sum_i \sum_{j\sim i} u_i(u_i-u_j) \lambda_{ij}\\
    &=\frac{1}{2} \sum_i \sum_{j\sim i} (u_i-u_j)^2 \lambda_{ij}.
\end{align*}
It appears that for any $u\in \R^n$, these are non decreasing quantities in each $\lambda_{ij}$. If we take $\Lambda$ and $\Lambda'$ two laplacians with weights $\lambda_{ij}\leq \lambda_{ij}'$, we get, for all $u\in \R^n$, $\langle \Lambda u,u\rangle \leq \langle \Lambda' u,u\rangle $. Then, using that $\sigma=\min_{\|u\|=1,\langle u,\mathbb{I}\rangle =0} \langle \Lambda u,u\rangle $ (as $\mathbb{I}$ is a eigenvector associated to the eigenvalue $0$), we have $\sigma'\leq \sigma$ the desired result.
\end{proof}

\section{Preliminary Inequalities \label{app:prelim}}We first present preliminary inequalities using properties on our function $F_A^*$. These properties were also proven in~\citet{hendrikx2018accelerated} (except for Lemma~\ref{lemma:grad_domination}) but we present them here for the paper to be self-contained.

\begin{lem}\label{lem:primal_dual}
Let $x,v\in\R^{n\times d}$ such that $v=\nabla F(x)$ is the dual conjugate. Assume that there exists $\lambda\in\R^{E\times d}$ such that $A\lambda=v$. Let $v^\star$ be the minimizer of $F^*$ on ${\rm Im}(A)={\rm Vect}((1,...,1)^\top)$, $x^\star$ the minimizer of $F$ under consensus constraint and $\lambda^\star$ a minimizer of $F_A^*$. We have:
\begin{equation}
    \NRM{x-x^\star}^2\le \frac{2 L_{\max}}{\sigma_{\min}^2}(F^*(v)-F^*(v^\star)).
\end{equation}
\end{lem}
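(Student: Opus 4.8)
The plan is to pass to the dual picture and chain two standard conjugate-duality estimates. First I would record the facts that drive everything: since each $f_i$ is $L_i$-smooth and $\sigma_i$-strongly convex, its Fenchel conjugate $f_i^*$ is $1/L_i$-strongly convex and $1/\sigma_i$-smooth; moreover the relation $v=\nabla F(x)$ is equivalent to $x=\nabla F^*(v)$, so that $x=\nabla F^*(v)$ and $x^\star=\nabla F^*(v^\star)$.

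Second, I would bound $\NRM{x-x^\star}$ by $\NRM{v-v^\star}$ using smoothness of $F^*$. Because $F^*(v)=\sum_i f_i^*(v_i)$ is separable with each block $1/\sigma_i$-smooth, the gradient map $\nabla F^*=(\nabla f_1^*,\dots,\nabla f_n^*)$ is $1/\sigma_{\min}$-Lipschitz, hence
\[
\NRM{x-x^\star}^2=\NRM{\nabla F^*(v)-\nabla F^*(v^\star)}^2\le \frac{1}{\sigma_{\min}^2}\NRM{v-v^\star}^2.
\]

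Third, I would bound $\NRM{v-v^\star}^2$ by the dual suboptimality using strong convexity of $F^*$. Each $f_i^*$ being $1/L_i$-strongly convex, $F^*$ is $1/L_{\max}$-strongly convex, so that strong convexity with base point $v^\star$ yields
\[
F^*(v)\ge F^*(v^\star)+\langle \nabla F^*(v^\star),\,v-v^\star\rangle+\frac{1}{2L_{\max}}\NRM{v-v^\star}^2.
\]
Since $v^\star$ minimizes $F^*$ over the linear subspace ${\rm Im}(A)$, and both $v=A\lambda$ and $v^\star=A\lambda^\star$ lie in ${\rm Im}(A)$, the displacement $v-v^\star$ lies in ${\rm Im}(A)$ while first-order optimality forces $\nabla F^*(v^\star)\perp {\rm Im}(A)$; hence the inner product above vanishes and $\NRM{v-v^\star}^2\le 2L_{\max}(F^*(v)-F^*(v^\star))$. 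Combining this with the second step yields the claimed bound.

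The only delicate point---and where I would take care---is the vanishing of that inner product: it requires simultaneously that $v-v^\star\in{\rm Im}(A)$ and that $\nabla F^*(v^\star)$ be orthogonal to ${\rm Im}(A)$. Both are furnished by the hypotheses ($v=A\lambda$, $v^\star=A\lambda^\star$, and $v^\star$ a constrained minimizer over that subspace), so the linear term drops out rather than leaving an uncontrolled contribution. Everything else is the routine bookkeeping of smooth and strongly-convex conjugate pairs.
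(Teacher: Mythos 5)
Your proof is correct and follows essentially the same route as the paper's: pass to $x=\nabla F^*(v)$, use $1/\sigma_{\min}$-Lipschitzness of $\nabla F^*$ to reduce to $\NRM{v-v^\star}^2$, then use $1/L_{\max}$-strong convexity of $F^*$ at the constrained minimizer $v^\star$. Your explicit justification that the linear term vanishes (because $v-v^\star\in{\rm Im}(A)$ while $\nabla F^*(v^\star)\perp{\rm Im}(A)$) is precisely what the paper leaves implicit in its step restricting to ${\rm Im}(A)$.
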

\begin{proof}
\begin{align*}
    \NRM{x-x^\star}^2&=\NRM{\nabla F^*(v)-\nabla F^*(v^\star)}^2\\
    &\le \frac{1}{\sigma_{\min}^2}\NRM{v-v^\star}^2\text{ (smoothness of $F^*$)}\\
    &= \frac{1}{\sigma_{\min}^2}\NRM{v-v^\star}_{{\rm Im}(A)}^2\\
    &\le \frac{2L_{\max}}{\sigma_{\min}^2}(F^*(v)-F^*(v^\star)) \text{ (strong convexity of $F^*$)}.
\end{align*}
\end{proof}

\begin{lem}\label{lemma:smoothness} For $\lambda\in \R^{E\times d}$ and $ij\in E$, we have:
\begin{equation}
    F_A^*\left(\lambda-\frac{1}{\mu_{ij}^2(\sigma_i^{-1}+\sigma_j^{-1})}U_{ij}\nabla_{ij} F_A^*(\lambda)\right)-F_A^*(\lambda)\leq -\frac{1}{2\mu_{ij}^2(\sigma_i^{-1}+\sigma_j^{-1})}\|\nabla_{ij}F_A^*(\lambda)\|^2.
    \label{smoothness}
\end{equation} 
\end{lem}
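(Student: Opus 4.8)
The plan is to recognize \eqref{smoothness} as the standard one-step descent guarantee for a coordinate gradient step, and to obtain it from a \emph{coordinatewise smoothness} property of $F_A^*$ along each edge direction $e_{ij}$, with smoothness constant $L_{ij}:=\mu_{ij}^2(\sigma_i^{-1}+\sigma_j^{-1})$. Everything reduces to establishing that $F_A^*$, restricted to a perturbation of its $(ij)$ block, has an $L_{ij}$-Lipschitz gradient.

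First I would invoke the basic duality fact that, since each $f_i$ is $\sigma_i$-strongly convex, its Fenchel conjugate $f_i^*$ is $\sigma_i^{-1}$-smooth; equivalently $\nabla f_i^*$ is $\sigma_i^{-1}$-Lipschitz on $\R^d$. This is the only analytic input required, and it is where strong convexity of the primal functions enters.

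Second, I would control how the partial gradient $\nabla_{ij}F_A^*$ varies when $\lambda$ is perturbed only in its $(ij)$ coordinate. Using the explicit formula $\nabla_{ij} F_A^*(\lambda)=\mu_{ij}\bigl(\nabla f_i^*((A\lambda)_i)-\nabla f_j^*((A\lambda)_j)\bigr)$ from Section~\ref{section:dual} together with $Ae_{ij}=\mu_{ij}(e_i-e_j)$ from \eqref{eq:matrixA}, a shift $\lambda\mapsto\lambda+U_{ij}\delta$ with $\delta\in\R^d$ moves $(A\lambda)_i$ by $\mu_{ij}\delta$, moves $(A\lambda)_j$ by $-\mu_{ij}\delta$, and leaves every other node coordinate unchanged. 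The Lipschitz bounds on $\nabla f_i^*$ and $\nabla f_j^*$ then give $\NRM{\nabla_{ij}F_A^*(\lambda+U_{ij}\delta)-\nabla_{ij}F_A^*(\lambda)}\le \mu_{ij}^2(\sigma_i^{-1}+\sigma_j^{-1})\NRM{\delta}=L_{ij}\NRM{\delta}$, i.e.\ the map $\delta\mapsto F_A^*(\lambda+U_{ij}\delta)$ has an $L_{ij}$-Lipschitz gradient on $\R^d$.

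Finally, I would apply the usual descent lemma to this one-block function: $F_A^*(\lambda+U_{ij}\delta)\le F_A^*(\lambda)+\langle \nabla_{ij}F_A^*(\lambda),\delta\rangle+\tfrac{L_{ij}}{2}\NRM{\delta}^2$, and specialize to the minimizing step $\delta=-L_{ij}^{-1}\nabla_{ij}F_A^*(\lambda)$, which produces exactly the right-hand side $-\tfrac{1}{2L_{ij}}\NRM{\nabla_{ij}F_A^*(\lambda)}^2$ of \eqref{smoothness}. I do not expect a genuine obstacle: the argument is essentially bookkeeping. The one place demanding care is the constant, since the factors of $\mu_{ij}$ must be propagated consistently through both the gradient formula and the Lipschitz estimate (each contributes one power of $\mu_{ij}$, giving the overall $\mu_{ij}^2$); one should also note that the separability of $F^*$ over nodes is precisely what makes only the two blocks $i,j$ move under an $(ij)$-perturbation, so the estimate localizes cleanly to the edge $(ij)$.
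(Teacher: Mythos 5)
Your proposal is correct and follows essentially the same route as the paper: both arguments rest on the $\sigma_i^{-1}$-smoothness of the conjugates $f_i^*$, exploit the separability of $F^*$ so that an $(ij)$-perturbation only moves the blocks $i$ and $j$ (each picking up one factor of $\mu_{ij}$ via $Ae_{ij}=\mu_{ij}(e_i-e_j)$), and conclude with the standard descent bound at step size $L_{ij}^{-1}$ with $L_{ij}=\mu_{ij}^2(\sigma_i^{-1}+\sigma_j^{-1})$. The only cosmetic difference is that the paper applies the quadratic upper bound of each $f_k^*$ directly at the specific step $h_{ij}$ and sums the two contributions, whereas you first package the estimate as a Lipschitz-gradient property of the one-block map $\delta\mapsto F_A^*(\lambda+U_{ij}\delta)$ and then invoke the descent lemma once; these are the same computation.
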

\begin{proof}
Let us define $h_{ij}=-\frac{1}{\mu_{ij}^2(\sigma_i^{-1}+\sigma_j^{-1})}U_{ij}\nabla_{ij} F_A^*(\lambda)$.
\begin{align*}
    F_A^*\left(\lambda+h_{ij}\right)-F_A^*(\lambda)&=\sum_k f^*_k((A\lambda)_k+(Ah_{ij})_k))-f_k^*((A\lambda)_k)\\
    &= f_i^*((A\lambda)_i+(Ah_{ij})_i)-f_i^*((A\lambda)_i)+f_j^*((A\lambda)_j+(Ah_{ij})_j)-f_j^*((A\lambda)_j),
\end{align*}
as $(Ah_{ij})$ is supported only by coordinates $i$ and $j$. Moreover, as $f_i^*$ is $\sigma_i$-smooth, we have:
\begin{align*}
    f_i^*((A\lambda)_i+(Ah_{ij})_i)-f_i^*((A\lambda)_i)\leq \langle  \nabla f_i^*((A\lambda)_i),(Ah_{ij})_i\rangle  +\frac{\sigma_i^{-1}}{2} \|(Ah_{ij})_i\|^2,
\end{align*}
and by summing for $i$ and $j$ and noticing that $(Ah_{ij})_i=\mu_{ij}\nabla_{ij}F_A^*(\lambda)$:
\begin{align*}
    F_A^*(\lambda+h_{ij})-F_A^*(\lambda)&\leq\langle  \nabla_{ij} F_A(\lambda),h_{ij}\rangle  +\frac{(\sigma_i^{-1}+\sigma_j^{-1})\mu_{ij}^2}{2} \left(\frac{1}{\mu_{ij}^2(\sigma_i^{-1}+\sigma_j^{-1})}\right)^2\|\nabla_{ij} F_A^*(\lambda)\|^2\\
    &=-\frac{1}{2\mu_{ij}^2(\sigma_i^{-1}+\sigma_j^{-1})}\|\nabla_{ij}F_A^*(\lambda)\|^2.
\end{align*}
\end{proof}
\begin{lem}\label{lemma:sc}
$\sigma_A$ the strong convexity parameter of $F_A^*$ on the orthogonal of $Ker(A)$ is lower bounded by $\lambda_{min}^+(A^TA)/L_{max}$, where $\lambda_{min}^+(A^TA)$ is the smallest non null eigenvalue of $A^TA$.
\end{lem}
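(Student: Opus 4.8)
The plan is to transfer the strong convexity of $F^*$ through the linear map $A$, losing only the smallest nonzero eigenvalue of $A^\top A$ in the process.

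First I would record that $F^*$ is itself strongly convex. Since each $f_i$ is $L_i$-smooth, Fenchel duality gives that each $f_i^*$ is $\tfrac{1}{L_i}$-strongly convex; as $F^*(v)=\sum_i f_i^*(v_i)$ is separable and $\tfrac{1}{L_i}\ge\tfrac{1}{L_{\max}}$, the function $F^*$ is $\tfrac{1}{L_{\max}}$-strongly convex on $\R^{n\times d}$. Next I would write the strong convexity inequality for $F^*$ evaluated at the points $A\lambda$ and $A\lambda'$, for arbitrary $\lambda,\lambda'\in\R^{E\times d}$:
\[
F^*(A\lambda')\ge F^*(A\lambda)+\langle\nabla F^*(A\lambda),A(\lambda'-\lambda)\rangle+\frac{1}{2L_{\max}}\NRM{A(\lambda'-\lambda)}^2.
\]
Using the chain rule $\nabla F_A^*(\lambda)=A^\top\nabla F^*(A\lambda)$ together with $F_A^*=F^*\circ A$, the middle term rewrites as $\langle\nabla F_A^*(\lambda),\lambda'-\lambda\rangle$, so that
\[
F_A^*(\lambda')\ge F_A^*(\lambda)+\langle\nabla F_A^*(\lambda),\lambda'-\lambda\rangle+\frac{1}{2L_{\max}}\NRM{A(\lambda'-\lambda)}^2.
\]

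The final step is a linear-algebra bound on $\NRM{A(\lambda'-\lambda)}^2$. Restricting to directions $u:=\lambda'-\lambda$ orthogonal to ${\rm Ker}(A)={\rm Ker}(A^\top A)$, I would diagonalize the symmetric positive semidefinite matrix $A^\top A$ and note that on the orthogonal complement of its kernel the Rayleigh quotient is bounded below by the smallest nonzero eigenvalue, i.e. $\NRM{Au}^2=\langle A^\top A\, u,u\rangle\ge \lambda_{\min}^+(A^\top A)\NRM{u}^2$. Substituting this into the previous display yields
\[
F_A^*(\lambda')\ge F_A^*(\lambda)+\langle\nabla F_A^*(\lambda),\lambda'-\lambda\rangle+\frac{\lambda_{\min}^+(A^\top A)}{2L_{\max}}\NRM{\lambda'-\lambda}^2
\]
for all $\lambda,\lambda'$ with $\lambda'-\lambda\perp{\rm Ker}(A)$, which is exactly the asserted bound $\sigma_A\ge \lambda_{\min}^+(A^\top A)/L_{\max}$.

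I do not expect a serious obstacle here, as each ingredient is standard. The only point requiring mild care is the bookkeeping on subspaces: one must apply the strong convexity estimate along directions orthogonal to ${\rm Ker}(A)$ and invoke the identity ${\rm Ker}(A)={\rm Ker}(A^\top A)$, so that the Rayleigh quotient lower bound is valid precisely on the relevant subspace (otherwise the zero eigenvalue on ${\rm Ker}(A)$ would spoil the constant).
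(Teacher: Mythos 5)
Your proposal is correct and follows essentially the same route as the paper: invoke the $L_{\max}^{-1}$-strong convexity of each $f_i^*$ (summed to give strong convexity of $F^*$), compose with $A$ via the chain rule, and lower-bound $\NRM{A u}^2$ by $\lambda_{\min}^+(A^\top A)\NRM{u}^2$ on the orthogonal complement of ${\rm Ker}(A)$. Your explicit remark that ${\rm Ker}(A)={\rm Ker}(A^\top A)$, so the Rayleigh-quotient bound holds exactly on the relevant subspace, is a point the paper leaves implicit.
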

\begin{proof}
Let $\lambda,\lambda'\in \R^{E\times d}$. By $L_i^{-1}$ and thus $L_{max}^{-1}$-strong convexity of $f_i^*$:
\begin{align*}
    f_i^*((A\lambda)_i)-f_i^*((A\lambda')_j)&\geq \langle  \nabla f_i^*((A\lambda')_i),(A(\lambda-\lambda'))_i\rangle  -\frac{1}{2L_{max}}\|(A(\lambda-\lambda')\|^2\\
\end{align*}
Summing over all $i\in [n]$ and using $\nabla F_A^*(\lambda') = ^tA (\nabla_i f_i^*((A\lambda')_i))_i$ leads to:
\begin{align*}
    F_A^*(\lambda)-F_A^*(\lambda')&\geq \langle  \nabla F_A^*(\lambda'),\lambda-\lambda'\rangle  -\frac{1}{2L_{\max}}\|A(\lambda'-\lambda\|^2\\
    &\geq \langle  \nabla F_A^*(\lambda'),\lambda-\lambda'\rangle  -\frac{\lambda^+_{\min}(A^TA)}{2L_{max}}{\|\lambda-\lambda'\|^*}^2.
\end{align*}
where $\|.\|^*$ is the euclidian norm on the orthogonal of $Ker(A)$.
\end{proof}
\begin{lem}\label{lemma:AAT}
$AA^T$ is the laplacian of the graph $G$ weighted by $\mu_{ij}^2$ on the edges.
\end{lem}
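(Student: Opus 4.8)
The plan is a direct entrywise computation relying only on the defining relation \eqref{eq:matrixA}. Since $\{e_{ij}\}_{(ij)\in E}$ is the canonical orthonormal basis of $\R^E$, we have $\sum_{(ij)\in E} e_{ij}e_{ij}^\top = \mathrm{Id}$, so I would first expand
\begin{equation*}
    AA^\top = A\Big(\sum_{(ij)\in E} e_{ij}e_{ij}^\top\Big)A^\top = \sum_{(ij)\in E} (Ae_{ij})(Ae_{ij})^\top,
\end{equation*}
reducing the problem to a sum of rank-one terms indexed by the edges.

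Next I would substitute $Ae_{ij}=\mu_{ij}(e_i-e_j)$ from \eqref{eq:matrixA}, which gives
\begin{equation*}
    AA^\top = \sum_{(ij)\in E}\mu_{ij}^2\,(e_i-e_j)(e_i-e_j)^\top.
\end{equation*}
I would then expand each rank-one matrix as $(e_i-e_j)(e_i-e_j)^\top = e_ie_i^\top - e_ie_j^\top - e_je_i^\top + e_je_j^\top$ and read off the entries of the sum. The diagonal entry $(k,k)$ collects a contribution $\mu_{ij}^2$ from every edge incident to $k$, yielding $\sum_{j\sim k}\mu_{kj}^2$; the off-diagonal entry $(k,l)$ with $k\neq l$ equals $-\mu_{kl}^2$ when $(kl)\in E$ and $0$ otherwise. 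These coincide exactly with the entries of the weighted graph Laplacian of Definition \ref{laplacian} with weights $\nu_{ij}=\mu_{ij}^2$, which is the claim.

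There is no genuine obstacle in this lemma; it is essentially a bookkeeping identity. The only point deserving care is the orientation convention: the matrix $A$ is only specified once an orientation of each undirected edge is chosen, and since $\mu_{ij}=-\mu_{ji}$, reversing the orientation of an edge flips the sign of both $\mu_{ij}$ and $e_i-e_j$. I would therefore note that the term $\mu_{ij}^2(e_i-e_j)(e_i-e_j)^\top$ is invariant under this reversal, so each undirected edge contributes a single well-defined summand and $AA^\top$ does not depend on the chosen orientation.
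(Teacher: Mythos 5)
Your proof is correct and follows essentially the same route as the paper's: both are direct entrywise verifications of $AA^\top$ from the relation $Ae_{ij}=\mu_{ij}(e_i-e_j)$, with your rank-one decomposition $AA^\top=\sum_{(ij)\in E}\mu_{ij}^2(e_i-e_j)(e_i-e_j)^\top$ being just a repackaging of the paper's computation of $e_i^\top AA^\top e_j$ via $A^\top e_i=\sum_{j\sim i}\mu_{ij}e_{ij}$. Your closing remark on orientation-independence is a nice additional observation not made in the paper.
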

\begin{proof}
\begin{align*}
    A^T e_i=\sum_{j\sim i}\mu_{ij}e_{ij}
\end{align*}
\noindent For the diagonal, we have:
\begin{align*}
    e_i A A^T e_i&=\sum_{k\sim i}\sum_{l\sim i} \mu_{ik}\mu_{il}\langle  e_{ik},e_{il}\rangle  \\
    &= \sum_{j\sim i}\mu_{ij}^2.
\end{align*}
Then, for $i\sim j, i\ne j$:
\begin{align*}
    e_i A A^T e_j&=\sum_{k\sim i}\sum_{l\sim i} \mu_{ik}\mu_{jl}\langle  e_{ik},e_{jl}\rangle  \\
    &=\mu_{ij}\mu_{ji}\\
    &=-\mu_{ij}^2.
\end{align*}
\end{proof}

\begin{lem}
For $x,x'\in R^{E\times d}$, and $ij \in E$, we have:
\begin{equation}
    \|\nabla_{ij}F_A^*(x)-\nabla_{ij}F_A^*(x')\|^2\leq 2(\sigma_i^{-1}+\sigma_j^{-1})^2 d_{ij} \mu_{ij}^2 \sum_{(kl)\sim (ij)}\mu_{kl}^2 \|x_{kl}-x'_{kl}\|^2.
     \label{gossipnoniid1}
\end{equation}
\end{lem}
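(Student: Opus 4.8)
The plan is to differentiate through the explicit formula for the partial derivative obtained in Section~\ref{section:dual},
\[
\nabla_{ij}F_A^*(\lambda)=\mu_{ij}\big(\nabla f_i^*((A\lambda)_i)-\nabla f_j^*((A\lambda)_j)\big),
\]
and then control its variation between $x$ and $x'$ with two classical tools: smoothness of the Fenchel conjugates $f_i^*$, and Cauchy--Schwarz applied to the sparse edge-to-node map $A$. First I would form $\nabla_{ij}F_A^*(x)-\nabla_{ij}F_A^*(x')$, which equals $\mu_{ij}$ times $\big[\nabla f_i^*((Ax)_i)-\nabla f_i^*((Ax')_i)\big]-\big[\nabla f_j^*((Ax)_j)-\nabla f_j^*((Ax')_j)\big]$, and apply the elementary inequality $\NRM{p-q}^2\le 2\NRM{p}^2+2\NRM{q}^2$ to separate the node-$i$ and node-$j$ contributions; this produces the leading factor $2\mu_{ij}^2$.

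Next, because $f_i$ is $\sigma_i$-strongly convex, $\nabla f_i^*$ is $\sigma_i^{-1}$-Lipschitz, so
\[
\NRM{\nabla f_i^*((Ax)_i)-\nabla f_i^*((Ax')_i)}^2\le \sigma_i^{-2}\NRM{(A(x-x'))_i}^2,
\]
and symmetrically at $j$; bounding $\sigma_i^{-2},\sigma_j^{-2}\le(\sigma_i^{-1}+\sigma_j^{-1})^2$ then yields the prefactor appearing in the statement. To estimate $\NRM{(A(x-x'))_i}^2$ I would use the definition \eqref{eq:matrixA} of $A$, under which the node coordinate reads $(A\lambda)_i=\sum_{k\sim i}\mu_{ik}\lambda_{ik}$, a sum of $d_i$ terms; Cauchy--Schwarz gives $\NRM{(A(x-x'))_i}^2\le d_i\sum_{k\sim i}\mu_{ik}^2\NRM{(x-x')_{ik}}^2$, and likewise at $j$. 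Since every edge incident to $i$ or to $j$ shares a node with $(ij)$, both sums range over edges $(kl)\sim(ij)$, so they can be merged into a single sum $\sum_{(kl)\sim(ij)}\mu_{kl}^2\NRM{(x-x')_{kl}}^2$.

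The \textbf{main obstacle} is purely bookkeeping of constants rather than genuine analysis: one must fix the meaning of $d_{ij}$ (the number of edges adjacent to $(ij)$, equivalently an upper bound on $\max(d_i,d_j)$) consistently, and handle the fact that the edge $(ij)$ is incident to \emph{both} endpoints and hence contributes to the node-$i$ sum and to the node-$j$ sum simultaneously. Tracking this overlap carefully—together with the bounds $\sigma_i^{-2}d_i,\ \sigma_j^{-2}d_j\le(\sigma_i^{-1}+\sigma_j^{-1})^2 d_{ij}$—is exactly what pins down the stated constant; the remainder of the argument is routine and requires neither the continuous-time nor the probabilistic machinery used elsewhere in the paper.
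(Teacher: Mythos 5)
Your proposal is correct and follows essentially the same route as the paper's proof: both start from the explicit formula $\nabla_{ij}F_A^*(\lambda)=\mu_{ij}(\nabla f_i^*((A\lambda)_i)-\nabla f_j^*((A\lambda)_j))$, use the $\sigma_i^{-1}$-Lipschitzness of $\nabla f_i^*$, exploit the sparsity of $A$ to restrict to edges adjacent to $(ij)$, and obtain the factors $2$ and $d_{ij}$ from the elementary inequality $\NRM{p-q}^2\le 2\NRM{p}^2+2\NRM{q}^2$ together with Cauchy--Schwarz. The only cosmetic difference is that the paper sums directly over all edges $(kl)\sim(ij)$ at each node rather than first using $d_i$, $d_j$ and then merging, which is the same over-count you describe.
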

\begin{proof}
First, notice that $\nabla_{ij}F_A^*(x)=\mu_{ij}(\nabla f_i^*((Ax)_i)-\nabla f_j^*((Ax)_j))$. Then:
\begin{align*}
    \|\nabla f_i^*((Ax)_i)-\nabla f_i^*((Ax')_j)\| &\leq \sigma_i^{-1}\|(A(x-x'))_i\| \text{ (smoothness)}\\
    & \leq \sigma_i^{-1} \|\sum_{kl\sim ij} \mu_{kl} (x-x')_{kl}\|\\
    & \leq \sigma_i^{-1} \sum_{kl\sim ij} \mu_{kl} \|(x-x')_{kl}\|
\end{align*}
Conclude by taking the square and summing for $i$ and $j$.
\end{proof}

\begin{lem}[Distance to Optimum]\label{lemma:grad_domination} For any $\lambda\in \R^{E\times d}$ and for $\lambda^\star $ minimizing $F_A^*$, we have:
\begin{equation}
  \label{opt2strg} F_A^*(\lambda)-F_A^*(\lambda^\star )\leq \frac{1}{2\sigma_A} \|\nabla F_A^*(\lambda)\|^2
\end{equation}
\end{lem}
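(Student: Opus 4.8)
The plan is to recognize this as a Polyak--\L{}ojasiewicz inequality and derive it from the $\sigma_A$-strong convexity of $F_A^*$ established in Lemma~\ref{lemma:sc}. Writing $V=({\rm Ker}(A))^\perp$ and letting $\NRM{\cdot}^*$ denote the Euclidean norm on $V$, strong convexity means that for all $\lambda,\mu\in\R^{E\times d}$,
\begin{equation*}
F_A^*(\mu)\ge F_A^*(\lambda)+\langle\nabla F_A^*(\lambda),\mu-\lambda\rangle+\frac{\sigma_A}{2}{\NRM{\mu-\lambda}^*}^2.
\end{equation*}
The crucial structural fact I would exploit is that $\nabla F_A^*(\lambda)=A^\top\nabla F^*(A\lambda)\in{\rm Im}(A^\top)=V$, so the linear term depends only on the $V$-component of $\mu-\lambda$, and $\NRM{\nabla F_A^*(\lambda)}$ is unambiguously the $V$-norm appearing in the statement.

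Next I would minimize the right-hand side over $\mu$. Setting $w$ to be the $V$-component of $\mu-\lambda$, the bound reduces to the quadratic $\langle\nabla F_A^*(\lambda),w\rangle+\frac{\sigma_A}{2}\NRM{w}^2$, whose unique minimizer is $w=-\sigma_A^{-1}\nabla F_A^*(\lambda)\in V$, attaining value $-\frac{1}{2\sigma_A}\NRM{\nabla F_A^*(\lambda)}^2$. Consequently, for every $\mu$,
\begin{equation*}
F_A^*(\mu)\ge F_A^*(\lambda)-\frac{1}{2\sigma_A}\NRM{\nabla F_A^*(\lambda)}^2.
\end{equation*}

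Finally, I would instantiate this at $\mu=\lambda^\star$ and rearrange, which gives exactly $F_A^*(\lambda)-F_A^*(\lambda^\star)\le\frac{1}{2\sigma_A}\NRM{\nabla F_A^*(\lambda)}^2$, the desired claim. The argument is otherwise the textbook derivation of the PL inequality; the only point demanding care---rather than a genuine obstacle---is the degeneracy along ${\rm Ker}(A)$. Since $F_A^*$ depends on $\lambda$ only through $A\lambda$, it is constant along ${\rm Ker}(A)$ and can be strongly convex only in the $V$-directions, so one must verify both that the gradient lands in $V$ and that the minimization over $\mu$ collapses to a minimization over $w\in V$; both follow from the identity ${\rm Im}(A^\top)=({\rm Ker}(A))^\perp$.
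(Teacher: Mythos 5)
Your proof is correct, but it takes a genuinely different route from the paper's. You run the classical Polyak--\L{}ojasiewicz derivation directly: start from the $\sigma_A$-strong-convexity inequality of $F_A^*$ on $V=({\rm Ker}(A))^\perp$, minimize its right-hand side over the $V$-component of $\mu-\lambda$, and evaluate at $\mu=\lambda^\star$. The paper instead goes through conjugate duality: it introduces the Bregman divergence $D_\phi(x,y)=\phi(x)-\phi(y)-\langle\nabla\phi(y),x-y\rangle$, invokes the identity $D_{F_A^*}(\lambda,\lambda^\star)=D_{(F_A^*)^*}(\nabla F_A^*(\lambda^\star),\nabla F_A^*(\lambda))$, and uses the fact that $(F_A^*)^*$ is $\sigma_A^{-1}$-smooth with respect to ${\|\cdot\|^*}^2$ (strong convexity/smoothness duality, with a citation to Kakade et al.), concluding via $\nabla F_A^*(\lambda^\star)=0$. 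Both arguments hinge on exactly the same two ingredients --- the constant $\sigma_A$ from Lemma~\ref{lemma:sc} and the observation that $\nabla F_A^*(\lambda)\in{\rm Im}(A^\top)=({\rm Ker}(A))^\perp$, so that $\|\nabla F_A^*(\lambda)\|^*=\|\nabla F_A^*(\lambda)\|$ --- and your explicit treatment of the degeneracy along ${\rm Ker}(A)$ is exactly the care the statement requires. What your approach buys is self-containedness and elementarity (no Bregman machinery, no external duality theorem); what the paper's buys is brevity once that toolkit is assumed. Either proof is acceptable here.
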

\begin{proof}
We introduce Bregman divergences, which make the proof straightforward. For $\phi$ any real-valued function, differentiable, defined on an euclidian space $\mathcal{V}$, we define its Bregman divergence $D_{\phi}$ on $\mathcal{V}^2$ by:
\begin{equation}
    D_{\phi}(x,y)=\phi(x)-\phi(y)-\langle  \nabla \phi(y),x-y\rangle  .
\end{equation}
$\phi$ is thus $L$-smooth if and only if $D_{\phi}\leq L D_{\|.\|^2/2}$. An important equality is the following, under convexity assumption for $\phi$:
\begin{equation}
    \label{bregprop}
    D_{\phi}(x,y)=D_{\phi^*}(\nabla \phi (y), \nabla \phi(x)).
\end{equation}
Applying this to $\phi=F_A^*$, $x=\lambda,y=\lambda^\star $, together with the fact that $(F_A^*)^*$ is $\sigma_A^{-1}$-smooth with respect to ${\|.\|^*}^2$~\citep{kakade2009duality}, the squared norm on the orthogonal of $Ker(A)$ leads to:
\begin{equation*}
    D_{F_A^*}(\lambda, \lambda^\star ) = D_{{F_A^*}^*}(\nabla F_A^*(\lambda^\star ), \nabla F_A^*(\lambda)) \leq \frac{1}{\sigma_A}D_{{\|.\|^*}^2/2}(\nabla F_A^*(\lambda^\star ), \nabla F_A^*(\lambda)),
\end{equation*}
and the result follows since $\nabla F_A^*(\lambda^\star ) = 0$ and ${\|\nabla F_A^*(\lambda)\|^*}^2=\|\nabla F_A^*(\lambda)\|^2$.
\end{proof}

\section{Detailed Proof of Theorem \ref{thm:LN}\label{app:LN}}

\subsection{Proof Of Theorem \ref{thmdet}\label{app:general_activations}}

To prove this intermediate theorem, we need to study every gradient step involved. At iteration $s$, not every coordinates is available, hence the need to study the impact of $T$ gradient steps together. A gradient step alongside edge $ij$ only involves edges in its neighborhood (thanks to the sparsity of the matrix $A$), a key element that will need to be explicited. The proof involves three main steps.\\

\noindent \textbf{Step 1:} Applying Lemma~\ref{lemma:smoothness} (local smoothness) gives, where $ij$ is the $t^{th}$ activated edge: 
\begin{equation}
    F_A^*(\lambda(t+1))-F_A^*(\lambda(t))\leq -\frac{1}{2(\sigma_i^{-1}+\sigma_j^{-1})\mu_{ij}^2}\|\nabla_{ij}F_A^*(\lambda(t))\|^2.
\end{equation} 
Hence, we get an inequality between $L_t$ and $L_{t+1}$: 
\begin{equation}
    \Lambda_{t+1}=\frac{1}{T}\sum_{t\leq s < t+T} (F_A^*(\lambda(s+1))-F_A^*(\lambda^\star ))\leq \Lambda_t - \frac{1}{T}\sum_{t\leq s < t+T} \frac{1}{2(\sigma_i^{-1}+\sigma_j^{-1})\mu_{(ij)_s}^2}\|\nabla_{(ij)_s}F_A^*(\lambda(s))\|^2
\end{equation}
where $(ij)_s$ is the edge activated during activation $s$. Let's introduce the following quantity: \begin{equation}
    \frac{1}{T}\sum_{t\leq s < t+T} \sum_{ij\in E} \|\nabla_{ij}F_A^*(\lambda(s))\|^2 = \frac{1}{T}\sum_{t\leq s < t+T}\|\nabla F_A^*(\lambda(s))\|^2 \geq \sigma_A \Lambda_t
\end{equation} 
where where we used Lemma~\ref{lemma:grad_domination} (gradient domination), and $ \sigma_A$ is the strong convexity parameter of $F_A^*$ (lower bounded by $\lambda_{min}^+(A^TA)/L_{max}$). Hence, if an inequality of the type
\begin{equation}\label{neededthm2}
    \frac{C}{T}\sum_{t\leq s < t+T} \sum_{ij\in E} \|\nabla_{ij}F_A^*(\lambda(s))\|^2\leq \frac{1}{T}\sum_{t\leq s < t+T} \frac{1}{2(\sigma_i^{-1}+\sigma_j^{-1})\mu_{(ij)_s}^2}\|\nabla_{(ij)_s}F_A^*(\lambda(s))\|^2
\end{equation}
holds, we have (using \eqref{opt2strg}):
\begin{equation}
    \Lambda_{t+1}\leq L_t-C\frac{1}{T}\sum_{t\leq s < t+T} \|\nabla F_A^*(\lambda(s))\|^2\leq (1-C\sigma_A)\Lambda_t.
\end{equation}
We thus need to tune correctly the $\mu_{ij}^2$ and $C$ in order to have \eqref{neededthm2} verified.\\

\noindent \textbf{Step 2:} We are looking for necessary conditions for \eqref{neededthm2} to hold. In the left term, every coordinate is present at each time $s$. However, in the right hand side of the inequality, just the activated one is present. We will need to compensate this with a bigger factor in front of the gradients. In order to compare these quantities, we need to introduce upper bound inequalities on $\|\nabla_{ij}F_A^*(\lambda(s))\|^2$, that only make activated coordinates intervene. Let $s\in \{t,...,t+T-1\}$, and suppose that there exists $t\leq r \leq s < r+t_{ij} \leq t+T-1$ such that $ij$ is activated at times $r$ and $r+t_{ij}$. Thanks to the asumption on $T$, either one of these integers exists. If the other one doesn't, replace it with $t$ for $r$, and by $t+T-1$ for $r+t_{ij}$. Thanks to our asumptions, we know that $t_{ij}\leq a\ell_{ij}$. We have the following basic inequalities:
\begin{align}
    \|\nabla_{ij}F_A^*(\lambda(s))\|^2 &\leq (\|\nabla_{ij}F_A^*(\lambda(r))\|+\|\nabla_{ij}F_A^*(\lambda(s))-\nabla_{ij}F_A^*(\lambda(r))\|)^2\\
    &\leq 2(\|\nabla_{ij}F_A^*(\lambda(r))\|^2+\|\nabla_{ij}F_A^*(\lambda(s))-\nabla_{ij}F_A^*(\lambda(r))\|^2).
\end{align}
The quantity $\|\nabla_{ij}F_A^*(\lambda(s))-\nabla_{ij}F_A^*(\lambda(r))\|^2$ then needs to be controlled. We know that thanks to \eqref{gossipnoniid1}, for $x,x'\in \R^{E\times d}$, we have 
\begin{equation}
    \|\nabla_{ij}F_A^*(x)-\nabla_{ij}F_A^*(x')\|^2\leq 2(\sigma_i^{-1}+\sigma_j^{-1})^2 d_{ij} \mu_{ij}^2 \sum_{(kl)\sim (ij)}\mu_{kl}^2 \|x_{kl}-x'_{kl}\|^2.
\end{equation}
Using this with 
\begin{align}
    \|x_{kl}-x'_{kl}\|^2 & =\|\sum_{r<u<s:(ij)_u=(kl)}\frac{1}{(\sigma_k^{-1}+\sigma_l^{-1})\mu_{kl}^2}\nabla_{kl}F_A^*(\lambda(u))\|^2\\
    & \leq \sum_{r<u<r+t_{ij}:(ij)_u=(kl)}\left(\frac{1}{(\sigma_k^{-1}+\sigma_l^{-1})\mu_{kl}^2}\right)^2N(kl,ij,u)\|\nabla_{kl}F_A^*(\lambda(u))\|^2, 
\end{align}
where we used (and will widely use again below) that $\|x_1+...+x_n\|^2\leq n (\|x_1\|^2+...+\|x_n\|^2)$ (convexity of the squared norm), leads to:
\begin{align}
    \|\nabla_{ij}F_A^*(\lambda(s))\|^2& \leq 2\|\nabla_{ij}F_A^*(\lambda(r))\|^2\\
    &+2d_{ij}\sum_{r<u<r+t_{ij}}N((ij)_u,ij,u)\frac{\mu_{ij}^2(\sigma_i^{-1}+\sigma_j^{-1})^2}{\mu_{(ij)_u}^2(\sigma_{i_u}^{-1}+\sigma_{j_u}^{-1})^2}\|\nabla_{(ij)_u}F_A^*(\lambda(u))\|^2\\
    & \leq 2\|\nabla_{ij}F_A^*(\lambda(r))\|^2\\
    &+2d_{ij}\sum_{r<u<r+t_{ij}} \left\lceil b\frac{\ell_{ij}}{L_{(ij)_u}}\right\rceil\frac{\mu_{ij}^2(\sigma_i^{-1}+\sigma_j^{-1})^2}{\mu_{(ij)_u}^2(\sigma_{i_u}^{-1}+\sigma_{j_u}^{-1})^2}\|\nabla_{(ij)_u}F_A^*(\lambda(u))\|^2
\end{align}
The advantage of this last expression is that only activated quantities are present on the right hand side.\\

\noindent \textbf{Step 3:} The last step of the proof consists in summing the last inequality for $t\leq s <t+T$, $ij\in E$. When summing, each $\|\nabla_{(ij)_r}F_A^*(\lambda(r))\|^2$ appears on the right hand-side of the inequality, with a factor upper-bounded by ($(ij)_r$ noted $(ij)$):
\begin{equation}
    2a\ell_{ij}+2 d_{ij}\sum_{kl\sim ij}a\ell_{kl}\left\lceil\frac{b\ell_{kl}}{\ell_{ij}}\right\rceil\frac{\mu_{kl}^2(\sigma_{k}^{-1}+\sigma_{l}^{-1})^2}{\mu_{ij}^2(\sigma_i^{-1}+\sigma_j^{-1})^2}.
\end{equation}
We want the expression above multiplied by $C$ defined in Step 1 to be upper-bounded by $\frac{1}{2(\sigma_i^{-1}+\sigma_j^{-1})\mu_{ij}^2}$, in order for \eqref{neededthm2} to be verified. This is possible if and only if:
\begin{equation}
     C\left(2a\ell_{ij}\mu_{ij}^2(\sigma_i^{-1}+\sigma_j^{-1})+2 d_{ij}\sum_{kl\sim ij}a\left\lceil\frac{b\ell_{kl}}{\ell_{ij}}\right\rceil \ell_{kl}\mu_{kl}^2\frac{(\sigma_{k}^{-1}+\sigma_{l}^{-1})^2}{\sigma_{i}^{-1}+\sigma_{j}^{-1}}\right)\leq \frac{1}{2},
     \label{thm1last}
\end{equation}
where $C$ is defined in step $1$ of the proof. This is equivalent to:
\begin{align}
      C\left(2a \ell_{ij}\mu_{ij}^2(\sigma_i^{-1}+\sigma_j^{-1})+2 d_{ij}\sum_{kl\sim ij}a\frac{b \ell_{kl}^2}{ \ell_{ij}}\mu_{kl}^2\frac{(\sigma_{k}^{-1}+\sigma_{l}^{-1})^2}{\sigma_{i}^{-1}+\sigma_{j}^{-1}}\right)\leq \frac{1}{4} \text{ if } \forall kl\sim ij, \ell_{ij}\leq b \ell_{kl},
\end{align}
where we bounded $\left\lceil b\frac{ \ell_{ij}}{ \ell_{kl}}\right\rceil$ by $2\frac{b \ell_{ij}}{ \ell_{kl}}$ here. We here see that in this case, if 
\begin{equation}
    \mu_{ij}^2=\frac{1}{ \ell_{ij}(\sigma_i^{-1}+\sigma_j^{-1})}\times \min_{kl\sim ij}\frac{ \ell_{kl}(\sigma_k^{-1}+\sigma_l^{-1})}{ \ell_{ij}(\sigma_i^{-1}+\sigma_j^{-1})}
\end{equation}
with $8a+8d_{max}^2b\leq C^{-1}$, our inequality holds. However, our inequality on the ceil operator seems not to work in the general case. Let's take $kl$ a neighbor of $ij$ such that $ \ell_{ij}>b \ell_{kl}$. As $ \ell_{ij}>b \ell_{kl}$, we have $\lceil\frac{b \ell_{kl}}{ \ell_{ij}}\rceil=1$, leading to $a\lceil\frac{b \ell_{kl}}{ \ell_{ij}}\rceil \ell_{kl}\mu_{kl}^2=a \ell_{kl}\mu_{kl}^2\leq a\leq ab$. Hence, our result still holds.\\

\noindent \textbf{Conclusion:} We have our result for $C=\frac{1}{2a+8d_{max}^2ab}$ and a laplacian weighted with local communication constraints: $\mu_{ij}^2=\frac{1}{ \ell_{ij}(\sigma_i^{-1}+\sigma_j^{-1})}\times \min_{kl\sim ij}\frac{ \ell_{kl}(\sigma_k^{-1}+\sigma_l^{-1})}{ \ell_{ij}(\sigma_i^{-1}+\sigma_j^{-1})}$. The final rate thus depends on the smallest eigenvalue of the laplacian weighted by:
\begin{equation}
    \frac{1}{2a+8d_{max}^2ab}\frac{1}{L_{max}}\frac{1}{ \ell_{ij}(\sigma_i^{-1}+\sigma_j^{-1})}\times \min_{kl\sim ij}\frac{ \ell_{kl}(\sigma_k^{-1}+\sigma_l^{-1})}{ \ell_{ij}(\sigma_i^{-1}+\sigma_j^{-1})}.
\end{equation}
However, having local complexity constraints is not really of much interest to us, as the parameters $\sigma_i$ entered in the algorithm are generally taken to be the same on all nodes. We thus formulate Theorem 2 with $\sigma_{min}$ for simplicity (which is slightly weaker in general) which gives as final rate of convergence the smallest eigenvalue of the laplacian weighted by:
\begin{equation}
    \nu_{ij}=\frac{1}{2a+8d_{\max}^2ab}\frac{\sigma_{\min}}{2L_{\max}}\frac{1}{ \ell_{ij}}\times \min_{kl\sim ij}\frac{ \ell_{kl}}{ \ell_{ij}}.
\end{equation}
%\hadrien{Apres si tu caractérises les choses plus finement tu peux aussi faire en sorte que les noeuds les moins sollicités (ceux adjacents aux arêtes plus lentes) aient un $\sigma$ plus élevé pour compenser, mais ça devient des réglages assez fins. }

\subsection{Proof Of Proposition \ref{thmsto}: Adding Stochasticity \label{app:LN_sto}}

We now prove the other theorem, where we assume the existence of events $A_t$ for $t\in \N$, under which the asumptions are true. Using the same arguments as in the proof of Theorem 2, we obtain:
\begin{equation}
    \E[  \Lambda_{t+1}-  \Lambda_t|\F_t,A_t]\leq -\sigma   \Lambda_t.
\end{equation}
However, this is not enough to conclude. Under $A_t^C$, we only know that $  \Lambda_{t+1}\leq   \Lambda_t$ using Lemma~\ref{lemma:smoothness} (our local gradient steps cannot increase distance to the optimum). Hence:
\begin{equation}
    \E[  \Lambda_{t+1}|\F_t]\leq (1-\sigma \mathbb{I}_{A_t})  \Lambda_t.
\end{equation}
And then, by induction:
\begin{equation}
    \E[  \Lambda_t]\leq \E[P_t\Lambda_0]  ,\text{ where } P_t=\prod_{s=0}^{t-1}(1-\sigma \mathbb{I}_{A_s}).
\end{equation}
However, no direct bound on $P_t$ exists. The interdependencies on the events $A_t$ make it impossible for an induction to prove a bound of the form $\leq (1-\sigma/2)^t$. However, the logarithm of the product seems easier to study:
\begin{equation}
    \log(P_t)=\log(1-\sigma)\sum_{s=0}^{t-1}\mathbb{I}_{A_s},
\end{equation}
giving us $\E\log(P_t)\leq\log(1-\sigma)t/2$, as $\P(A_t)\geq 1/2$. We are thus going to make a study in probability. For $t\in \N$, let $X_t=\frac{1}{T}\sum_{s=t}^{t+T-1}\mathbb{I}_{A_s}$. Using Markov-type inequalities conditionnaly on $\F_t$ gives:
\begin{equation}
    \P(X_t\geq 1/3|\F_t) +1/3\P(X_t\leq 1/3|\F_t) \geq \E[ X_t|\F_t] \geq 1/2 \implies \P(X_t\geq 1/3|\F_t)\geq 1/4.
\end{equation}
Thus, we have: $\E[\prod_{s=t}^{t+T-1}(1-\mathbb{I}_{A_s}\sigma)|\F_t]\leq \frac{1}{4}(1-\sigma)^{T/3}+\frac{3}{4}.$ We then know how to control $T$ consecutive factors of the product $P_t$. Skipping the next $T$ terms, we have:
\begin{align}
    \E\left[\prod_{s=t}^{t+3T-1}(1-\mathbb{I}_{A_s}\sigma)\right] & =\E\left[\prod_{s=t}^{t+T-1}(1-\mathbb{I}_{A_s}\sigma)\prod_{s=t+T}^{t+2T-1}(1-\mathbb{I}_{A_s}\sigma)\prod_{s=t+2T}^{t+3T-1}(1-\mathbb{I}_{A_s}\sigma)\right]\\
    & \leq \E\left[\prod_{s=t}^{t+T-1}(1-\mathbb{I}_{A_s}\sigma)\prod_{s=t+2T}^{t+3T-1}(1-\mathbb{I}_{A_s}\sigma)\right]\\
    & \leq \E\left[\prod_{s=t}^{t+T-1}(1-\mathbb{I}_{A_s}\sigma)\E^{\F_{t+2T}}\left \{\prod_{s=t+2T}^{t+3T-1}(1-\mathbb{I}_{A_s}\sigma)\right\} \right]
\end{align}
as in the last right hand side, the first big product is $\F_{t+2T}$-measurable (our asumption on the $A_s$ states that they are  $\F_{s+T-1}$-measurable). Then, using inequality $\E\left[\prod_{s=t}^{t+T-1}(1-\mathbb{I}_{A_s}\sigma)|\F_t\right]\leq \frac{1}{4}(1-\sigma)^{T/3}+\frac{3}{4}$ twice, with $t$ and $t+2T$, we get:
\begin{align*}
\E\left[\prod_{s=t}^{t+3T-1}(1-\mathbb{I}_{A_s}\sigma)\right]&\leq \E\left[\prod_{s=t}^{t+T-1}(1-\mathbb{I}_{A_s}\sigma) \left(\frac{1}{4}(1-\sigma)^{T/3}+\frac{3}{4}\right)\right]\\
&\leq\left(\frac{1}{4}(1-\sigma)^{T/3}+\frac{3}{4}\right)^2.\end{align*} Proceeding the same way by induction leads us to: 
%\hadrien{Preuve pas totalement claire pour moi j'avoue. Je suis d'accord avec l'indépendence quand tu regardes les facteurs de manière isolée mais n'as-tu pas besoin de justifier un peu plus le fait que ça passe aussi avec l'induction? Dans ce cas ton dernier facteur est collé au premier du t suivant et (naivement peut etre) on se dit que ça pourrait poser problème.} 
\begin{equation}
    \E[P_t]\leq \left(\frac{1}{4}(1-\sigma)^{T/3}+\frac{3}{4}\right)^{\lfloor t/(2T) \rfloor},
\end{equation}
which is the desired bound. For the asymptotic one, $(1-\sigma)^{T/3}\leq e^{-\sigma T/3}$. For $\sigma T$ small enough (less than $\log(2)$), we have $e^{-\sigma T/3}\leq 1-\sigma T/3$, leading to $(\frac{1}{4}(1-\sigma)^{T/3}+\frac{3}{4})^{\lfloor t/(2T)}\leq (1-T\sigma/12)^{\lfloor t/(2T)}\leq e^{-(t+o(t))\sigma/24}.$ The asymptotic rate of convergence thus holds if the assumption made in Corollary 1 holds.

\subsection{Study in the \emph{RLNM($\eps$)}: Tuning the Parameters\label{app:LN_tuning}}

%\subsection{Proof with queuing theory arguments}
We first assume to be in the case $\eps=0$. We generalize to $\eps>0$ at the end.
Let $t\in \N$ be fixed, and $B_t$ be the event: "in the activations $t,t+1,...,t+T-1$, all edges are ativated". Let then $C_t(ij,s)$ for $t\leq s < t+T$ be the event $\min(T_{ij}(s),t+T-s,s-t)\leq a\ell_{ij}$ and $D_t(kl,ij,s)$ be the event $N(kl,ij,s)\leq \lceil b\ell_{ij}/\ell_{kl} \rceil$, where $N(kl,ij,s)$ is the number of activations of $kl$ between two activations of $ij$, around time $s$, where we only take into account the activations between times $t$ and $t+T-1$. Let then $A_t=B_t\cap (\cap_{kl,ij\in E, t\leq s < t+T}C_t(ij,s)\cap D_t(kl,ij,s))$. We want $\P(A_t)\geq 1/2$ for correct constants $a,b,T$ and $\ell_{ij}$ (that can differ from $\tau_{ij}$). Note that this event is $\F_{t+T-1}$-measurable, as desired. We first study the length of time $\ell_{ij}$ edge $ij$ must wait in order to be activated with high probability (\emph{high} meaning more that $1-\frac{1}{12|E|}$). This result is Lemma \ref{lem_queue_1}. Then, we use this length to determine the constants $T,a,b,\ell_{ij}$ needed.\\

\begin{lem}
For any $t_0\geq 0$, $ij\in E$, if $p_{ij}=\frac{1}{2\max(d_i,d_j)-1}\tau_{ij}^{-1}$ and $\tau_{max}(ij)=\max_{kl\sim ij}\tau_{kl}$, let $\ell_{ij}=\frac{\log(6|E|)}{\log(1-(1-e^{-1})e^{-1})}(p_{ij}^{-1}+\tau_{max}(ij))$. We have:
\begin{equation}
    \P(ij\text{ not activated in $[t_0,t_0+\ell_{ij}]$}|\F_{t_0})\leq \frac{1}{6|E|}.
\end{equation}
\end{lem}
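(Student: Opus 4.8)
The plan is to note first that this appendix lemma is precisely the $\eps=0$ specialization of Lemma \ref{lem_queue_1} with the choice $\delta=\tfrac{1}{6|E|}$: substituting these values into the general statement already proved there yields the bound, with $\ell_{ij}=\frac{\log(6|E|)}{\log((1-(1-e^{-1})e^{-1})^{-1})}(p_{ij}^{-1}+\tau_{\max}(ij))$. Nevertheless, to keep the tuning self-contained, I would reprove it directly, since the whole argument rests on one queueing estimate. First I would introduce the counting process $N_{ij}(t)$ recording how many neighboring edges $kl\sim ij$ are currently busy, with $N_{ij}(t_0)=1$ if $ij$ is itself unavailable at $t_0$ and $0$ otherwise. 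Each activation of a neighbor $kl$ injects a customer that remains for a service time $\tau_{kl}$, so $N_{ij}$ is an infinite-server ($M/G/\infty$) queue and $\{N_{ij}(t)=0\}$ implies that $ij$ is available. The classical stationary analysis then gives, for $t\ge t_0+\tau_{\max}(ij)$ (once the transient has cleared), that $N_{ij}(t)$ is Poisson with mean $\sum_{kl\sim ij}p_{kl}\tau_{kl}$.

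The crux is the combinatorial bound $\sum_{kl\sim ij}p_{kl}\tau_{kl}\le 1$. With the prescribed rates $p_{kl}=\frac{1}{2\max(d_k,d_l)-1}\tau_{kl}^{-1}$ one has $p_{kl}\tau_{kl}=\frac{1}{2\max(d_k,d_l)-1}$, and I would split the neighbors of $(ij)$ into those sharing node $i$ and those sharing node $j$. An edge $kl$ incident to $i$ satisfies $\max(d_k,d_l)\ge d_i$, and there are at most $d_i-1$ such edges, so their total contribution is at most $\frac{d_i-1}{2d_i-1}\le\frac12$; the same holds at $j$, and summing the two endpoints gives the bound $1$. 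Consequently $\P(N_{ij}(t)=0)=e^{-\sum_{kl\sim ij}p_{kl}\tau_{kl}}\ge e^{-1}$, i.e. $ij$ is available with probability at least $e^{-1}$ at any post-transient time.

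Next I would combine availability with the ringing of the edge's own Poisson clock $\mathcal{P}_{ij}$, which has rate $p_{ij}$ and is independent of the neighbor activations driving $N_{ij}$. Over a window of length $p_{ij}^{-1}$ the probability that the clock rings at least once is $1-e^{-1}$, and by the memorylessness of the exponential clock together with the availability estimate applied at the ring time, the probability that $(ij)$ is actually activated within a window of length $\tau_{\max}(ij)+p_{ij}^{-1}$ is at least $(1-e^{-1})e^{-1}$. Iterating over $k$ disjoint consecutive windows, the probability of never activating is at most $\big(1-(1-e^{-1})e^{-1}\big)^{k}$; choosing $k=\log(6|E|)/\log\!\big((1-(1-e^{-1})e^{-1})^{-1}\big)$ makes this at most $\tfrac{1}{6|E|}$, and setting $\ell_{ij}=k(p_{ij}^{-1}+\tau_{\max}(ij))$ gives exactly the claimed bound.

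I expect the main obstacle to be the queueing step: justifying rigorously that the busy-neighbor count reaches its Poisson stationary law after the transient of length $\tau_{\max}(ij)$, and cleanly decoupling "availability at the ring time" from "the clock rings" despite their entanglement through the common timeline. This is precisely where the independence of $\mathcal{P}_{ij}$ from the neighbor processes and the memorylessness of the exponential clocks must be invoked with care, and it is the part of the argument most in need of explicit conditioning.
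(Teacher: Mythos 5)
Your proof follows essentially the same route as the paper's: the infinite-server queue $N_{ij}$ counting busy neighboring edges, the stationary Poisson law with mean $\sum_{kl\sim ij}p_{kl}\tau_{kl}\le 1$ giving availability probability at least $e^{-1}$ after the transient of length $\tau_{\max}(ij)$, the factor $1-e^{-1}$ for the edge's own clock ringing in a window of length $p_{ij}^{-1}$, and the geometric iteration over $k$ consecutive windows. Your explicit degree-counting argument showing $\sum_{kl\sim ij}p_{kl}\tau_{kl}\le 1$ is a welcome detail that the paper leaves implicit, and writing the exponent as $\log(6|E|)/\log\bigl((1-(1-e^{-1})e^{-1})^{-1}\bigr)$ correctly fixes the sign of the logarithm in the paper's expression for $\ell_{ij}$.
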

\begin{proof}[Proof of Lemma \ref{lem_queue_1}]Let $ij\in E$ and $t_0\geq 0$ fixed. We use tools from queuing theory \citep{Tanner1995queuing} ($M/M/\infty/\infty$ queues) in order to compute the probability that edge $ij$ is activable at a time $t$ or not. More formally, we define a process $N_{ij}(t)$ with values in $\N$, such that $N_{ij}(t_0)=1$ if $ij$ non-available at time $t_0$ and $0$ otherwise. Then, when an edge $kl,kl\sim ij$ is activated, we make an increment of $1$ on $N_{ij}(t)$ (a \emph{customer} arrives). This customer stays for a time $\tau_{kl}$ and when he leaves we make $N_{ij}$ decrease by $1$. We have $N_{ij}\geq 0$ a.s., and if $N_{ij}=0$, $ij$ is available. For $t\geq \max_{kl\sim ij} \tau_{kl} +t_0$, $N_{ij}(t)$ follows a Poisson law of parameter $\sum_{kl\sim ij}p_{kl}\tau_{kl}$.  For any $t\geq \max_{kl\sim ij} \tau_{kl} +t_0$:
\begin{equation}
    \P(ij\text{ available at time }t|\F_{t_0})\geq \P(N_i(t)= 0)=\exp(-\sum_{kl\sim ij}p_{kl}\tau_{kl}).
\end{equation}
That leads to taking $p_{kl}=\frac{1}{2}\frac{1}{\max(d_k,d_l)-1}\tau_{kl}^{-1}$ for all edges, in order to have $\P(ij\text{ available at time }t|\F_{t_0})\geq 1/e$. Then, $\P(ij \text{ rings in } [t,t+p_{ij}^{-1}])=1-e^{-1}$, giving:
\begin{align}
    \P(ij &\text{ activated in }[t_0,t_0+\tau_{\max}(ij)+p_{ij}^{-1}]|\F_{t_0})= \P(ij \text{ rings in } [t,t+p_{ij}^{-1}])\\
    &\times \P(ij \text{ available at time }t|\F_{t_0},\text{$ij$ rings at a time}\\
    & t\in [t_0+\tau_{\max}(ij),t_0+\tau_{\max}(ij)+p_{ij}^{-1}])\\
    & \geq (1-e^{-1})e^{-1},
\end{align}
where we use the fact that exponential random variables have no memory. Take $k\in \N$ such that $(1-(1-e^{-1})e^{-1})^k\leq \frac{1}{6|E|}$, leading to $k\approx \log(6|E|)/\log(1-(1-e^{-1})e^{-1})$. Let $\ell_{ij}=k(p_{ij}^{-1}+\tau_{max}(ij))$. Then we have a.s.:
\begin{equation}
    \P(ij\text{ not activated in $[t_0,t_0+\ell_{ij}]$}|\F_{t_0})\leq \frac{1}{6|E|}.
\end{equation}
\end{proof}

\noindent \textbf{Bounding $T$:} A direct application of Lemma \ref{lem_queue_1} leads, with $L=\max_{ij}\ell_{ij}$, to:
\begin{equation}
    T=2\sum_{ij}\frac{L}{\tau_{ij}}.
\end{equation}
Indeed, for all $ij$, not being activated in activations $t,t+1,...,t+T-1$ means not being activated for a continuous interval of time of length more than $\ell_{ij}$. Hence:
\begin{align}
    &\P(\exists (ij)\in E: (ij)\text{ not activated in }\{t,...,t+T-1\}|\F_t)\\
    &\leq \sum_{ij\in E}\P((ij)\text{ not activated in }\{t,...,t+T-1\}|\F_t)\\
    &\leq \sum_{ij\in E}\P((ij)\text{ not activated in }[t,t+\ell_{ij}]|\F_t)\\
    &\leq |E|\times \frac{1}{6|E|}\\
    &=1/6. \label{eq:B_t}
\end{align}

\noindent \textbf{Bounding $T_{ij}$:} Applying Lemma \ref{lem_queue_1} with $12|E|T$ instead of $6|E|$ leads to controlling all the inactivation lengths by a length $\ell'_{ij}$, with a probability more than $1-1/(12|E|T)$. Let $ij\in E$ and $s\in \N$, $t\leq s < t+T$. Let $\alpha>0$ to tune later. Denote by $\delta_{ij}(s)$ the (random) inactivation time of $ij$, around iteration $s$. Note that conditionnaly on the inactivation period $\delta_{ij}(s)$, $T_{ij}(s)$ is dominated in law by a Poisson variable of parameter $I\delta_{ij}(s)$, hence line \eqref{eq:poissondelta}:
\begin{align}
    \P(T_{ij}(s)\geq \alpha \ell_{ij}'|\F_t)&\leq\P(T_{ij}(s)\geq \alpha \ell_{ij}'|\F_t, \delta_{ij}\leq \ell_{ij}')\times \P(\delta_{ij}\leq \ell_{ij}') + \P(\delta_{ij}\geq \ell_{ij}')\\
    &\leq \P(Poisson(I\ell'_{ij})\geq \alpha \ell_{ij}') + \frac{1}{12|E|T}\label{eq:poissondelta}\\
    &\leq \frac{1}{12|E|T}+\frac{1}{12|E|T}\label{eq:wantedpoisson}\\
    &=\frac{1}{6|E|T},
\end{align}
for some $\alpha>0$ big enough, to determine with the following large deviation inequality:
\begin{lem}[A Large Deviation Inequality on discrete Poisson variables.] \label{poisson}Let $Z\sim Poisson(\lambda)$, for some $\lambda>0$. Then, for all $u\geq 0$:
\begin{equation}
    \P(Z\geq u)\leq \exp(-u+\lambda(e-1)).
\end{equation}
\end{lem}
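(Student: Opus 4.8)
The plan is to prove this as a standard Chernoff-type bound, exploiting the explicit form of the moment generating function of a Poisson variable together with a convenient (non-optimized) choice of the free parameter. First I would recall that for $Z\sim \mathrm{Poisson}(\lambda)$ the moment generating function is available in closed form: for every $t\in\R$,
\begin{equation*}
    \E[e^{tZ}]=\sum_{k=0}^{\infty} e^{tk}\,\frac{e^{-\lambda}\lambda^k}{k!}=e^{-\lambda}\sum_{k=0}^{\infty}\frac{(\lambda e^t)^k}{k!}=\exp\left(\lambda(e^t-1)\right),
\end{equation*}
where the middle step just re-sums the exponential series. This is the only computation in the proof, and it is entirely routine.

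Next I would apply the exponential Markov inequality. For any $t>0$, since $x\mapsto e^{tx}$ is increasing, the event $\{Z\ge u\}$ coincides with $\{e^{tZ}\ge e^{tu}\}$, so Markov's inequality gives
\begin{equation*}
    \P(Z\ge u)=\P(e^{tZ}\ge e^{tu})\le e^{-tu}\,\E[e^{tZ}]=\exp\left(\lambda(e^t-1)-tu\right).
\end{equation*}
The bound holds for every $t>0$, and the usual approach would optimize over $t$ (yielding the tight rate at $t=\log(u/\lambda)$). Here, however, the statement only asks for the weaker but cleaner inequality with exponent $-u+\lambda(e-1)$, so I would simply specialize to $t=1$, which immediately produces $\P(Z\ge u)\le \exp\left(\lambda(e-1)-u\right)$, exactly as claimed.

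There is no real obstacle in this lemma; the ``hard part'' is purely a matter of recognizing that the crude choice $t=1$ suffices for the downstream application in \eqref{eq:wantedpoisson}, where one needs the deviation probability of a $\mathrm{Poisson}(I\ell'_{ij})$ variable above level $\alpha\ell'_{ij}$ to be smaller than $1/(12|E|T)$. With the bound in hand, that application reduces to choosing $\alpha$ large enough that $\lambda(e-1)-\alpha\ell'_{ij}\le \log\big(1/(12|E|T)\big)$, i.e. $\alpha$ of order $I+\log(12|E|T)/\ell'_{ij}$; I would defer that tuning to the surrounding argument rather than to the proof of the lemma itself.
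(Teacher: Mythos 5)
Your proof is correct, and it is surely the intended one: the paper states Lemma~\ref{poisson} without proof, and the specific form of the exponent $-u+\lambda(e-1)$ is exactly what the Chernoff/exponential-Markov bound produces when the free parameter is fixed at $t=1$ rather than optimized. Your computation of the moment generating function, the application of Markov's inequality to $e^{tZ}$, and the specialization $t=1$ are all valid, so there is nothing to add.
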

\noindent This large deviation leads to taking $\alpha=2eI$ for \eqref{eq:wantedpoisson} to be true. Finally, we get:
\begin{equation}
    \P(T_{ij}(s)\geq \alpha \ell_{ij}'|\F_t)\leq \frac{1}{6|E|T}. \label{eq:C_t}
\end{equation}

\noindent \textbf{Bounding $N(kl,ij,s)$:} If $\delta_{ij}(s)\leq \ell'_{ij}$, this random variable is dominated by a Poisson variable of parameter $p_{kl}\ell_{ij}'$. Hence, still with Lemma \ref{poisson}, with probability more than $1-\frac{1}{12|E|^2T}$, we can bound $N(kl,ij)$ by $e\log(12|E|^2T)+p_{kl}\ell_{ij}(e-1)\leq 2 e p_{kl}L_{ij}$.\\

\noindent \textbf{Explicit writing of the union bound on $A_t^C$:} $A_t^C=B_t^C\cup (\cup_{kl,ij\in E, t\leq s < t+T}C_t(ij,s)^C\cup D_t(kl,ij,s)^C)\in \F_{t+T-1}$. Thanks to the previous considerations, we have that $\P^{\F_t}(B_t^C) \leq 1/6$ with \eqref{eq:B_t}, $\P^{\F_t}(C_t(ij,s)^C)\leq \frac{1}{6|E|T}$ with \eqref{eq:C_t} and $\P(D_t(kl,ij,s)^C|\F_t)\leq \frac{1}{6|E|^2T}$, for the following constants and weights:
\begin{itemize}
    \item $\Tilde{\tau}_{ij}^{-1}=p_{ij}=\min(\frac{1}{\tau_{\max}(ij)},\frac{1}{2(\max(d_i,d_j)-1)}\frac{1}{\tau_{ij}})$;
    \item $T=2 I \max_{ij\in E}\Tilde{\tau_{ij}}\frac{\log(6|E|)}{\log(1-(1-e^{-1})e^{-1})}$;
    \item $a=2eI\frac{\log(6|E|T)}{\log(1-(1-e^{-1})e^{-1})}$;
    \item $b=2e\frac{\log(6|E|T)}{\log(1-(1-e^{-1})e^{-1})}$.
\end{itemize}
The union bound is the following:
\begin{align}
    \P^{\F_t}(A_t^C) & \leq  \P^{\F_t}(B_t^C) +\sum_{s,ij}\P^{\F_t}(C_t(ij,s)^C) +\sum_{s,ij}\P^{\F_t}(\cup_{kl}D_t(kl,ij,s)^C)\\
    & \leq 1/6 + |E|T/(6|E|T)\times 2\\
    & \leq 1/2.
\end{align}

\noindent The rate of convergence $\rho$ is then defined as the smallest non null eigenvalue of the laplacian of the graph, weighted by:
\begin{equation}
   \nu_{ij}= \frac{\sigma_{min}}{L_{max}}\times \frac{\Tilde{\tau_{ij}}\min_{kl\sim ij}\frac{\tau_{ij}}{\tau_{kl}}}{8a(1+d^2b)}.
\end{equation}
Note that this analysis works for $\eps=0$, but also for \textbf{RLNM($\eps>0$)} by replacing $\tau_{ij}$ by $(1+\eps)\tau_{ij}$. Indeed, Lemma \ref{lem_queue_1} still holds with $(1+\eps)\tau_{ij}$: the queuing construction still works.

\subsection{Proof of Corollary \ref{cor:LN} \label{app:cor1}}

\begin{proof}
First, notice that $\mathcal{E}_k\le \mathcal{L}_k$ since the sequence $(\mathcal{E}_l)_l$ is non-increasing. Then:
\begin{align*}
    \left(\frac{1}{4}(1-\frac{\sigma_{\min}}{L_{\max}}\Gamma_{RLNM})^{T/3}+\frac{3}{4}\right)^{\lceil \frac{k}{2T}\rceil}& \le \left(\frac{1}{4}\exp(-\frac{\sigma_{\min}}{L_{\max}}\Gamma_{RLNM}\frac{T}{3})+\frac{3}{4}\right)^{\lceil \frac{k}{2T}\rceil}\\
    &\le \left(1-\frac{\sigma_{\min}}{L_{\max}}\Gamma_{RLNM}\frac{T}{12e})\right)^{\lceil \frac{k}{2T}\rceil} \text{ if } \frac{\sigma_{\min}}{L_{\max}}\Gamma_{RLNM}\frac{T}{12}\le 1.
\end{align*}
That last condition is satisfied under Assumption \ref{hyp:delay_constraint} using monotonicity of the Laplacian. We thus have our result taking the logarithm and making $k\to \infty$.
\end{proof}

\end{document}